\tikzstyle{block} = [draw, rectangle, minimum height=2em, minimum width=4em] \tikzstyle{sum} = [draw, circle, node distance=1cm]
\tikzstyle{dot} = [draw, fill=black, circle, inner sep=0.05cm]
\tikzstyle{input} = [coordinate] \tikzstyle{output} = [coordinate]
\tikzstyle{pinstyle} = [pin edge={to-,thin,black}]
\newtheorem{theorem}{Theorem}[section]
\newtheorem{lemma}[theorem]{Lemma}
\newtheorem{proposition}[theorem]{Proposition}
\newtheorem{corollary}[theorem]{Corollary}
\newenvironment{proof}[1][Proof:]{\begin{trivlist}
\item[\hskip \labelsep {\bfseries #1}]}{\end{trivlist}}
\newcommand{\qed}{\hfill \nobreak \ifvmode \relax \else 
      \ifdim\lastskip<1.5em \hskip-\lastskip
      \hskip1.5em plus0em minus0.5em \fi \nobreak
      \vrule height0.75em width0.5em depth0.25em\fi}
\newcommand\numberthis{\addtocounter{equation}{1}\tag{\theequation}}
\pgfplotsset{compat=1.14}
\begin{document}

\title{\textbf{The Origin and the Resolution\\ of Nonuniqueness\\ in Linear Rational Expectations}} 

\author{John G. Thistle\thanks{Department of Electrical and Computer Engineering, 
            University of Waterloo, 
            Waterloo, Ontario, 
            Canada\ \ N2L 3G1\ 
           \texttt{jthistle@uwaterloo.ca}.  
This paper is a revised version of arXiv:1806.06657 [q-fin.EC], which is in turn a revised version of Technical Report \#~2018-01 (April, 2018) of the author's academic department.  
It is a pleasure to thank Andrew Heunis and Christopher Nielsen for some stimulating discussions; Christopher Nielsen also offered comments on an earlier draft of the manuscript.  }} 

\renewcommand{\today}{}

\maketitle 

\begin{abstract}
The nonuniqueness of rational expectations is explained:  in the 
stochastic, discrete-time, linear, constant-coefficients case, 
the associated free parameters are coefficients that determine the public's most immediate reactions to shocks.  
The requirement of model-consistency may leave these parameters 
completely free, yet when their values are appropriately specified, 
a unique solution is determined.  
In a broad class of models, the requirement of least-square forecast errors determines 
the parameter values, and therefore defines a unique solution.  
This approach is independent of dynamical stability, and generally does not suppress model dynamics.  

Application to a standard New Keynesian example shows that the traditional solution 
suppresses precisely those dynamics that arise from rational expectations.  
The uncovering of those dynamics reveals their incompatibility 
with the new I-S equation and the expectational Phillips curve.  

\begin{flushleft}
{\bf JEL Classification:}  E17, C53 \\
{\bf Keywords:} rational expectations; model-consistency; nonuniqueness; indeterminacy, New Keynesian models.  
\end{flushleft}
\end{abstract}

\section{Introduction}  

The rational expectations hypothesis plays a central role in macroeconomics, 
but it has long been recognized that it need not determine expectations uniquely, 
and the resulting indeterminacy has raised objections on philosophical, mathematical, and practical grounds.  
This paper traces nonuniqueness to its economic origins, 
in the dynamics of expectation formation.

It has in effect been widely supposed that the 
heart of the problem of nonuniqueness might lie in an infinite regression 
(if it is possible to speak of ``regression'' into the future):  
if the values of endogenous variables depend on contemporary forecasts of their future values, 
then those forecasts of future values depend on future forecasts, which in turn depend on 
still later forecasts; and so forth.  
This apparent regression has suggested a need for a terminal boundary condition, and raised the hope that 
such a condition might help in determining a unique solution.  
But this paper shows that the origin of indeterminacy lies not in the infinitely remote future, 
but in the immediate present:  
in the stochastic, linear, discrete-time, constant-coefficients case, 
the associated free parameters determine the immediate response of expectations to shocks.   
The requirement of rationality -- or model-consistency -- generally leaves those parameters 
unconstrained; yet when their values are appropriately specified, a unique solution is determined.  

In particular, if the requirement of model-consistency is strengthened with the additional requirement that 
forecasting errors be minimized in the least-squares sense, then, for a broad class of models, 
the free parameters are determined, and solutions are unique.  

\citet{RePEc:oup:oxford:v:34:y:2018:i:1-2:p:43-54.} has expressed the view that rational expectations are ``insufficiently inertial'':   
it is shown here that the traditional approach to rational expectations, in producing a unique solution to a standard  
New Keynesian model, 
suppresses exactly those eigenvalues that arise from rational expectations.  
In contrast, the methods of this paper do not presuppose any stability or instability properties, nor do they 
generally suppress any model dynamics.  
The uncovering of the dynamics of rational expectations in the New Keynesian model reveals incompatibility with the 
new I-S equation and the expectational Phillips curve.  

\subsection{Taylor's example}  
\label{Taylor}

The nature of the free parameters that underlie nonuniqueness is illustrated by an early example of \citet{RePEc:ecm:emetrp:v:45:y:1977:i:6:p:1377-85}.  
Consider the following equation, 
\begin{align*}
\hat{p}_{2,t-1} & = \hat{p}_{1,t-1} + \delta_1 p_t + u_t 
\end{align*} 
where $\/\delta_1\/$ is a nonzero real constant, $\/u_t\/$ is a sequence of independent, identically distributed, zero-mean random variables with finite variance, 
$\/p_t\/$ is an endogenous variable (specifically, proportional to the logarithm of the price of output)\footnote{This version of Taylor's equation has been linearized through a change of coordinates:  here, $\/p_t\/$ denotes $\/p_t - \delta_0/\delta_1\/$ in Taylor's coordinates.}, and, for $\/i = 1,2\/$, $\/\hat{p}_{i,t}\/$ denotes a forecast of $\/p_{t+i}\/$, formulated at time $\/t\/$ (that is, formulated in terms of $\/u_\tau\/$, 
for $\/\tau \leq t\/$).  

Seeking solutions of the form 
\begin{align*}
p_t = \sum_{i=0}^\infty \pi_i u_{t-i}\ , 
\end{align*} 
Taylor imposes rational expectations by setting 
\begin{align*}
\hat{p}_{1,t-1} & = \sum_{i=1}^\infty \pi_i u_{t-i}\ , \text{ and} \\
\hat{p}_{2,t-1} & = \sum_{i=2}^\infty \pi_i u_{t+1-i} \ .  
\end{align*}  
Substituting these expressions into the model equation, he finds:  
\begin{align*}
\pi_0 & = - \delta_1^{-1} \ , \\
\pi_{i+1} & = (1 + \delta_1) \pi_i \ ,\ \forall i \geq 1\ .  
\end{align*} 
The coefficient $\/\pi_0\/$ is determined,\footnote{The analysis of appendix~\ref{multexp} shows that $\/\pi_0\/$ is determined because the model equation contains no unlagged forecasts.} but $\/\pi_1\/$ is free, and its value determines all other coefficients.  
Taylor then turns to the imposition of a finite-variance condition (dynamical stability) 
and a minimum-variance condition, as means of 
limiting the possible values of $\/\pi_1\/$.  

But what is the economic significance of the quantity $\/\pi_1\/$?  
According to the parameterization of $\/\hat{p}_{1,t-1}\/$, 
\begin{align*}
\hat{p}_{1,t-1} & = \pi_1 u_{t-1} + \pi_2 u_{t-2} + \ldots\ ;  
\end{align*}
so the coefficient $\/\pi_1\/$ determines the effect of 
$\/u_{t-1}\/$ on this forecast.  
In other words, this coefficient determines the immediate effect of shocks; 
it models the integration of new information into the forecast.  
This is an undeniably important parameter of the ``expectations mechanism.'' 

Yet, $\/\pi_1\/$ is completely free under the assumption of rational expectations.  
As strong an assumption as rational expectations is, it does not determine, 
either in whole or in part, the integration of new information into the forecast.  
The results of this paper generalize this finding; they show that 
the free variables that account for the nonuniqueness of rational expectations
are precisely the coefficients that govern the immediate effects of shocks on forecasts.  
Consequently, nonuniqueness is resolved if and only if those immediate effects 
are modeled unambiguously -- by other means.  

The prevailing approach is to constrain such effects indirectly, principally by imposing 
a terminal condition, which requires dynamical stability.  
The use of the stability criterion is arbitrary, because it bears no particular relationship to the 
cause of nonuniqueness.  
It is also unrealistic, because it depends on 
infinite-precision cancellation of unstable dynamics.  
Taylor shows that, if $\/\delta_1\/$ is positive, then his model is stable  
if and only if $\/\pi_1\/$ is exactly zero.  
He derives the following recurrence:  
\begin{align*}
p_t - (1 + \delta_1) p_{t-1} & = - \delta_1^{-1} u_t + (\pi_1 + (1 + \delta_1) \delta_1^{-1} ) u_{t-1}\ .  
\end{align*} 
The moving averages in $\/p_t\/$ and $\/u_{t}\/$ are of the same form if $\/\pi_1 = 0\/$, leading to 
a so-called `pole-zero cancellation' that suppresses the unstable eigenvalue at $\/1 + \delta_1\/$.  
The cancellation is displayed more explicitly 
by bringing in a `left-shift' operator $\/z\/$, so that $\/z x_t\/$ stands for $\/x_{t+1}\/$, 
and $\/z^{-1} x_t\/$ for $\/x_{t-1}\/$ (so $\/z^{-1}\/$ is a `right-shift' or `lag' operator).    
Then, assuming that normal algebraic operations can be applied to the operator,\footnote{This operator notation is formalized with the use of the z-transform for the purposes of the rest of the paper.} the above equation can be rewritten as, 
\begin{align*}
p_t  & = - \delta_1^{-1} \frac{z - (\pi_1 \delta_1 + (1 + \delta_1))}{z - (1 + \delta_1)} u_t \ .  
\end{align*} 
Hence, if $\/\pi_1 = 0\/$, 
\begin{align*}
p_t & = - \delta_1^{-1} \frac{z - (1 + \delta_1)}{z - (1 + \delta_1)} u_t = - \delta_1^{-1} u_t\ .
\end{align*} 
The zero of the rational function in $\/z\/$ coincides with (and cancels) 
the unstable pole (the root of the denominator polynomial, which coincides with the characteristic polynomial) 
if and only if $\/\pi_1 = 0\/$.  
In that case, the resulting model has reduced-order dynamics -- and therefore, no dynamics at all.  

This method of ensuring a unique solution depends on the model's possessing an appropriately unstable eigenvalue.  
It also requires the immediate response of the public's aggregate forecast $\/\hat{p}_{1,t}\/$ to the input $\/u_t\/$ to equal zero, exactly.  

Another way of specifying the same unique value for $\/\pi_1\/$ in Taylor's example is to require in addition that the variance of the forecast errors be minimized.  
This paper shows that, in a broad class of models, that requirement determines a unique solution, regardless of stability properties, 
and generally without entailing pole-zero cancellation.  

\subsection{Background}  

The source of nonuniqueness of rational-expectations models lies in the reason for  
modeling expectations in the first place:   
forecasts have a bearing on the behavior of 
economic variables; 
moreover, they may affect the very quantities being forecast.  
The study of this self-referential phenomenon 
dates at least to \citet{Tinbergen:horizon}, who examined  
the effect of forecast horizons on the movement of commodity prices.  
The work of \citet{RePEc:ucp:jpolec:v:62:y:1954:p:465}, on the public prediction of social 
events, makes the circularity of the problem explicit, 
treating model-based expectations as fixed points.  
This property was later summarized 
by \citet{RePEc:eee:moneco:v:4:y:1978:i:1:p:1-44} as 
that of an ``expectations mechanism which `reproduces itself' in [the] model.''  
The fixed-point characterization immediately raises the possibilities of nonexistence and nonuniqueness.  

\citet{Muth:rational} applied similar ideas to market dynamics, hypothesising that there were no 
systematic discrepancies between the forecasts  of market participants and the predictions of economic theory.  
He called such 
forecasts {\em rational expectations\/}.\footnote{\citet{RePEc:ecj:econjl:v:101:y:1991:i:408:p:1245-53} compares Muth's contribution to Tinbergen's.}  
\citet{RePEc:eee:jetheo:v:4:y:1972:i:2:p:103-124,RePEc:eee:crcspp:v:1:y:1976:i::p:19-46} applied the rational expectations hypothesis 
to macroeconomic models, to show how changes in anticipated policy may give rise to 
changes in the behavior of economic agents.  
The advent of rational expectations is often described as a revolution in macroeconomics, 
and the approach has been developed by many economists, of different schools of thought.  
But as soon as it was applied to models with significant dynamics, and corresponding 
forecasts of the future outcomes of those dynamics, it was found that expectations were 
not defined uniquely \citep{RePEc:ecm:emetrp:v:45:y:1977:i:6:p:1377-85,RePEc:eee:moneco:v:4:y:1978:i:1:p:1-44};\footnote{\citet{RePEc:eee:jetheo:v:7:y:1974:i:1:p:53-65} focused on nonuniqueness, but under the relatively strong requirement that the initial conditions consistent with continual economic equilibrium, and a bounded rate of inflation, be unique.} 
the potential for nonuniqueness inherent in Grunberg and Modigliani's  
fixed-point characterization was realized.  

In the absence of an explanation of the source of this nonuniqueness, 
the predominant response has been to impose a terminal boundary condition, 
limiting the growth rates of trajectories, in the hope that it is satisfied by a unique solution.  
Indeed, the term ``solution'' has almost come to mean ``stable solution'' \citep{Funovits201747}.  
If the underlying model has appropriate unstable eigenstructure, its instabilities may  
be sufficient to narrow the space of stable solutions to a singleton.\footnote{See, for example, \citet{RePEc:ecm:emetrp:v:41:y:1973:i:6:p:1043-48,RePEc:eee:moneco:v:4:y:1978:i:1:p:1-44,MINFORD1979117,RePEc:ecm:emetrp:v:48:y:1980:i:5:p:1305-11,RePEc:cup:etheor:v:13:y:1997:i:06:p:877-888_00,RePEc:ier:iecrev:v:39:y:1998:i:4:p:1015-26,RePEc:kap:compec:v:20:y:2002:i:1-2:p:1-20,RePEc:aea:aecrev:v:94:y:2004:i:1:p:190-217}.}
This approach entails the cancellation of unstable dynamics in one fashion or another, 
which effectively requires the public to act, in aggregate, with infinite precision.   
The effect of the cancellation is to obliterate dynamical features of the model.   
The result of these practices has been that proponents and critics alike have confined their studies 
of rational expectations to pathologically special instances.   
Moreover, these ad hoc methods may still be insufficient to resolve nonuniqueness: for such cases, 
a variety of other ideas, such as minimum-variance solutions, or minimal 
state-variable realizations, have also been proposed \citep{RePEc:ecm:emetrp:v:45:y:1977:i:6:p:1377-85,BASAR1989591,RePEc:kap:itaxpf:v:6:y:1999:i:4:p:621-639,EH:learning}.  
However, like dynamical stability, such criteria fail to get to the heart of the matter.  

\subsection{Overview}  

The main point of the article is to explain nonuniqueness by identifying the associated free parameters, 
which turn out to be parameters of the expectations mechanism itself.  
For the sake of generality, the analysis proceeds exclusively from simple, minimal assumptions, 
that are satisfied by traditional approaches to rational expectations.  
Beyond a standard, technical assumption, it is supposed only that forecasts depend linearly on initial conditions and on the model's driving variables, 
their dependence on the driving variables being representable by means of a linear difference equation with 
constant coefficients.  
This mild assumption echoes assumption 3 of \citet{Muth:rational};  
it is satisfied by conventional rational-expectations solutions; and it is necessary for the preservation of the linear, constant-coefficient structure of the model equations.  
In this paper, it is formalized in the form of equations called {\em forecasting mechanisms\/}.   

The problem of deriving a general rational-expectations solution consists in solving 
for a forecasting mechanism subject to the constraint of model-consistency -- that is, subject to the constraint 
that the forecasting mechanism ``reproduce itself in the model.''   
That problem in turn reduces to the solution of some generally singular, 
deterministic matrix difference equations that describe the interrelationships of model 
parameters.  
Because these equations do not lend themselves to a simple recursion either forward or backward in time, 
it is useful to solve them in the frequency domain, 
by means of the z-transform (and without loss of generality).  
Uniqueness demands only the appropriate specification of a parameter of the forecasting mechanism that 
governs the immediate response of forecasts to shocks; 
it is that parameter that distinguishes one fixed-point forecasting mechanism from another.  
In other words, it is that parameter that distinguishes different rational-expectations solutions.  

The determination of the necessary form of any 
model-consistent forecasting mechanism (in sections~\ref{zsresp} and \ref{ziresp}) 
leads to a necessary and sufficient condition for existence, and to 
a characterization of the general solution in terms of the aforementioned parameter (section~\ref{totresp}).  
This result explains the nature of the nonuniqueness of rational-expectations models, 
with reference to the dynamics of expectation formation.  
Because of the minimal set of assumptions on which it is based, 
all of this analysis is general:  it encompasses any reasonable approach to rational expectations under 
linear, constant-coefficient models.  

This general picture can be tidied up through the assumption of a simple structural 
condition, called well-posedness.  
Well-posedness ensures existence of a model-consistent forecasting mechanism, and permits the realization 
of that forecasting mechanism in the relatively robust form of a predictor that incorporates feedback (section~\ref{wellposed}).  

Conventional stability criteria, as ad hoc means of consuming the undesired degrees of freedom in the general 
solution, are antithetical to the spirit of this paper, but it is nevertheless shown for purposes of comparison (section~\ref{example}) 
that the framework of this paper can reproduce well known observations on the ``determinacy'' of a small New Keynesian model.  
\citet{RePEc:oup:oxford:v:34:y:2018:i:1-2:p:43-54.} has expressed the view that rational expectations are insufficiently inertial:  
it turns out that the pole-zero cancellation required for stability suppresses precisely those dynamics of the New Keynesian model 
that arise from expectations (section~\ref{stabilitysubs}).  

For a broad class of models, nonuniqueness can be eliminated simply by taking the assumption of rational expectations -- 
or unbiased forecasts -- one step further, and assuming that forecast errors are not only zero-mean, but are minimized, in the least-squares sense 
(section~\ref{leastsquares}).  
The strengthened assumption is arguably milder than that of unstable pole-zero cancellations, 
and does not require any particular stability or instability properties of the model.  
Nor does it generally result in the cancellation of dynamics.  

Because the methods of the paper are independent of dynamical stability considerations, 
 they allow for the study of stabilization via policy.  
This point is illustrated in section~\ref{stabilization}, 
where it emerges that the standard formulation of the New Keynesian model is incompatible with 
the dynamics of rational expectations.  

A brief review is given in section~\ref{related} of the vast related literature, and the paper concludes 
with some general suggestions for research.   
Appendices show how key results of the paper extend to more general models,  
give details of some proofs omitted from the main body of the paper,  
and outline relevant mathematical background, for consultation as necessary.  

\section{Problem formulation}  
\label{statement}

The exposition is based on a simple, abstract model of \citet{CHO2015160}, 
though the principles generalize (see appendix~\ref{multexp}).  

For all $\/t \in \mathbb{Z}\/$, 
\begin{align}
\label{stateqn}
x_{t} & = A x_{t-1} + \hat{A} \hat x_{1,t} + B u_{t} \\ 
\label{inputeqn}
u_{t} & = R u_{t-1} + w_{t} \ .  
\end{align}
The matrices 
$\/A,\hat{A} \in \mathbb{R}^{n \times n}\/$ ($\/\hat{A} \neq 0\/$), $\/B \in \mathbb{R}^{n \times m}\/$ are constants.

The variables include the independent variable $\/t \in \mathbb{Z}\/$, representing discrete time instants, 
a vector of endogenous variables $\/x_t \in \mathbb{R}^{n \times 1}\/$, 
the vector $\/\hat{x}_{1,t} \in \mathbb{R}^{n \times 1}\/$ representing a one-time-step ``forecast'' of the value of $\/x_t\/$, 
and a vector of exogenous inputs $\/u_t \in \mathbb{R}^{m \times 1}\/$, 
driven by a sequence $\/w_t \in \mathbb{R}^{m \times 1}\/$ of real-valued, independent, zero-mean random variables, 
with finite variance,\footnote{Specifically, all entries of the covariance (or variance) matrix are finite.} 
defined on a common probability space.  
The forecast $\/\hat{x}_{1,t}\/$ may depend on the initial conditions $\/x_{-1}\/$, $\/\hat{x}_{1,-1}\/$ and $\/u_{-1}\/$, 
which are formally treated as constants, and on the sequence of random variables $\/u_0\/$, $\/u_1\/$,\ \ldots\ , $\/u_t\/$ 
-- or equivalently, the sequence $\/w_0, w_1,\ \ldots\ , w_t\/$.  

It will be assumed that the polynomial matrix 
 $\/[z^2 \hat{A} - z I + z]\/$ is {\em regular\/} 
-- meaning that its determinant does not vanish for all $\/z \in \mathbb{C}\/$, 
in which case the model too will be called regular.  
This is a common assumption, which serves in this instance to rule out a source of nonuniqueness that is unrelated to expectations.\footnote{The model is intended by Cho and McCallum to capture a local, linear approximation around the steady state of a dynamic, stochastic, general-equilibrium (DSGE) model.  For a concrete, numerical instance, see section~\ref{example}.  Under the conventional rational-expectations paradigm, the Cho-McCallum model has been used to capture equations with arbitrary finite numbers of expectation terms, having arbitrary expectational leads and lags \citep{broze_gouriŽroux_szafarz_1995,RePEc:cup:etheor:v:13:y:1997:i:06:p:877-888_00,RePEc:eee:dyncon:v:31:y:2007:i:4:p:1376-1391}; see Appendix~\ref{multexp} for a direct extension of the results of the paper to such general models.  }    

Of course, the random variables $\/\hat{x}_{1,t}\/$ and $\/x_t\/$ are undefined, 
in the absence of additional equations.  
For the sake of generality, it will be assumed only that the forecasts $\/\hat{x}_{1,t}\/$ depend linearly 
on the initial conditions and the driving variables $\/w_t\/$ -- and moreover, that their dependence on the driving 
variables can be represented by a linear, constant-coefficient, stochastic difference equation.  
This is in essence a version of assumption 3 of \citet{Muth:rational}, and it is satisfied by other approaches to rational expectations.  
The theory of linear, constant-coefficient difference equations then tells us that 
expectations must then obey an equation of the following form:  
\begin{align*}
\hat{x}_{1,t} = \sum_{\tau = 0}^t \tilde{F}_{t - \tau} w_\tau + \overline{x}_{t+1}\ , 
\numberthis \label{tildeforecasts} 
\end{align*} 
where $\/\tilde{F}_t \in \mathbbm{R}^{n \times m}\/$ vanishes for negative $\/t\/$, 
and $\/\overline{x}_{t+1}\/$ depends linearly on the initial conditions, but does not depend on the $\/w_t\/$.  
Such an equation will be called a {\em forecasting mechanism\/}.  
The convolution kernel determines the 
matrix of impulse responses of the forecasting mechanism, and the limits of the convolution sum ensure that forecasts are based on the appropriate 
information set.  

The variables $\/w_t\/$ are in essence merely a convenient means of defining the 
stochastic structure of the $\/u_t\/$,  
so it is also of interest to consider forecasting mechanisms driven by the 
economic, exogenous variables $\/u_t\/$:  
\begin{align*}
\hat{x}_{1,t} = \sum_{\tau = 0}^t F_{t - \tau} \tilde{u}_\tau + \overline{x}_{t+1}\ ,\ \forall t \geq 0, 
\numberthis \label{forecasts} 
\end{align*} 
For every $\/t \geq 0\/$,  $\/F_t \in \mathbbm{R}^{n \times m}\/$.  
The sequence $\/\tilde{u}_t := \sum_{\tau = 0}^t R^{t-\tau} w_\tau = u_t - R^{t+1} u_{-1}\/$ denotes the component of the sequence of endogenous variables 
$\/u_t\/$ that depends only on the $\/w_t\/$ and not on the initial conditions.  

All of the analysis carried out in the next section follows from the above mild assumption on the form of forecasts, 
together with the aforementioned regularity property.  
It therefore represents a general analysis of rational expectations, 
within the context of linear, constant-coefficient, stochastic difference equations.  

Of specific interest are forecasting mechanisms that are unbiased.  
Let $\/y\/$ be any square-integrable random variable defined on the common probability space of the $\/w_t\/$.  
For $\/t \geq -1\/$, $\/E_t(y)\/$ denotes the expected value of $\/y\/$, conditioned on 
the driving variables $\/w_0\/$, $\/w_1\/$,\ \ldots\ , $\/w_t\/$ 
-- and subject to the full model, including the forecasting mechanism, and its initial conditions.  
Given consistent initial conditions, a forecasting mechanism~(\ref{tildeforecasts}) (respectively,~(\ref{forecasts})) is {\em model-consistent\/} if 
the full model~(\ref{stateqn}--\ref{tildeforecasts}) (respectively,~(\ref{stateqn},\ref{inputeqn},\ref{forecasts})) satisfies 
\begin{align*}
E_t(x_{t+1} - \hat{x}_{1,t}) = 0 \ \mbox{, or equivalently,}\ \hat{x}_{1,t} = E_t(x_{t+1})\ , \forall t \geq -1\ .  
\end{align*} 
Such a forecasting mechanism embodies the ``strong'' rational-expectations hypothesis, whereby economic agents behave, in aggregate, 
as if they have access to all relevant information about the economy, and, on that basis, form expectations that do not 
incorporate any systematic errors.  

It should be emphasized that the appropriate conditional expectations are subject to the forecasting mechanism, 
even if that forecasting mechanism is initially unknown to the modeler or analyst.  
This is a crucial point, which the author considers to be logically implied by, 
and entirely in the spirit of, the strong rational-expectations hypothesis.  
In economic terms, it represents the assumption that economic actors (in their aggregate) 
behave as if they know not only how the economy responds to shocks and to actors' expectations, 
but also how actors' own expectations are formed, and 
revised in response to shocks.  

The main results of the paper include a general existence-and-uniqueness result for 
model-consistent forecasting mechanisms (Theorem~\ref{exun}, page~\pageref{exun}), 
and identification of a simple structural condition that ensures existence, as well as realizability in the form of  
combined feedforward/feedback implementations (Theorem~\ref{feedbackthm}, page~\pageref{feedbackthm}).  
Finally, it is shown that uniqueness can be ensured, for a broad class of models, by 
requiring not only that forecasting errors be zero-mean, but that they be minimized, in the least-squares sense 
(Corollary~\ref{leastsquarederror}, page~\pageref{leastsquarederror}).

\section{The general model-consistent solution} 

To derive a general representation of model-consistent forecasting mechanisms, we 
consider separately the case where the initial conditions are zero-valued, and that where 
the driving variables are zero-valued; by linearity, we then superimpose the resulting solutions.  

It is in the first case that the reason for nonuniqueness becomes apparent.  

\subsection{Zero-state response} 
\label{zsresp}

Suppose 
that the initial conditions $\/x_{-1},\/$ $\/\hat{x}_{1,-1}\/$, and $\/u_{-1}\/$ are zero.  
In the terminology of linear, time-invariant systems, the resultant response 
is called the {\em zero-state response\/}.  

By definition, the zero-state response of the forecasting mechanism must have the form of a convolution.  
It is convenient in the first instance to find a description in terms of the driving 
variables $\/w_t\/$:  
\begin{align}
\hat{x}_{1,t} = \sum_{\tau = 0}^t \tilde{F}_{t - \tau} w_\tau \ ,\ \forall t \geq 0\ .
\numberthis \label{fconv}
\end{align} 
Because the model~(\ref{stateqn},\ref{inputeqn}) is then effectively a system of linear, constant-coefficient equations, 
the sequence of endogenous-variable vectors must have the form of a similar convolution:  
\begin{align*}
x_t = \sum_{\tau = 0}^t \tilde{G}_{t-\tau} w_\tau \  , \ \forall t \geq 0\ .  
\numberthis \label{gconv} 
\end{align*}
Naturally, $\/\tilde{G}_t\/$ will depend on $\/\tilde{F}_t\/$, and vice versa.  
The model and the model-consistency condition will furnish two equations 
describing their relationship.  

Substitute the convolution sums~(\ref{fconv},\ref{gconv}) into the equation~(\ref{stateqn}), and use~(\ref{inputeqn}) to eliminate $\/u_t\/$:  
\begin{align*}
\sum_{\tau = 0}^{t} \tilde{G}_{t-\tau} w_\tau = A \sum_{\tau = 0}^{t-1} \tilde{G}_{t - 1 - \tau} w_\tau 
                                                                     + \hat{A} \sum_{\tau = 0}^{t} \tilde{F}_{t-\tau} w_\tau 
                                                                     + B \sum_{\tau = 0}^{t} R^{t-\tau} w_\tau \ \forall t \in \mathbb{Z}\ . 
\numberthis \label{subconv} 
\end{align*}
Applying the conditional expectation operator $\/E_0\/$,  
\begin{align*}
\tilde{G}_{t} w_0 = A \tilde{G}_{t-1} w_0 + \hat{A} \tilde{F}_{t} w_0 + B R^{t} w_0\ ,\ \forall t \geq 0 \ .  
\end{align*} 
But here, $\/w_0 \in \mathbb{R}^{m \times 1}\/$ is arbitrary, so it must be that 
\begin{align*}
\tilde{G}_{t} = A \tilde{G}_{t-1}+ \hat{A} \tilde{F}_{t}  + B R^{t}\ ,\ \forall t \geq 0 \ .  
\numberthis \label{gfrecur}
\end{align*}

For a second equation relating the two impulse responses, bring in the model-consistency condition for $\/t \geq 0\/$:  
\begin{align*}
 \hat{x}_{1,t} = E_t(x_{t+1}) &
\iff \sum_{\tau = 0}^t \tilde{F}_{t - \tau} w_\tau = E_t(\sum_{\tau = 0}^{t+1} \tilde{G}_{t+1-\tau} w_\tau)\ , \\ 
& \iff \sum_{\tau = 0}^t \tilde{F}_{t - \tau} w_\tau = \sum_{\tau = 0}^t \tilde{G}_{t+1-\tau} w_\tau\ .   
\end{align*} 
Once again taking expected values conditioned on $\/w_0\/$, and factoring out $\/w_0\/$ 
on the grounds that the resulting equation must hold for arbitrary $\/w_0\/$, 
\begin{align*}
\tilde{F}_t = \tilde{G}_{t+1} \ ,\ \forall t \geq 0\  .  
\numberthis \label{onestep}
\end{align*}

The calculation of the zero-state response amounts to solving the system~(\ref{gfrecur},\ref{onestep}), and 
thus effectively reduces to the solution of the equation obtained by substituting $\/\tilde{G}_{t+1}\/$ for $\/\tilde{F}_t\/$ in~(\ref{gfrecur}):  
\begin{align*}
\tilde{G}_t = A \tilde{G}_{t-1} + \hat{A} \tilde{G}_{t+1} + B R^t,\ t\geq 0\ .  
\numberthis \label{grecur} 
\end{align*} 
For this, it is assumed that the matrix polynomial $\/[z^2 \hat{A} - z I + A]\/$ 
is {\em regular\/} -- that its determinant is not identically zero.  
This is a standard assumption, for which there is ample justification.  
See, for example, \citep{RePEc:ier:iecrev:v:39:y:1998:i:4:p:1015-26,RePEc:eee:ecolet:v:61:y:1998:i:2:p:143-147}.  
In particular, it is necessary for the uniqueness of solutions, 
irrespective of expectations.  
Under such a regularity assumption, any solution that exists is unique.  
Moreover, it is of exponential order, 
and therefore possesses a unilateral z-transform:  
consider that~(\ref{grecur}) can be rewritten as a first-order matrix difference equation, 
via {\em linearization\/} of the above matrix polynomial~\citep{Lanc:lin}; 
and see, for example, the results of \citet{Bruell2009}.   

Applying the unilateral z-transform, 
and its left- and right-shift properties, 
to (\ref{gfrecur},\ref{onestep}), 
and keeping in mind that $\/\tilde{G}_t\/$ must vanish for negative $\/t\/$, 
one finds 
\begin{align*} 
\numberthis \label{gz}
\tilde{G}[z] & = A z^{-1} \tilde{G}[z] + \hat{A} \tilde{F}[z] + B [I - R z^{-1}]^{-1}\ ,\\
\tilde{F}[z] & = z[\tilde{G}[z] - \tilde{G}_0] \ .  
\numberthis \label{fz} 
\end{align*} 
The appearance of $\/\tilde{G}_0\/$ in the equations reflects the need for an additional boundary 
condition; nonuniqueness results  when, in effect, only the condition 
$\/\tilde{G}_{-1} = 0\/$ is applied.  
But, given the form of the model~(\ref{stateqn},\ref{inputeqn}), such nonuniqueness can only arise from that of 
model-consistent forecasting mechanisms.  
The additional boundary condition should therefore be expressed in terms of $\/\tilde{F}_t\/$:  
setting $\/t = 0\/$ in~(\ref{grecur}) and~(\ref{onestep}), 
\begin{align*}
\tilde{G}_0 = \hat{A} \tilde{G}_1 + B = \hat{A} \tilde{F}_0 + B\  .   
\numberthis \label{freeparam}  
\end{align*}

Substituting $\/\hat{A}\tilde{F}_0 + B\/$ for $\/\tilde{G}_0\/$ then, 
solve for the transforms $\/\tilde{F}[z]\/$ and $\/\tilde{G}[z]\/$ in terms of $\/\tilde{F}_0\/$:    
\begin{align*}
\tilde{F}[z] 
                 & = [z^2 \hat{A} - z I + A]^{-1} \left [[zI - A](\hat{A}\tilde{F}_0 + B)[zI - R] - z^2 B \right ] [I - R z^{-1}]^{-1}\ , \\
\tilde{G}[z] 
                  & = [z^2 \hat{A} - z I + A]^{-1} \left [ z \hat{A} (\hat{A}\tilde{F}_0 + B) [zI - R] - z B \right ] [I - R z^{-1}]^{-1}\ .  
\end{align*}
This establishes that, if suitable solutions of~(\ref{gfrecur},\ref{onestep}) exist, 
they must have transforms of the above forms, and therefore 
must be unique.  
But the form of $\/\tilde{F}[z]\/$ also determines existence:  
 
\begin{proposition}
\label{tildefgsoln}

Suppose that $\/[z^2 \hat{A} - zI + A]\/$ is regular.   
Then for any given value of the product 
$\/\hat{A} \tilde{F}_0 \in \mathbbm{R}^{n \times m}\/$, 
there exists a solution 
of the system~(\ref{gfrecur},\ref{onestep}), such that $\/\tilde{F}_t\/$ and $\/\tilde{G}_t\/$ 
vanish for negative $\/t\/$, and $\/\hat{A}\tilde{F}_0\/$ has the specified value, 
if and only if the rational matrix 
\begin{align*}
\tilde{F}[z] = [z^2 \hat{A} - z I + A]^{-1} \left [[zI - A](\hat{A}\tilde{F}_0 + B)[zI - R] - z^2 B \right ] [I - R z^{-1}]^{-1} 
\end{align*} 
is proper.  

In that case, the inverse z-transforms 
\begin{align*}
\tilde{F}_t & :={\cal Z}^{-1}\left \{ [z^2 \hat{A} {-} z I {+} A]^{-1} \left [[zI {-} A](\hat{A}\tilde{F}_0 + B)[zI {-} R] - z^2 B \right ] [I {-} R z^{-1}]^{-1} \right \}\\  
\tilde{G}_t & := {\cal Z}^{-1}\left \{ [z^2 \hat{A} {-} z I {+} A]^{-1} \left [ z \hat{A} (\hat{A}\tilde{F}_0 + B) [zI {-} R] - z B \right ] [I {-} R z^{-1}]^{-1} \right \} 
\end{align*} 
comprise the unique such solution.  
\end{proposition}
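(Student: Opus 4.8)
Write $\Phi := \hat A\tilde F_0$ for the prescribed matrix, and observe that both displayed rational matrices depend on $\tilde F_0$ only through $\Phi$; they are well defined because, by regularity, $[z^2\hat A - zI + A]$ is invertible over the field of rational matrices. The plan is to establish the two implications separately and then read off uniqueness, exploiting the fact already extracted in the paragraph preceding the proposition: that~(\ref{gz}) and~(\ref{fz}), together with the boundary datum $\tilde G_0 = \Phi + B$ supplied by~(\ref{freeparam}), admit exactly one rational-matrix solution, namely the displayed pair $\tilde F[z],\tilde G[z]$. One preliminary remark I would record first: with $\tilde G_0 = \Phi + B$,~(\ref{fz}) rearranges to $\tilde G[z] = (\Phi + B) + z^{-1}\tilde F[z]$, so that whenever $\tilde F[z]$ is proper, $\tilde G[z]$ is proper too and its value at $z = \infty$ is $\Phi + B$.

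For sufficiency, I would assume the displayed $\tilde F[z]$ is proper; then $\tilde G[z]$ is proper as well, and both admit inverse $z$-transforms $\tilde F_t,\tilde G_t$ --- sequences of exponential order that vanish for negative $t$. Inverting~(\ref{gz}) and~(\ref{fz}) term by term does the rest: the right-shift rule $\mathcal{Z}^{-1}\{z^{-1}\tilde G[z]\}_t = \tilde G_{t-1}$ (legitimate precisely because $\tilde G_t = 0$ for $t < 0$) together with $\mathcal{Z}^{-1}\{[I - Rz^{-1}]^{-1}\}_t = R^t$ for $t \geq 0$ returns~(\ref{gfrecur}), while the left-shift rule $\mathcal{Z}^{-1}\{z\tilde G[z] - z\tilde G_0\}_t = \tilde G_{t+1}$ --- with $\tilde G_0 = \tilde G[z]|_{z=\infty} = \Phi + B$ --- returns~(\ref{onestep}). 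Finally~(\ref{gfrecur}) at $t = 0$ reads $\tilde G_0 = \hat A\tilde F_0 + B$, and since $\tilde G_0 = \Phi + B$ this yields $\hat A\tilde F_0 = \Phi$, the prescribed value; hence $(\tilde F_t,\tilde G_t)$ is a solution of the required kind.

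For necessity, I would suppose a solution of~(\ref{gfrecur},\ref{onestep}) exists, with $\tilde F_t,\tilde G_t$ vanishing for negative $t$ and $\hat A\tilde F_0 = \Phi$. Eliminating $\tilde F_t$ via~(\ref{onestep}) shows that $\tilde G_t$ solves~(\ref{grecur}); by the regularity-based argument recalled in the text --- linearize the regular pencil $[z^2\hat A - zI + A]$ to a first-order matrix difference equation and invoke the cited theory --- $\tilde G_t$, and hence $\tilde F_t = \tilde G_{t+1}$, is of exponential order, so each possesses a unilateral $z$-transform, which is a proper rational matrix. These transforms satisfy~(\ref{gz},\ref{fz}), and~(\ref{gfrecur}) at $t = 0$ with $\tilde G_{-1} = 0$ forces $\tilde G_0 = \hat A\tilde F_0 + B = \Phi + B$; so the pair solves the linear system mentioned above, whose unique solution is the displayed pair. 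Consequently the displayed $\tilde F[z]$ equals a proper rational matrix, i.e.\ it is proper. Uniqueness in the proper case then follows at once: any two admissible solutions with the same $\hat A\tilde F_0$ have $z$-transforms solving the same system and hence coincide, so the displayed inverse $z$-transforms furnish the one and only such solution.

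The step I expect to be the main obstacle is the exponential-order claim inside the necessity argument: that an \emph{arbitrary} admissible solution --- not merely one manufactured from a proper transform --- possesses a $z$-transform at all, so that the frequency-domain manipulations of the text apply to it. My intention is to lean on the assertion and citations already present in the text rather than reprove this; the remaining ingredients (the shift-rule bookkeeping, evaluation at $z = \infty$, and the check that reversing the text's algebra invokes only inversions of the regular pencil and of $[I - Rz^{-1}]$, whose determinant $z^{-n}\det[zI - R]$ is not the zero rational function) are routine.
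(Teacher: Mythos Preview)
Your proposal is correct and follows essentially the same route as the paper: necessity via the exponential-order/z-transform argument already set up in the text, sufficiency via the identity $\tilde G[z] = (\hat A\tilde F_0 + B) + z^{-1}\tilde F[z]$ to propagate properness, followed by inversion of~(\ref{gz},\ref{fz}) and a check of the initial value. The only cosmetic difference is that you verify $\hat A\tilde F_0 = \Phi$ directly from~(\ref{gfrecur}) at $t=0$, whereas the paper reaches it via~(\ref{grecur}) at $t=0$ together with~(\ref{onestep}); your path is, if anything, a touch more direct.
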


\begin{proof}
Suppose that an appropriate solution of the system exists.  
Then $\/\tilde{F}_t\/$ and $\/\tilde{G}_t\/$ vanish for negative $\/t\/$.  
Their respective z-transforms are then proper rational matrices; 
and, by the preceding discussion, $\/\tilde{F}[z]\/$ and $\/\tilde{G}[z]\/$ have the form given in the statement of the proposition, 
where $\/\tilde{F}_0\/$ is the initial value of $\/\tilde{F}_t\/$.  
This establishes uniqueness (by z-transform inversion), and the necessary condition for existence.  

Conversely, for any specified value of $\/\hat{A} \tilde{F}_0\/$, 
if the given matrix $\/\tilde{F}[z]\/$ is proper, then so is  
\begin{align*}
\tilde{G}[z] & = [z^2 \hat{A} - z I + A]^{-1} [z^2 \hat{A} (\hat{A} \tilde{F}_0 + B) - z B [I - R z^{-1}]^{-1}] \\
       & = (\hat{A} \tilde{F}_0 + B) + [z^2 \hat{A} - z I + A]^{-1}[[zI {-} A] (\hat{A} \tilde{F}_0 + B) - z B [I {-} R z^{-1}]^{-1}] \\
       & = (\hat{A}\tilde{F}_0 + B) + z^{-1} \tilde{F}[z]\ .  
\end{align*} 
Both $\/\tilde{F}[z]\/$ and $\/\tilde{G}[z]\/$ are therefore unilateral z-transforms.  
Because $\/z^{-1} \tilde{F}[z]\/$ is then strictly proper, the initial value of the inverse transform 
of $\/\tilde{G}[z]\/$ must equal 
$\/\hat{A} \tilde{F}_0 + B\/$.  
It follows that~(\ref{gz}) and~(\ref{fz}) are satisfied.  
Transforming back to the time domain then shows that~(\ref{gfrecur}) and~(\ref{onestep}), and consequently~(\ref{grecur}), are satisfied.  

Applying~(\ref{grecur}) at $\/t=0\/$ gives $\/\tilde{G}_0 = \hat{A} \tilde{G}_{1} + B\/$, 
so $\/\hat{A} \tilde{G_1}\/$ equals the specified value of $\/\hat{A} \tilde{F}_0\/$.  
But by~(\ref{onestep}), $\/\tilde{F}_t = \tilde{G}_{t+1}\/$ for all $\/t \geq 0\/$.  
So the product $\/\hat{A}\tilde{F}_t\/$ indeed has the specified value at $\/t=0\/$.  

This proves the sufficient condition for existence.  \qed
\end{proof}

The above result leads to a necessary condition on the form of 
model-consistent forecasting mechanisms~(\ref{tildeforecasts}):  

\begin{corollary}
\label{tildezscor}
Let the model~(\ref{stateqn},\ref{inputeqn}) be regular.  
For any given value of the product $\/\hat{A} \tilde{F}_0 \in \mathbbm{R}^{n \times m}\/$, 
any model-consistent forecast mechanism~(\ref{tildeforecasts}) 
must satisfy 
\begin{align*}
\tilde{F}_t & ={\cal Z}^{-1}\left \{[z^2 \hat{A} {-} z I {+} A]^{-1} \left [[zI {-} A](\hat{A}\tilde{F}_0 + B)[zI {-} R] - z^2 B \right ] [I {-} R z^{-1}]^{-1} \right \}\ . 
\end{align*} 
with $\/\tilde{F}[z]\/$ proper.  
 
If $\/x_{-1}\/$, $\/\hat{x}_{1,-1}\/$, and $\/u_{-1}\/$ are zero, 
and $\/\hat{x}_{1,t} = \sum_{\tau = 0}^t \tilde{F}_{t - \tau} w_\tau\/$, 
then the model (\ref{stateqn},\ref{inputeqn}) satisfies 
\begin{align*}
x_t = \sum_{\tau = 0}^t \tilde{G}_{t - \tau} w_\tau\ , 
\end{align*} 
where 
\begin{align*} 
\tilde{G}_t & := {\cal Z}^{-1}\left \{ [z^2 \hat{A} - z I + A]^{-1} \left [ z \hat{A} (\hat{A}\tilde{F}_0 + B) [zI - R] - z B \right ] [I - R z^{-1}]^{-1} \right \} \ .
\end{align*} 
The forecast error realized at time $\/t \geq 0\/$ is $\/x_t - \hat{x}_{1,t-1} = (\hat{A} \tilde{F}_0 + B) w_{t}\/$.  
\end{corollary}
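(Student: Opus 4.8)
The plan is to reduce the corollary to Proposition~\ref{tildefgsoln} together with the recursions~(\ref{gfrecur}), (\ref{onestep}) and the boundary relation~(\ref{freeparam}) already established in this subsection. First I would invoke the zero-state reduction: the convolution kernel $\tilde{F}_t$ of a forecasting mechanism~(\ref{tildeforecasts}) is the same whatever the initial conditions, and the derivations of~(\ref{gfrecur}), (\ref{onestep}) and~(\ref{freeparam}) were carried out precisely under zero initial conditions, so it suffices to work there. With the initial conditions set to zero the term $\overline{x}_{t+1}$ drops out and the forecasts reduce to the convolution~(\ref{fconv}); substituting this into~(\ref{stateqn},\ref{inputeqn}) produces a linear, constant-coefficient system driven by the $w_t$ from zero initial conditions, whose endogenous sequence is therefore a convolution $x_t=\sum_{\tau=0}^{t}\tilde{G}_{t-\tau}w_\tau$; and, as already computed above, the pair $(\tilde{F}_t,\tilde{G}_t)$ — each vanishing for negative $t$ — satisfies~(\ref{gfrecur}) from the model and~(\ref{onestep}) from model-consistency.

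Next I would apply the necessity half of Proposition~\ref{tildefgsoln}: the pair $(\tilde{F}_t,\tilde{G}_t)$ just obtained is an admissible solution of~(\ref{gfrecur}), (\ref{onestep}) for the value of $\hat{A}\tilde{F}_0$ that the given mechanism realizes, so the proposition forces $\tilde{F}[z]$ to be proper and equal to the displayed rational matrix, forces $\tilde{G}[z]$ to have the companion form, and hence — by z-transform inversion — yields the asserted formulas for $\tilde{F}_t$ and $\tilde{G}_t$. For the converse assertion, that feeding this particular $\tilde{F}_t$ back into the model with zero initial conditions does produce $x_t=\sum_{\tau=0}^{t}\tilde{G}_{t-\tau}w_\tau$, I would note that the model alone pins down the endogenous kernel through~(\ref{gfrecur}) with $\tilde{G}_{-1}=0$, a forward recursion with a unique solution; since the $\tilde{G}_t$ supplied by Proposition~\ref{tildefgsoln} also satisfies~(\ref{gfrecur}) and vanishes for negative $t$, the two sequences coincide.

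Finally, for the realized forecast error at time $t\geq 0$, I would write $\hat{x}_{1,t-1}=\sum_{\tau=0}^{t-1}\tilde{F}_{(t-1)-\tau}w_\tau$ and use~(\ref{onestep}) in the form $\tilde{F}_s=\tilde{G}_{s+1}$, $s\geq 0$, to re-index the sum as $\sum_{\tau=0}^{t-1}\tilde{G}_{t-\tau}w_\tau$; subtracting it from $x_t=\sum_{\tau=0}^{t}\tilde{G}_{t-\tau}w_\tau$ leaves only the term $\tilde{G}_0 w_t$, and~(\ref{freeparam}) gives $\tilde{G}_0=\hat{A}\tilde{F}_0+B$, so the realized error is $(\hat{A}\tilde{F}_0+B)w_t$. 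I do not anticipate a genuine obstacle — the substantive analysis is contained in Proposition~\ref{tildefgsoln} — but the step that most wants care is the clean separation of initial-condition effects from the driving-variable convolution, so that the corollary faithfully inherits the existence/uniqueness dichotomy of the proposition and exhibits $\hat{A}\tilde{F}_0$ as the lone free parameter; everything else is bookkeeping with index shifts in the convolution sums.
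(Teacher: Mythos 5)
Your proposal is correct and follows essentially the same route as the paper: necessity of the forms of $\tilde{F}_t$ and $\tilde{G}_t$ via Proposition~\ref{tildefgsoln}, verification that the model equations are satisfied through~(\ref{gfrecur}) under zero initial conditions, and the forecast-error computation via~(\ref{onestep}) and $\tilde{G}_0=\hat{A}\tilde{F}_0+B$. Your explicit zero-state reduction and the forward-recursion uniqueness argument for $\tilde{G}_t$ are only slightly more spelled out than the paper's appeal to linearity and equation~(\ref{subconv}); the substance is identical.
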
 

\begin{proof} 
The necessary conditions on the form of model-consistent forecast mechanisms, and on the solutions of the resulting models, are 
direct consequences of Proposition~\ref{tildefgsoln}, by the foregoing discussion.  

For the satisfaction of the model equations, we have, again by Proposition~\ref{tildefgsoln}, that equation~(\ref{gfrecur}) is satisfied.  
By linearity, and the fact that $\/\tilde{G}_t\/$ vanishes for negative $\/t\/$, 
so then is~(\ref{subconv}).  
It follows that the $\/\hat{x}_{1,t}\/$ and the $\/x_t\/$ in the statement of the corollary solve the model equations~(\ref{stateqn},\ref{inputeqn}).  

For the forecast dated at $\/t=-1\/$, then, 
\begin{align*}
x_0 - \hat{x}_{1,-1} = x_0 = \tilde{G}_0 w_0 = (\hat{A} \tilde{F}_0 + B) w_0\ ; 
\end{align*} 
and by equation~(\ref{onestep}), for all $\/t \geq 0\/$,  
\begin{align*}
x_{t+1} - \hat{x}_{1,t} = \sum_{\tau = 0}^{t+1} \tilde{G}_{t+1 - \tau} w_\tau - \sum_{\tau = 0}^t \tilde{F}_{t - \tau} w_\tau  
                                     = \tilde{G}_0 w_{t+1} 
                                     = (\hat{A} \tilde{F}_0 + B) w_{t+1}\ .  
\numberthis \label{errorformula}
\end{align*} \qed
\end{proof} 

These forecast errors are zero-mean, confirming model-consistency.

The solution can be expressed in terms of the exogenous inputs $\/u_t\/$ rather than 
the driving variables $\/w_t\/$, by simply right-multiplying $\/\tilde{F}[z]\/$ and 
$\/\tilde{G}[z]\/$ by $\/[I - R z^{-1}]\/$:  
\begin{align*}
{F}[z] & = [z^2 \hat{A} - z I + A]^{-1} [[zI - A](\hat{A}\tilde{F}_0 + B)[zI - R] - z^2 B] \ , \\
{G}[z] & = [z^2 \hat{A} - z I + A]^{-1} z [ \hat{A} (\hat{A}\tilde{F}_0 + B)[zI - R] -  B] \ , 
\end{align*}
The time-domain counterpart of right-multiplying by $\/[I - R z^{-1}]^{-1}\/$ to recover $\/\tilde{F}[z]\/$ and $\/\tilde{G}[z]\/$ 
is convolution with $\/R^t\/$.  
It follows that $\/F_0 = \tilde{F}_0\/$.  
It also follows that convolution of $\/F_t\/$ (respectively, $\/G_t\/$) with $\/u_t\/$ 
is equivalent to convolution of $\/\tilde{F}_t\/$ (resp.,~$\/\tilde{G}_t\/$) with $\/w_t\/$ 
(by the associativity of convolution).  
Because $\/[I - R z^{-1}]\/$ and its rational-matrix inverse are both proper, 
and because the initial value of each of their inverse z-transforms is the identity matrix, 
multiplication by either of them does nothing to alter properness.  

The following counterpart of Corollary~\ref{tildezscor} is immediate:  

\begin{corollary}
\label{zscor}
Suppose that 
the model~(\ref{stateqn},\ref{inputeqn}) 
is regular.   
Then, for any specified value of $\/ \hat{A} F_0 \in \mathbbm{R}^{n \times m}\/$, 
any model-consistent forecasting mechanism~(\ref{forecasts}) has 
\begin{align*}
F_t = {\cal Z}^{-1}\{F[z]\} = {\cal Z}^{-1}\{[z^2 \hat{A} - z I + A]^{-1} [[zI {-} A](\hat{A}F_0 + B)[zI {-} R] - z^2 B]\}\ , 
\end{align*} 
where $\/F[z]\/$ is proper.  

If $\/x_{-1}\/$, $\/\hat{x}_{1,t}\/$, and $\/u_{-1}\/$ are all zero, 
and $\/\hat{x}_{1,t} {=} \sum_{\tau = 0}^t {F}_{t - \tau} u_\tau\/$, 
then the resulting full model~(\ref{stateqn}-\ref{forecasts}) satisfies 
\begin{align*}
x_t = \sum_{\tau = 0}^t {G}_{t - \tau} u_\tau\ , 
\end{align*} 
where 
\begin{align*} 
G_t & = {\cal Z}^{-1}\{G[z]\} = {\cal Z}^{-1}\{[z^2 \hat{A} - z I + A]^{-1} z [ \hat{A} (\hat{A}F_0 + B)[zI - R] -  B]\} \ .  
\end{align*} 
The forecast error realized at time $\/t \geq 0\/$ is  $\/x_{t} - \hat{x}_{1,t-1} = (\hat{A} {F}_0 + B) w_{t}\/$.  \qed
\end{corollary}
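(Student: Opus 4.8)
The plan is to read off Corollary~\ref{zscor} as the exact transcription of Corollary~\ref{tildezscor} under the change of driving variable from the innovation sequence $w_t$ to the exogenous process $u_t$, exploiting the substitutions $F[z] = \tilde F[z]\,[I - Rz^{-1}]$ and $G[z] = \tilde G[z]\,[I - Rz^{-1}]$ established in the discussion preceding the statement. First I would apply Corollary~\ref{tildezscor} with its free product $\hat A\tilde F_0$ set equal to the prescribed value of $\hat A F_0$; this is consistent because the inverse transform of $[I - Rz^{-1}]$ has identity initial value, whence $F_0 = \tilde F_0$ and the two prescribed products coincide. That application supplies the closed form for $\tilde F[z]$ (hence $\tilde F_t$), the companion form for $\tilde G[z]$ (hence $\tilde G_t$), and the realized forecast error $x_t - \hat x_{1,t-1} = (\hat A\tilde F_0 + B)w_t$.

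Second, I would transfer these statements to the $u$-domain. Because $\tilde u_t$ is the convolution of $R^t$ with $w_t$ and, under zero initial conditions, $u_t = \tilde u_t$, associativity of convolution gives $\sum_{\tau=0}^t F_{t-\tau}u_\tau = \sum_{\tau=0}^t \tilde F_{t-\tau} w_\tau$ and likewise for $G$ against the $x_t$; hence the forecasting mechanism $\hat x_{1,t} = \sum_{\tau=0}^t F_{t-\tau} u_\tau$ of the form~(\ref{forecasts}) produces exactly the forecasts and endogenous trajectories already characterised in Corollary~\ref{tildezscor}, and the forecast-error identity, being stated directly in terms of $w_t$, carries over verbatim with $\tilde F_0$ replaced by $F_0$.

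Third comes the only point with any content, properness. Corollary~\ref{tildezscor} asserts $\tilde F[z]$ is proper. Since $[I - Rz^{-1}]$ and its rational inverse $[I - Rz^{-1}]^{-1}$ are both proper with identity initial value, right-multiplication by either preserves properness, so $F[z] = \tilde F[z]\,[I - Rz^{-1}]$ is proper, and conversely $\tilde F[z] = F[z]\,[I - Rz^{-1}]^{-1}$, making the properness of $F[z]$ equivalent to that of $\tilde F[z]$. Substituting the relations $F[z] = \tilde F[z]\,[I - Rz^{-1}]$ and $G[z] = \tilde G[z]\,[I - Rz^{-1}]$ into the transform formulas of Proposition~\ref{tildefgsoln} and cancelling the trailing factor $[I - Rz^{-1}]^{-1}$ yields precisely the displayed closed forms for $F[z]$ and $G[z]$, completing the proof.

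I do not expect a genuine obstacle here --- the corollary is flagged as immediate --- so the work is bookkeeping rather than invention. The one place to be careful is the initial-condition accounting: one must use that zero initial conditions force $u_t = \tilde u_t$ (so the convolution against $u_t$ really is the convolution against $\tilde u_t$ appearing in~(\ref{forecasts})), and that $\hat x_{1,-1} = 0$ is what lets the $t=-1$ instance $x_0 - \hat x_{1,-1} = (\hat A F_0 + B)w_0$ fall out of $x_0 = \tilde G_0 w_0$.
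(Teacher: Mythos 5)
Your proposal is correct and follows essentially the same route as the paper, which presents this corollary as ``immediate'' from the discussion directly preceding it: right-multiplication of $\tilde F[z]$ and $\tilde G[z]$ by $[I - Rz^{-1}]$, the identity $F_0 = \tilde F_0$, equivalence of the $u$- and $w$-convolutions by associativity, and preservation of properness under multiplication by $[I-Rz^{-1}]$ or its inverse. Your added care about $u_t = \tilde u_t$ under zero initial conditions is a detail the paper leaves implicit but is entirely consistent with its argument.
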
 

If $\/{F}[z]\/$ and $\/{G}[z]\/$ are proper, they constitute what are commonly called the {\em transfer matrices\/}, or 
matrices of {\em transfer functions\/}, of a forecasting mechanism and of the corresponding full model, respectively.  
When transfer matrices are proper, they have state-space realizations.  
Such realizations are not unique, and their choice may depend on the details of the structure of the rational matrices.   
But realizations can be found easily with the aid of numerical routines such as the command {\tt tf2ss\/} of the MATLAB Control Systems Toolbox, 
or of Scilab~\citep{scilab}.  
See section~\ref{example} for concrete, numerical examples.  

The especially alert reader will have noticed that the formulas for $\/\tilde{G}[z]\/$ and $\/G[z]\/$ implicitly assume 
the exact cancellation of the term $\/[I - A z^{-1}]^{-1}\/$, arising from the model equation~(\ref{stateqn}) via equation~(\ref{fz}), 
by the term $\/[I - Az^{-1}]\/$ arising in $\/\tilde{F}[z]\/$.  
(To see the cancellation, substitute the solution for $\/\tilde{F}[z]\/$ into equation~(\ref{fz}).)  
However, the implementation of $\/\tilde{F}[z]\/$ given in section~\ref{wellposed} ensures that the latter 
term also arises from the model equation itself, via feedback, in which case the cancellation is a matter of algebra, 
and does not demand infinite precision in the aggregate actions of the public.  

A related concern is that the overall model may be very sensitive to the value of $\/\hat{A}\/$:  
mathematically, variations in that value could give rise to singular perturbations of the nominal 
matrix polynomial $\/[z^2 \hat{A} - zI + A]\/$.  
A thorough treatment of the latter robustness issue is beyond the scope of this paper;  
but an element of a resolution might be to treat $\/\hat{A}\/$ solely as a parameter of the forecasting mechanism, 
whose output would then be the product $\/\hat{A} \hat{x}_{1,t}\/$ 
-- the `economic effect' of the forecast $\/\hat{x}_{1,t}\/$, as opposed to the forecast itself.  

\subsection{Zero-input response}  
\label{ziresp}

To characterize model-consistent forecasting mechanisms fully, the effect of nonzero initial considerations must be considered.  
That of the driving terms having been analyzed, it suffices now (by linearity) to consider the case where the initial conditions may have nonzero values, but all of the driving variables $\/w_t\/$ vanish.  
In system-theoretic terms, the corresponding solution of the model is called its ``zero-input response.''  

By the definition of forecasting mechanisms, in this case, $\/\hat{x}_{1,t} = \overline{x}_{t+1}\/$, where $\/\overline{x}_{t+1}\/$ depends linearly on the initial conditions, but not at all on the $\/w_t\/$.  
For any $\/t \geq 0\/$, model-consistency therefore requires that 
\begin{align*}
x_{t+1} = E_t(\hat{x}_{1,t}) = \hat{x}_{1,t}\ .  
\end{align*} 
In other words, $\/\hat{x}_{1,t}\/$ must be an exact forecast of $\/x_{t+1}\/$.  

The sequence $\/x_t\/$ must therefore be the solution of the following ``perfect-foresight'' model, which captures the case where the driving variables $\/w_t\/$ are zero-valued for all $\/t \geq 0\/$.  
\begin{align*}
\numberthis \label{xxhatpf}
& x_t = A x_{t-1} + \hat{A} \hat{x}_{1,t} + B R^{t+1} u_{-1}\ , \\
& \hat{x}_{1,t-1} = x_{t}\ .  
\numberthis \label{xhatxpf}
\end{align*} 
Substituting for $\/\hat{x}_{1,t}\/$ in the first equation, 
\begin{align*}
x_t = A x_{t-1} + \hat{A}{x}_{t+1} + B R^{t+1} u_{-1}\ ,\ \forall t \geq 0\ .  
\numberthis \label{zixeqn} 
\end{align*} 
Taking unilateral z-transforms and solving for $\/X[z]\/$, 
\begin{align*}
\overline{X}[z] := X[z] = [z^2 \hat{A} - z I + A]^{-1} [z^2 \hat{A} \hat{x}_{1,-1} - z A x_{-1} - z B R [I - R z^{-1}]^{-1} u_{-1}]\ ;  
\numberthis \label{xbarz}
\end{align*} 
Here, the second equation of the perfect-foresight model has been used to replace $\/x_0\/$ with the initial condition $\/\hat{x}_{1,-1}\/$.  

\begin{proposition}
\label{ziprop}
Suppose that 
the model~(\ref{stateqn},\ref{inputeqn}) 
is regular.  
Let 
\begin{align*}
\overline{x}_t & :=  {\cal Z}^{-1}\{\overline{X}[z]\} \\ 
      & = {\cal Z}^{-1}\left \{[z^2 \hat{A} - z I + A]^{-1} [z^2 \hat{A} \hat{x}_{1,-1} - z A x_{-1} - z B R [I - R z^{-1}]^{-1} u_{-1}] \right \}\ .  
\end{align*} 
Then, a perfect-foresight solution exists if and only if $\/\overline{X}[z] - \hat{x}_{1,-1}\/$ is strictly proper.  
In that case, the unique such solution has $\/x_t = \overline{x}_t\/$ and $\/\hat{x}_{1,t-1} = \overline{x}_{t+1}\/$.  
\end{proposition}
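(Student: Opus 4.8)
The plan is to collapse the perfect-foresight model~(\ref{xxhatpf})--(\ref{xhatxpf}) to the single second-order recursion~(\ref{zixeqn}) and then to work in the frequency domain, using regularity of $[z^2\hat A - zI + A]$ to invert it and the companion linearization of that matrix polynomial --- exactly as in the discussion preceding Proposition~\ref{tildefgsoln} --- to license the passage to z-transforms.

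\emph{Necessity and uniqueness.} Suppose a perfect-foresight solution exists. By~(\ref{xhatxpf}) it has $\hat x_{1,t}=x_{t+1}$, so it is determined by the sequence $x_t$, which obeys~(\ref{zixeqn}) for $t\ge 0$ with $x_{-1}$ prescribed and, by~(\ref{xhatxpf}) at $t=0$, $x_0=\hat x_{1,-1}$. I would first show $x_t$ is of exponential order: rewrite~(\ref{zixeqn}) as a first-order descriptor difference equation by companion linearization of $[z^2\hat A - zI + A]$; since that polynomial is regular the linearized pencil is regular, and by the results of \citet{Bruell2009} every solution of such a system driven by the exponential-order term $BR^{t+1}u_{-1}$ is itself of exponential order, hence possesses a unilateral z-transform $X[z]$, a proper rational matrix. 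Transforming~(\ref{zixeqn}) --- using ${\cal Z}\{x_{t+1}\}=z(X[z]-x_0)$ with $x_0=\hat x_{1,-1}$, ${\cal Z}\{x_{t-1}\}=z^{-1}X[z]+x_{-1}$, and ${\cal Z}\{R^{t+1}u_{-1}\}=R[I-Rz^{-1}]^{-1}u_{-1}$ --- then multiplying through by $z$ and collecting terms gives the identity $[z^2\hat A - zI + A]X[z]=z^2\hat A\hat x_{1,-1}-zAx_{-1}-zBR[I-Rz^{-1}]^{-1}u_{-1}$, so $X[z]=\overline X[z]$ by regularity and~(\ref{xbarz}). Transform inversion then forces $x_t=\overline x_t$ and $\hat x_{1,t}=x_{t+1}=\overline x_{t+1}$, which is the uniqueness claim; and because $X[z]$ is a unilateral z-transform whose initial value is $x_0=\hat x_{1,-1}$, the difference $\overline X[z]-\hat x_{1,-1}=X[z]-x_0$ is strictly proper --- the asserted necessary condition.

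\emph{Sufficiency.} Conversely, assume $\overline X[z]-\hat x_{1,-1}$ is strictly proper. Then $\overline X[z]$ is proper, hence the unilateral z-transform of a sequence $\overline x_t$ of exponential order vanishing for $t<0$, and strict properness of the difference pins the initial value $\overline x_0=\hat x_{1,-1}$. Set $x_t:=\overline x_t$ for $t\ge 0$, keep the prescribed $x_{-1}$, and put $\hat x_{1,t}:=x_{t+1}$ for $t\ge -1$ --- so that~(\ref{xhatxpf}) holds and $\hat x_{1,-1}=x_0$ is consistent with the initial data. By the definition of $\overline X[z]$ in~(\ref{xbarz}), $[z^2\hat A - zI + A]\overline X[z]=z^2\hat A\hat x_{1,-1}-zAx_{-1}-zBR[I-Rz^{-1}]^{-1}u_{-1}$; rearranged this reads $\overline X[z]=z\hat A(\overline X[z]-\overline x_0)+A(z^{-1}\overline X[z]+x_{-1})+BR[I-Rz^{-1}]^{-1}u_{-1}$, in which --- crucially by the strict-properness hypothesis --- every right-hand term is a proper rational matrix and hence a genuine unilateral z-transform. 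Inverting term by term recovers $x_t=Ax_{t-1}+\hat A x_{t+1}+BR^{t+1}u_{-1}$ for all $t\ge 0$, i.e.~(\ref{zixeqn}) and therefore~(\ref{xxhatpf}); so $(x_t,\hat x_{1,t})$ is a perfect-foresight solution, necessarily the unique one by the preceding part.

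\emph{The main obstacle.} The delicate step is the exponential-order claim in the necessity direction: \emph{a priori} a perfect-foresight solution is just a pair of sequences satisfying~(\ref{xxhatpf})--(\ref{xhatxpf}), and without a growth bound one cannot take its z-transform, let alone identify that transform with $\overline X[z]$. The fix is structural, not computational --- regularity makes the linearized pencil regular, and its Weierstrass decomposition splits the recursion into an ordinary forward part (geometric growth) and a nilpotent, finitely anticipatory part that merely reads off finitely many future values of the exponential-order forcing --- so the work is to cite \citet{Bruell2009} accurately rather than to estimate by hand. The only remaining care is bookkeeping of the unilateral-transform boundary terms ($x_{-1}$ entering through the lag, $x_0=\hat x_{1,-1}$ through the lead) so that both directions reproduce~(\ref{xbarz}) exactly; and, in the sufficiency direction, this is precisely where the hypothesis on $\overline X[z]-\hat x_{1,-1}$ earns its keep, delivering both $\overline x_0=\hat x_{1,-1}$ and the properness of the term $z\hat A(\overline X[z]-\overline x_0)$.
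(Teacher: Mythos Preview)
Your proof is correct and follows essentially the same approach as the paper: reduce the perfect-foresight system to the second-order recursion~(\ref{zixeqn}), invoke regularity and the linearization/\citet{Bruell2009} argument for exponential order, take unilateral z-transforms with the boundary terms $x_{-1}$ and $x_0=\hat x_{1,-1}$ to identify $X[z]$ with $\overline X[z]$, and for sufficiency rearrange the defining identity of $\overline X[z]$ and invert termwise. Your ``main obstacle'' discussion is exactly the exponential-order justification the paper places in the paragraph preceding the proposition, and your rearranged equation in the sufficiency step is the paper's displayed identity with $\overline x_0=\hat x_{1,-1}$.
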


\begin{proof}
By the above discussion, any solution $\/x_t\/$ of~(\ref{zixeqn}) must have a unilateral z-transform 
of the form of $\/\overline{X}[z]\/$.  
As a unilateral z-transform, $\/\overline{X}[z]\/$ must be proper.  
Moreover, because $\/x_0\/$ must equal $\/\hat{x}_{1,-1}\/$, 
the matrix $\/\overline{X}[z] - \hat{x}_{1,-1}\/$ must be strictly proper.  
This establishes the necessary condition for existence of a perfect-foresight solution.  

Now, suppose that $\/\overline{X}[z] - \hat{x}_{1,-1}\/$ is strictly proper.  
That implies that $\/\overline{X}[z]\/$ is proper, and that $\/\hat{x}_{1,-1}\/$ is the value of the inverse transform $\/\overline{x}_t\/$ at $\/t=0\/$.  
Rearrange the expression for $\/\overline{X}[z]\/$:   
\begin{align*}
\overline{X}[z] & =  A z^{-1}[\overline{X}[z] + z x_{-1}] + \hat{A} z [\overline{X}[z] - \hat{x}_{1,-1}]  + B R [I - R z^{-1}]^{-1} u_{-1}\ . 
\end{align*} 
Transforming to the time domain, 
that inverse transform $\/\overline{x}_t\/$ is seen to satisfy~(\ref{zixeqn}).   
The uniqueness of this solution follows from that of the transform $\/\overline{X}[z]\/$.  
If, in addition, $\/\hat{x}_{1,t-1} = \overline{x}_{t}\/$, for all $\/t \geq 0\/$, then~(\ref{xxhatpf},\ref{xhatxpf}) is satisfied.  
The full model~(\ref{stateqn}--\ref{forecasts}) 
then satisfies $\/x_t = \overline{x}_t\/$, if $\/w_t = 0\/$,  $\/\forall t \geq 0\/$.  
\qed
\end{proof}
As a consequence of Proposition~\ref{ziprop}, the initial conditions will be said to be {\em consistent\/} if 
$\/\overline{X}[z] - \hat{x}_{1,-1}\/$ is strictly proper; 
and in that case, a model-consistent forecasting mechanism must have $\/\overline{x}_t = {\cal Z}^{-1}\{\overline{X}[z]\}\/$.\footnote{Like the state-space realization of proper rational matrices $\/F[z]\/$ and $\/G[z]\/$, 
the inversion of the proper matrix $\/\overline{X}[z]\/$ can be carried out with the use of standard software tools; 
it amounts to computing the impulse response of a system with a given proper transfer matrix.}    

By the last part of the proposition, $\/\overline{x}_t\/$ can be said, in system-theoretic terminology, to represent 
the {\em zero-input response\/} of the full model~(\ref{stateqn}-\ref{forecasts}), under model-consistent 
forecasts.  
It shows how the system responds to nonzero initial conditions, when any driving terms vanish for $\/t \geq 0\/$.  

However, the nature of the derivation should be borne in mind in potential applications that might otherwise 
exceed the limitations of the results.  
If the model has been evolving through negative time instants, and the model parameters that hold for 
$\/t \geq 0\/$ differ from those in effect for $\/t = -1\/$, it is plausible to suppose that 
realized values of $\/x_{-1}\/$ and $\/u_{-1}\/$ are valid initial conditions for the zero-input response for $\/t \geq 0\/$.   
But it is less clear what are the implications of realizations of the forecast $\/\hat{x}_{1,-1}\/$.  
``Perfect foresight'' need not apply under an unforeseen change in model parameters.    

Like the formulas derived in the previous section, the above solution for $\/\overline{X}[z]\/$ implicitly assumes that a term $\/[I - A z^{-1}]^{-1}\/$ 
arising from the model is exactly cancelled by a term $\/[I - A z^{-1}]^{-1}\/$ derived from the forecast mechanism.  
But this concern can be addressed by realizing the forecast mechanism using feedback from the model equation, as in section~\ref{wellposed}.  

\subsection{Total response} 
\label{totresp}

The total response of the full system is the sum of its zero-state and zero-input responses.  
Indeed, fix any model-consistent forecasting mechanism; 
then $\/x_t\/$ and $\/\hat{x}_{1,t}\/$ must be linear functions of the random variables 
$\/w_\tau\/$, $\/0 \leq \tau \leq t\/$, and of the initial conditions $\/x_{-1}\/$, $\/\hat{x}_{1,-1}\/$, and $\/u_{-1}\/$.  
It follows by linearity that they must then be obtained by summing the separate respective responses 
to the driving variables and to the initial conditions -- namely, the zero-state and zero-input responses.  

\begin{theorem}
\label{exun}
Suppose that the model~(\ref{stateqn},\ref{inputeqn}) is regular.  
For any $\/\hat{A} F_0 \in \mathbbm{R}^{n \times m}\/$, define 
\begin{align*}
{F}_t & = {\cal Z}^{-1}\{F[z]\} =  {\cal Z}^{-1}\{[z^2 \hat{A} - z I + A]^{-1} [[zI {-} A](\hat{A}F_0 + B)[zI {-} R] - z^2 B]\}\ , \\
{G}_t  & = {\cal Z}^{-1}\{ [z^2 \hat{A} - z I + A]^{-1} z[ \hat{A} (\hat{A}F_0 + B)[zI {-} R] -  B]\} \ ,\ \text{ and } \\
\overline{x}_t  & =  {\cal Z}^{-1}\{\overline{X}[z]\} \\ 
       & = {\cal Z}^{-1}\left \{[z^2 \hat{A} - z I + A]^{-1} [z^2 \hat{A} \hat{x}_{1,-1} - z A x_{-1} - z B R [I {-} R z^{-1}]^{-1} u_{-1}] \right \}\ .  
\end{align*}  

Suppose that the initial conditions are consistent ($\/\overline{X}[z] - \hat{x}_{1,-1}\/$ is strictly proper).  
Then there exists a model-consistent forecasting mechanism~(\ref{forecasts}) for (\ref{stateqn},\ref{inputeqn}) 
if and only if $\/F[z]\/$ is proper.   

In that case, in terms of the above inverse transforms, 
the unique model-consistent forecasting mechanism~(\ref{forecasts}) is 
\begin{align*}
\hat{x}_{1,t}  = \sum_{\tau = 0}^t {F}_{t-\tau} \tilde{u}_\tau + \overline{x}_{t+1} \ ,\ \forall t \geq 0\ ; 
\numberthis \label{thmforecast}
\end{align*} 
and the resulting full model~(\ref{stateqn}--\ref{forecasts}) satisfies 
\begin{align*}
x_t  = \sum_{\tau = 0}^t {G}_{t-\tau} \tilde{u}_\tau + \overline{x}_t \ ,\ \forall t \geq 0.  
\numberthis \label{thmtotresponse} 
\end{align*} 
The forecast error realized at time $\/t \geq 0\/$ is 
\begin{align*}
(\hat{A} F_0 + B) w_t = (\hat{A} F_0 + B)(u_t - R u_{t-1})\ .
\end{align*}
\end{theorem}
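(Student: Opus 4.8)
The plan is to prove Theorem~\ref{exun} by superposition, assembling the zero-state analysis of Corollary~\ref{zscor} and the zero-input analysis of Proposition~\ref{ziprop}. Fix the prescribed value of the product $\hat{A}F_0$ and any forecasting mechanism~(\ref{forecasts}) with that value; because the model equations~(\ref{stateqn},\ref{inputeqn}) and the mechanism are all linear while the initial conditions $x_{-1}$, $\hat{x}_{1,-1}$, $u_{-1}$ enter as constants, the resulting $x_t$ and $\hat{x}_{1,t}$ are affine in the driving variables $(w_0,\dots,w_t)$, with coefficients and affine offset depending only on those constants. The decomposition~(\ref{forecasts}) itself exhibits this split for the forecast: $\sum_{\tau = 0}^t F_{t-\tau}\tilde{u}_\tau$ is the $w$-dependent, zero-state part, and $\overline{x}_{t+1}$ -- which, by the defining property of a forecasting mechanism, does not depend on the $w_t$ and is therefore deterministic -- is the zero-input part. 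Writing $x_t = x_t^{\mathrm{zs}} + x_t^{\mathrm{zi}}$ for the analogous decomposition of the state, with $x_t^{\mathrm{zi}}$ deterministic, one has $E_t(x_{t+1}) = E_t(x_{t+1}^{\mathrm{zs}}) + x_{t+1}^{\mathrm{zi}}$. Equating the random parts and the deterministic parts of the model-consistency identity $\hat{x}_{1,t} = E_t(x_{t+1})$ then shows it to be equivalent to the conjunction of (a) model-consistency of the zero-state mechanism $\hat{x}_{1,t}^{\mathrm{zs}} = \sum_{\tau = 0}^t F_{t-\tau}\tilde{u}_\tau$ for the system~(\ref{stateqn},\ref{inputeqn}) with zero initial conditions, and (b) the perfect-foresight identity $\overline{x}_{t+1} = x_{t+1}^{\mathrm{zi}}$, i.e.\ that $x_t^{\mathrm{zi}}$ solves~(\ref{xxhatpf},\ref{xhatxpf}).

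Item (a) is exactly the content of Corollary~\ref{zscor}: for the prescribed value of $\hat{A}F_0$ it holds for some choice of $F_t$ if and only if $F[z]$ is proper, in which case $F_t = {\cal Z}^{-1}\{F[z]\}$ is forced, the zero-state response of the full model is $x_t^{\mathrm{zs}} = \sum_{\tau = 0}^t G_{t-\tau}\tilde{u}_\tau$ with $G_t = {\cal Z}^{-1}\{G[z]\}$, and the forecast error contributed by this part at time $t$ is $(\hat{A}F_0 + B)w_t$. Item (b) is the content of Proposition~\ref{ziprop}: the perfect-foresight model~(\ref{xxhatpf},\ref{xhatxpf}) admits a (necessarily unique, deterministic) solution if and only if $\overline{X}[z] - \hat{x}_{1,-1}$ is strictly proper -- which is precisely the standing hypothesis that the initial conditions be consistent -- and then $x_t^{\mathrm{zi}} = \overline{x}_t = {\cal Z}^{-1}\{\overline{X}[z]\}$ and $\hat{x}_{1,t-1}^{\mathrm{zi}} = \overline{x}_t$, with zero forecast error. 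Putting the two together under the consistency hypothesis: a model-consistent forecasting mechanism~(\ref{forecasts}) with the given $\hat{A}F_0$ exists if and only if $F[z]$ is proper; when it does, superimposing the two components yields the mechanism~(\ref{thmforecast}) and the total response~(\ref{thmtotresponse}), and the forecast error is $(\hat{A}F_0 + B)w_t + 0 = (\hat{A}F_0 + B)w_t$, which equals $(\hat{A}F_0 + B)(u_t - R u_{t-1})$ by~(\ref{inputeqn}). Uniqueness is inherited from the uniqueness of each component -- the zero-state part by Corollary~\ref{zscor} (equivalently Proposition~\ref{tildefgsoln}, once $\hat{A}F_0$ is pinned down) and the zero-input part by Proposition~\ref{ziprop} -- so their sum is the only model-consistent mechanism with that value of $\hat{A}F_0$.

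The step I expect to require genuine care, as opposed to routine verification, is the reduction in the first paragraph: that the model-consistency condition decouples along the zero-state/zero-input split. The subtlety is that the conditional expectation $E_t(\cdot)$ is taken subject to the full model including the forecasting mechanism, so a priori the stochastic structure might be entangled with the initial-condition contribution. The resolution is the linearity already invoked: for a fixed model and fixed forecasting mechanism, $x_{t+1}$ is affine in $(w_0,\dots,w_t)$ with a deterministic offset that depends only on the constant initial conditions, so that offset passes through $E_t$ untouched and the random and deterministic parts of $\hat{x}_{1,t} = E_t(x_{t+1})$ may be equated separately. Everything else is immediate from Corollary~\ref{zscor}, Proposition~\ref{ziprop}, and the bookkeeping already carried out between the two corollaries (in particular $F_0 = \tilde{F}_0$ and the equivalence of convolving $F_t$ against $u_t$ with convolving $\tilde{F}_t$ against $w_t$), exactly as the informal discussion opening Section~\ref{totresp} anticipates.
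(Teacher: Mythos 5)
Your proposal is correct and follows essentially the same route as the paper: the paper's proof likewise superimposes the zero-state analysis of Corollary~\ref{zscor} and the zero-input analysis of Proposition~\ref{ziprop}, invoking linearity of the model and of conditional expectations, and notes that the zero-input component contributes nothing to the forecast error. Your explicit justification of the decoupling of the model-consistency condition across the two components (the deterministic offset passing through $E_t$) is a more careful spelling-out of what the paper compresses into ``by the linearity of the model and of conditional expectations,'' but it is the same argument.
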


\begin{proof}
If the initial conditions are consistent, then the necessary form of a model-consistent forecasting mechanism, 
and that of the unique solution of the full model resulting 
from a forecasting mechanism of that form, follow from Corollary~\ref{zscor} and Proposition~\ref{ziprop}, 
by the linearity of the model and of conditional expectations.  
The terms derived from the zero-input solution have no effect on the forecasting errors,  
so (again by linearity) the overall forecasting error realized at time $\/t\/$ is $\/(\hat{A} F_0 + B) w_t\/$, as derived in section~\ref{zsresp}.  
It follows that the specified forecasting mechanism~(\ref{thmforecast}) is indeed model-consistent. \qed 
\end{proof}

It should be emphasized that, assuming that a suitable value of $\/\hat{A} F_0\/$, 
representing the immediate economic effects of shocks on endogenous variables via expectations, 
can be specified, 
the results of this section resolve the nonuniqueness of rational expectations:  
as long as a model-consistent forecasting mechanism exists, 
$\/\hat{A} F_0\/$ determines $\/G_0\/$, 
and ``rationality'' then determines $\/F_t\/$ and $\/G_t\/$ for all $\/t > 0\/$.  

In the next section, a simple assumption is introduced that ensures existence for any value of $\/\hat{A} F_0\/$.  

\subsection{Well-posedness, existence, and feedback}  
\label{wellposed}

A regular model~(\ref{stateqn},\ref{inputeqn}) will be called {\em well-posed\/} if the 
inverse of the ``characteristic matrix'' $\/[-z \hat{A} + I - A z^{-1}]\/$ is proper -- or equivalently, 
if $\/[z^2 \hat{A} - z I + A]^{-1}\/$ is strictly proper.  
For example, this is so whenever $\/\hat{A}\/$ is nonsingular, but not when $\/\hat{A}\/$ is nilpotent.  

Well-posedness admits a simple sufficient condition for consistency of the initial conditions.  
The initial conditions will be called {\em weakly consistent\/} if 
\begin{align*}
\hat{x}_{1,-1} - A x_{-1} - B R u_{-1} \in \text{ Im}\ \hat{A}\ . 
\end{align*} 
Indeed, this condition is plainly necessary if a solution of~(\ref{stateqn},\ref{inputeqn}) is to exist when 
$\/w_t = 0\/$ and the prediction $\/\hat{x}_{1,-1}\/$ is exact.  
Together, well-posedness and weak consistency imply the existence of a model-consistent forecasting 
mechanism for any possible value of $\/\hat{A} F_0\/$.  
In economic terms, well-posedness obviates any assumption that economic actors 
use their presumed aggregate knowledge of the model to ``choose'' 
parameter values $\/\hat{A} F_0\/$ for which solutions exist.    

The theoretical definition of forecasting mechanisms is an unrealistic one, in the sense that, 
incorporating no feedback, a direct implementation would be completely lacking in robustness.  
But it turns out that well-posedness also allows for the realization of model-consistent forecasting 
mechanisms in the form of feedforward/feedback interconnections with the rest of the model, as 
represented by~(\ref{stateqn},\ref{inputeqn}).  
(The condition implies the ``well-posedness'' of that interconnection, 
in the specific sense in which that term is applied to feedback systems.)  
The use of feedback resolves key robustness issues with respect to the model parameter $\/A\/$, but 
sensitivity to the parameter $\/\hat{A}\/$ remains an issue (discussed at the end of section~\ref{zsresp}).  

For brevity, proofs for this section are relegated to appendix~\ref{wellposedproofs}.  
The results themselves are summarized in the following:  

\begin{theorem}
\label{feedbackthm} 
If the model~(\ref{stateqn},\ref{inputeqn}) is regular and well-posed, 
and the initial conditions are weakly consistent, 
then for every possible value of the product  
$\/\hat{A} F_0 \in \mathbbm{R}^{n \times m}\/$, 
there exists a unique model-consistent forecasting mechanism.  
That forecasting mechanism can be realized by the following feedforward/feedback law:  
\begin{align*}
\hat{x}_{1,t} & = \sum_{\tau = 0}^{t} \varPhi_{t-\tau} [A x_\tau + B R u_\tau] - \sum_{\tau=0}^{t} \Psi_{t-\tau} \hat{A} F_0 w_\tau - \Psi_t (\hat{x}_{1,-1} - A x_{-1} - B R u_{-1})\/ .
\end{align*} 
Here, 
$\/\varPhi_t := {\cal Z}^{-1}\{ [I - z \hat{A}]^{-1}\}\/$ and 
$\/\Psi_t := {\cal Z}^{-1}\{ [I - z \hat{A}]^{-1} z \hat{A} \hat{A}^g \}\/$, where $\/\hat{A}^g\/$ is a  
generalized inverse of $\/\hat{A}\/$  (s.t.~$\/\hat{A} \hat{A}^g \hat{A} = \hat{A}\/$).  
\qed 
\end{theorem}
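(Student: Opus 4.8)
The plan is to break the statement into its three claims — existence and uniqueness of a model-consistent forecasting mechanism for each value of $\hat{A}F_0$, and the validity of the stated feedforward/feedback realization — and to reduce each to facts already in hand, chiefly Theorem~\ref{exun} and Proposition~\ref{ziprop}. By Theorem~\ref{exun}, for consistent initial conditions a model-consistent forecasting mechanism exists (and is unique) precisely when $F[z]$ is proper, so the first task is to show that well-posedness forces $F[z]$ to be proper for \emph{every} value of $\hat{A}F_0$. Writing $F[z] = [z^2\hat{A}-zI+A]^{-1}\bigl[[zI-A](\hat{A}F_0+B)[zI-R]-z^2 B\bigr]$ and factoring out $z^2$, one sees $F[z] = \bigl([z^2\hat{A}-zI+A]^{-1} z^2\bigr)\cdot\bigl[\text{matrix polynomial of degree }0\text{ in }z^{-1}\text{, i.e. bounded as }z\to\infty\bigr]$, up to lower-order terms; well-posedness says exactly that $[z^2\hat{A}-zI+A]^{-1}$ is strictly proper, equivalently $[z^2\hat{A}-zI+A]^{-1}z^2$ is proper, so the product is proper. (I would be slightly careful here: the bracketed factor is $\hat{A}(\hat{A}F_0+B)-z^{-1}(\cdots)+z^{-2}(\cdots)$ after dividing by $z^2$, hence proper, and properness is closed under products.) Second, I must check that \emph{weak} consistency implies the \emph{consistency} hypothesis of Theorem~\ref{exun}, i.e.\ that $\overline{X}[z]-\hat{x}_{1,-1}$ is strictly proper; here I would expand $\overline{X}[z]$ from~(\ref{xbarz}), use that $[z^2\hat{A}-zI+A]^{-1}$ is strictly proper, and show that the $O(z^2)$ part of the numerator, namely $z^2\hat{A}\hat{x}_{1,-1} - zAx_{-1} - zBRu_{-1} + O(1)$, contributes exactly $\hat{x}_{1,-1}$ to the limit at $z\to\infty$ when the weak-consistency relation $\hat{x}_{1,-1}-Ax_{-1}-BRu_{-1}\in\mathrm{Im}\,\hat{A}$ holds — the generalized inverse $\hat{A}^g$ is what lets me solve for the $O(z^0)$ coefficient. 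Together these two reductions give existence and uniqueness for every $\hat{A}F_0$, which is the content of the first sentence of the theorem.

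For the realization claim, the plan is to verify directly that the proposed feedforward/feedback law reproduces the unique forecasting mechanism of Theorem~\ref{exun}. I would take z-transforms of the law, using that $\hat{x}_{1,t}$, $x_t$, $u_t$ all have unilateral transforms (justified by the regularity/properness already established), to get an expression of the form $\hat{x}_{1}[z] = [I-z\hat{A}]^{-1}(AX[z]+BRU[z]) - [I-z\hat{A}]^{-1}z\hat{A}\hat{A}^g\,\hat{A}F_0\,W[z] - [I-z\hat{A}]^{-1}z\hat{A}\hat{A}^g(\hat{x}_{1,-1}-Ax_{-1}-BRu_{-1})$. The idea is then to feed this back into the model equation~(\ref{stateqn}) in transform form, $X[z] = Az^{-1}X[z] + z^{-1}Ax_{-1} + \hat{A}\hat{x}_1[z] + BU[z]$, solve the resulting loop for $X[z]$ and $\hat{x}_1[z]$, and match the outcome against the formulas for $G[z]$ and $F[z]$ (plus the zero-input piece $\overline{X}[z]$) from Theorem~\ref{exun}. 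The key algebraic identity doing the work is $\hat{A}\hat{A}^g\hat{A}=\hat{A}$, which makes $\hat{A}\cdot(\text{the }\hat{A}^g\text{ terms})$ collapse to the "right" quantities; and the well-posedness of the interconnection — i.e.\ that $[I - z\hat{A}\cdot(\text{loop gain})]$ is invertible as a proper rational matrix — is what guarantees the loop equations have a (unique, proper) solution at all. I would organize this as: (i) write the closed-loop transform equations; (ii) solve them; (iii) recognize the solution as $F[z]$, $G[z]$, $\overline{X}[z]$ from the earlier theorem, invoking its uniqueness to conclude the realization is \emph{the} model-consistent mechanism.

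The main obstacle I expect is the realization verification, specifically handling the generalized inverse cleanly. Because $\hat{A}$ may be singular, $\hat{A}^g$ is not unique and $\hat{A}^g\hat{A}\ne I$ in general, so I cannot simply "cancel"; I have to arrange every manipulation so that $\hat{A}^g$ appears only in the combination $\hat{A}\hat{A}^g\hat{A}$, or sandwiched against vectors lying in $\mathrm{Im}\,\hat{A}$ (which is where weak consistency re-enters — it guarantees $\hat{x}_{1,-1}-Ax_{-1}-BRu_{-1}=\hat{A}v$ for some $v$, so $\hat{A}\hat{A}^g$ times it equals it). A secondary subtlety is the interplay of the two "cancellations" flagged in the remarks after Corollary~\ref{zscor} and after Proposition~\ref{ziprop}: the point of the feedback form is precisely that the $[I-Az^{-1}]^{-1}$ that previously had to be cancelled "by hand" now arises structurally from~(\ref{stateqn}) inside the loop, so I should make that explicit in step (ii) rather than treating it as a formal cancellation. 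Once the bookkeeping with $\hat{A}^g$ and $\mathrm{Im}\,\hat{A}$ is set up correctly, the rest is routine transform algebra together with appeals to the uniqueness already proved, so the whole argument — per the paper's own remark — belongs in appendix~\ref{wellposedproofs}.
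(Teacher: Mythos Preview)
Your overall plan matches the paper's proof in Appendix~\ref{wellposedproofs}: reduce existence and uniqueness to Theorem~\ref{exun} by checking (a) $F[z]$ is proper and (b) weak consistency implies consistency, then handle the realization. Your treatment of the realization --- verifying the law by closing the loop in the transform domain and matching against $F[z]$, $G[z]$, $\overline{X}[z]$ --- is essentially dual to the paper's, which instead derives the law from the equivalent system $\tilde{F}_t=A\tilde{G}_t+\hat{A}\tilde{F}_{t+1}+BR^{t+1}$, $\tilde{G}_t=A\tilde{G}_{t-1}+\hat{A}\tilde{F}_t+BR^t$ and reads off the feedback form directly. Your remarks about $\hat{A}^g$ appearing only as $\hat{A}\hat{A}^g\hat{A}$ or against vectors in $\mathrm{Im}\,\hat{A}$ are exactly the right bookkeeping.

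There is, however, a genuine error in step~(a). You assert that well-posedness is ``equivalently $[z^2\hat{A}-zI+A]^{-1}z^2$ is proper.'' This is false when $\hat{A}$ is singular: with $n=2$, $A=0$, $\hat{A}=\mathrm{diag}(1,0)$, the inverse $[z^2\hat{A}-zI]^{-1}=\mathrm{diag}\bigl((z^2-z)^{-1},\,-z^{-1}\bigr)$ is strictly proper, yet its product with $z^2$ has $(2,2)$ entry $-z$. The paper's Lemma~\ref{propercondn} gives the correct equivalents: well-posedness means $[z^2\hat{A}-zI+A]^{-1}z$ is proper and $[z^2\hat{A}-zI+A]^{-1}z^2\hat{A}$ is proper. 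What rescues the argument is that the $z^2$ coefficient of the numerator of $F[z]$ is $\hat{A}F_0$ --- not $\hat{A}(\hat{A}F_0+B)$ as you wrote --- so it already carries a left factor of $\hat{A}$. The paper therefore splits $F[z]$ as $[z^2\hat{A}-zI+A]^{-1}z^2\hat{A}\cdot F_0$ plus $[z^2\hat{A}-zI+A]^{-1}$ times a polynomial of degree~$1$, each piece proper by the lemma. The same trick handles~(b): subtract $[z^2\hat{A}-zI+A]\hat{x}_{1,-1}$ from the numerator of $\overline{X}[z]$ so the leading surviving term is $z(\hat{x}_{1,-1}-Ax_{-1}-BRu_{-1})\in z\cdot\mathrm{Im}\,\hat{A}$, whence $[z^2\hat{A}-zI+A]^{-1}z\hat{A}$ strictly proper finishes it --- no limit computation or $\hat{A}^g$ is needed here.
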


\section{Conventional determinacy of a New Keynesian model}  
\label{example}

In this section it is shown -- purely for purposes of comparison -- 
how the general solution of the previous section lends itself to the reproduction of conventional results.  

Consider for example the loglinearized New Keynesian DSGE model of \citet{RePEc:aea:aecrev:v:94:y:2004:i:1:p:190-217}:  
\begin{align*}
y_t  & = \hat{y}_{1,t} - \tau (r_t - \hat{\pi}_{1,t}) + g_t \numberthis \label{newis}\\
\pi_t & = \beta \hat{\pi}_{1,t} + \kappa (y_t - z_t) \numberthis \label{newpc}\\
r_t  & = \rho_r r_{t-1} + (1 - \rho_r) (\psi_1 \pi_t + \psi_2 [y_t - z_t]) + \epsilon_{r,t} \numberthis \label{newtr}\\
g_t & = \rho_s g_{t-1} + \epsilon_{g,t} \numberthis \label{outputshock}\\
z_t & = \rho_z z_{t-1} + \epsilon_{z,t} \numberthis \label{natoutput}
\end{align*} 
The scalar variables $\/y_t\/$, $\/\pi_t\/$,  and $\/r_t\/$ respectively represent output, inflation, and the nominal interest rate, expressed as percentage deviations from a trend path or a steady state; $\/g_t\/$ and $\/z_t\/$ represent the effects of exogenous shifts on the first two equations.  
In accordance with our usual notation, $\/\hat{y}_{1,t}\/$ and $\/\hat{\pi}_{1,t}\/$ represent forecasts of $\/y_{t+1}\/$ and $\/\pi_{t+1}\/$, dated at time $\/t\/$.  
We shall write $\/w_t = [\epsilon_{g,t} \ \epsilon_{z,t} \ \epsilon_{r,t}]^\prime\/$, and consider distinct values of the resulting vector-valued sequence to be independent, identically distributed, and zero-mean.  

The scalar coefficients are as follows:  $\/\tau\/$ represents intertemporal substitution elasticity, $\/\beta\/$ is the households' discount factor, $\/\kappa\/$ is the slope of the expectational Phillips curve; the third equation describes the monetary authority's behavior, $\/\psi_1\/$ and $\/\psi_2\/$ being `Taylor-rule' coefficients.  

Letting $\/x_t = [y_t \ \pi_t \ r_t]^\prime\/$, $\/u_t = [g_t \ z_t \ \epsilon_{r,t}]^\prime\/$, and $\/w_t = [\epsilon_{g,t} \ \epsilon_{z,t} \ \epsilon_{r,t} ]^\prime\/$, it is an easy matter to put the equations into the form~(\ref{stateqn},\ref{inputeqn}).  
Setting the coefficients equal to the mean values given in Table 1 of \citep{RePEc:aea:aecrev:v:94:y:2004:i:1:p:190-217}, with $\/\beta = 0.99\/$, according to \citep{RePEc:eee:dyncon:v:28:y:2003:i:2:p:273-285}, one finds  
\begin{align*} 
x_t  = & 
\begin{bmatrix} 
    0   &  \ \ 0 & -0.2083333 \\
    0   &  \ \ 0 & -0.1041667 \\
    0   &  \ \ 0 &  \ \ 0.4166667
\end{bmatrix} 
x_{t-1} 
+ 
\begin{bmatrix}
      0.8333333  &  0.1897917 & \ 0  \\
      0.4166667  &  1.0848958 & \ 0  \\
      0.3333333  &  0.6204167 & \ 0 
\end{bmatrix}
\hat{x}_{1,t} \\
& \hspace*{7em} +
\begin{bmatrix}
   0.8333333 & \ \ 0.1666667 & -0.4166667  \\
   0.4166667 & -0.4166667 & -0.2083333 \\
   0.3333333 & \ \ 0.3333333 & \ \ 0.8333333    
\end{bmatrix} 
u_t \ , 
\numberthis \label{NewKstate} \\
u_t  = & \begin{bmatrix}
               0.7 & 0 & 0  \\
               0  & 0.7 & 0 \\
               0 & 0 & 0    
             \end{bmatrix} 
u_{t-1}
+ w_t  \ .
\numberthis \label{NewKinput} 
\end{align*} 
(the respective matrix coefficients being the values of $\/A\/$, $\/\hat{A}\/$, $\/B\/$, and $\/R\/$).  
This model satisfies well-posedness, because all nine entries of $\/[z^2 \hat{A} - zI + A]^{-1}\/$ are strictly proper, 
so a model-consistent forecasting mechanism exists for every possible value of $\/\hat{A} F_0\/$.  

According to the chosen values of the coefficients, $\/\psi_1\/$, the coefficient of inflation in the interest-rate policy `Taylor rule,' has the value $\/1.10\/$.  
If $\/\psi_1 > 1\/$, then when inflation rises, the monetary authority policy raises the interest rate by a greater percentage (all other things being equal):  
such a policy is said to be {\em active\/}; if $\/\psi_1 < 1\/$, policy is called {\em passive\/}.  
With $\/\psi_1 = 0.90\/$, for example, the general solution has only one unstable eigenvalue, while for $\/\psi_1 = 1.10\/$, it has two.\footnote{Throughout the paper, computations were performed in Scilab (version 5.5.2, for Mac OS X).}    
In the conventional approach to rational expectations, the model is found to be ``indeterminate'' in the passive case, but ``determinate'' in the active case:  
only when there are two unstable eigenvalues does their suppression consume sufficient degrees of freedom to yield a unique solution.  
Specifically, when $\/\psi_1 = 1.10\/$, the general solution has unstable eigenvalues at $\/z = 1.4461829\/$ and $\/z = 1.0446352\/$.  
The corresponding left eigenvectors of the denominator polynomial are respectively,\footnote{The results of these and related calculations are displayed with seven or eight significant digits, as a reminder that the calculations really require infinite precision.}   
\begin{align*}
\begin{bmatrix}
-0.5818587  & 0.6738827 & -0.4553268 
\end{bmatrix} 
\& 
\begin{bmatrix}
-0.0473748 & 0.6928388 & 0.7195346
\end{bmatrix} \ .  
\end{align*}
On the other hand, if $\/\psi_1 = 0.9\/$, and policy is ``passive,'' then the model has only one unstable eigenvalue.  

Again, the imposition of stability is carried out here strictly for the purpose of comparison.  
Following \citet{RePEc:aea:aecrev:v:94:y:2004:i:1:p:190-217}, assume that all initial conditions are zero-valued, and therefore focus on the zero-state response.  
Because the matrix $\/R\/$ is stable, it suffices to consider the matrix polynomial $\/G[z]\/$.  
The free parameter $\/\hat{A} F_0\/$ appears only in the numerator of the matrix-fraction description of $\/G[z]\/$, so 
the only way its value can be chosen so as to stabilize an otherwise unstable model is by arranging for unstable `poles' of $\/G[z]\/$ to be 
canceled by `zeros.'  
In the multivariable case, this means that the numerator matrix polynomial of $\/G[z]\/$ must have the same unstable eigenvalues, 
with the same respective left eigenvectors, as the denominator matrix polynomial.  

Note that $\/\lambda_i \in \mathbb{C}\/$ is an eigenvalue of the numerator matrix polynomial of $\/G[z]\/$, with left eigenvector $\/c_i\/$, for both $\/i = 1\/$ and $\/i = 2\/$, if and only if 
\begin{align*}
\begin{bmatrix} c_1 \\ c_2 \end{bmatrix} \hat{A} (\hat{A} F_0 + B)
& = 
\begin{bmatrix}
c_1 B (\lambda_1 I - R)^{-1} \\
c_2 B (\lambda_2 I - R)^{-1}
\end{bmatrix}
\end{align*} 
Each row of the above equation represents three equations, each in a distinct pair of unknowns (the entries from the first two rows of a distinct column of $\/\hat{A} F_0 + B\/$).  
So a single pole-zero cancellation does not determine a unique solution, but the two simultaneous cancellations do.  
The first two columns of 
\begin{align*}
\begin{bmatrix} c_1 \\ c_2 \end{bmatrix} \hat{A} 
\end{align*} 
are linearly independent, yielding a unique solution for the first two rows of $\/\hat{A} F_0 + B\/$; approximately, 
\begin{align*}
\begin{bmatrix}
1.6999275 & \ 0.4900217 &  -0.6182074 \\
1.85166     & -0.5554980 & -0.4620143 
\end{bmatrix} \ .  
\end{align*}
This in turn yields the first two rows of $\/F_0\/$, 
\begin{align*}
\begin{bmatrix}
 0.8094723    &  \quad 0.4571583  & - 0.2066718  \\
    1.0118144  & - 0.3035443  & - 0.1544551 
\end{bmatrix} \ .
\end{align*}
The values of the first two rows of $\/F_0\/$ determine a unique value for $\/\hat{A} F_0\/$ -- approximately, 
\begin{align*}
\begin{bmatrix}
    0.8665942    & \quad 0.3233551   &  - 0.2015408  \\
    1.4349934    & - 0.1388313  & - 0.2536809  \\
    0.8975706    & - 0.0359379  & - 0.1647171  
\end{bmatrix} \ .  
\end{align*}
So the requirement of dynamical stability of the model determines a unique value of $\/\hat{A} F_0\/$, and therefore, 
by Theorem~\ref{exun}, a unique model-consistent forecasting mechanism.  

The resulting matrix $\/G[z]\/$ is the following:  
\begin{align*}
(z - 0.334)^{-1} 
\begin{bmatrix}
1.700 z - 0.949  &  \  \ \ \ 0.490z  - 0.0497  &   - 0.618 z  \\
1.852 z - 0.903     &  - 0.555 z  +  0.271 &   - 0.462 z   \\
1.231 z  \, \ \ \ \ \ \ \ \  \ \  &  - 0.369 z \, \ \ \ \ \ \ \ \ \ \ &    \ \ \ 0.669 z 
\end{bmatrix}
\end{align*} 
It can be realized in the form of the following state-space model:\footnote{In this instance, by means of the Scilab command {\tt tf2ss}.}
\begin{align*}
\zeta_{t+1} & = 0.3343081\ \zeta_t + \begin{bmatrix}   0.8815320  & - 0.2644596    & 0.4788405 \end{bmatrix} u_t \ , \\
x_t & = \begin{bmatrix}  - 0.4316088\\  - 0.3225607 \\ 0.4668023 \end{bmatrix} \zeta_t + 
\begin{bmatrix}  
 1.6999275    & \quad 0.4900217  & - 0.6182074  \\
    1.85166     & - 0.5554980          & - 0.4620143  \\
    1.230904   & - 0.3692712          & \quad 0.6686162 
\end{bmatrix} u_t\ .  
\end{align*} 
Whereas a minimal state-space realization of $\/G[z]\/$ is normally third-order (see section~\ref{leastsquares} for an example), 
at this isolated point in the parameter space of $\/\hat{A}F_0\/$ it is first-order.  
This confirms the two pole-zero cancellations, and reproduces the results of the conventional approach to determinacy.    

In order for the two unstable eigenvalues to be suppressed in this manner, the initial values of the impulse responses of the 
forecasts of output and inflation to the three shocks must equal, exactly, the values given (approximately) by the 
respective entries of the first two rows of $\/F_0\/$ given above:  unless the aggregate forecasts of the public respond to the respective shocks 
in this manner, with infinite precision, the model will be unstable.\footnote{In double-precision floating-point, it takes only a multiplicative perturbation 
of $\/F_0\/$ on the order of $\/1 \pm10^{-9}\/$ to spoil the pole-zero cancellation.}  
The complete lack of robustness of this cancellation makes it untenable that it models any real-world phenomenon.  

Moreover, the next section shows that 
the effect of this suppression is precisely to eliminate the eigenvalues (and the dynamics) that 
arise from expectations.  

\section{Rational expectations and stability} 
\label{stabilitysubs}

The relationship between expectations and stability is a longstanding concern~\citep{AN:exp_stability}.  
This section presents an asymptotic analysis of the effects of expectation terms on dynamics.  

If a `small' scalar multiplicative weight $\/\epsilon\/$ is attached to the matrix coefficient $\/\hat{A}\/$ 
of the forecast term, then unless that matrix is nilpotent,  the denominator matrix $\/[z^2 \epsilon \hat{A} - zI + A]\/$ 
of the overall model is a singular perturbation of that of the lower-order model with $\/\epsilon = 0\/$:  
the degree of the denominator polynomial is a fixed integer greater than $\/n\/$ for all $\/\epsilon > 0\/$;  
but for $\/\epsilon = 0\/$ it is equal to $\/n\/$.  

It turns out that the overall model is always unstable when that weight is positive but sufficiently small.  
Indeed, if $\/z\/$ remains bounded as $\/\epsilon\/$ tends to zero, then the denominator polynomial tends to $\/- [zI - A]\/$, 
so $\/n\/$ of the eigenvalues tend toward those of the matrix $\/A\/$.  
But if $\/\hat{A}\/$ has nonzero eigenvalues, then the denominator polynomial has more than $\/n\/$ finite eigenvalues:   
in order for the degree to drop at $\/\epsilon = 0\/$, some of those finite eigenvalues must `escape' to infinity.  

Indeed, suppose that $\/\hat{A}\/$ has $\/m > 0\/$ nonzero eigenvalues.  
To capture the behavior of eigenvalues that vary like $\/1/\epsilon\/$ 
as $\/\epsilon\/$ tends to zero, 
perform the change of variable $\/z = \lambda/\epsilon\/$.  
The matrix polynomial $\/[z^2 \epsilon \hat{A} - z I + A]\/$ becomes 
\begin{align*}
[\frac{\lambda^2}{\epsilon} \hat{A} - \frac{\lambda}{\epsilon} I + A] & = \epsilon^{-1}[\lambda^2 \hat{A} - \lambda I + \epsilon A] \ .  
\end{align*}
As $\/\epsilon\/$ tends to zero, $\/m\/$ eigenvalues of the matrix polynomial in $\/\lambda\/$ 
approach the reciprocals of the nonzero eigenvalues of $\/\hat{A}\/$.  
Therefore, the moduli of $\/m\/$ eigenvalues of the matrix polynomial in $\/z = \lambda/\epsilon\/$ tend to infinity.  

On the other hand, if $\/\hat{A}\/$ is nonsingular,\footnote{Eigenvalues at infinity complicate the establishment of upper bounds for the finite eigenvalues \citep{HIGHAM20035}.} 
and the weight applied to expectations is sufficiently {\em large\/}, then the eigenvalues of the matrix polynomial in $\/z\/$ will be stable:  

\begin{corollary}
\label{stability}
\mbox{ }\\
{\bf Small expectation gain:}  In equation~(\ref{stateqn}) above, replace the coefficient matrix $\/\hat{A}\/$ with 
$\/\epsilon \hat{A}\/$, where $\/\epsilon\/$ is a real, positive scalar.  
Suppose that $\/\hat{A}\/$ has some nonzero eigenvalue.  
Then, for sufficiently small $\/\epsilon > 0\/$, the denominator matrix polynomial 
$\/[z^2 \epsilon \hat{A} - z I + A]\/$ has unstable eigenvalues; 
consequently, barring pole-zero cancellations, the full model~(\ref{stateqn}-\ref{forecasts}) is dynamically unstable 
under model-consistent expectations.  

\mbox{ }\\
{\bf Large expectation gain:} On the other hand, suppose that $\/\hat{A}\/$ is nonsingular.  
Then the modulus 
of any eigenvalue of $\/[z^2 \hat{A} - z I + A]\/$ 
is at most 
\begin{align*}
\frac{1 + \sqrt{1 + 4 \lVert \hat{A}^{-1}\rVert^{-1} \lVert A \rVert}}{2 \lVert \hat{A}^{-1} \rVert^{-1}}\ .  
\end{align*} 
where $\/\lVert \ldots \rVert\/$ denotes any subordinate matrix norm.  
\end{corollary}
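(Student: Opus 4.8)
The plan is to dispose of the two halves separately, each by reducing to elementary facts about the roots of the scalar polynomial obtained from a determinant.

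\textbf{Small expectation gain.}  I would build on the substitution $z = \lambda/\epsilon$ flagged just before the statement.  Setting $p_\epsilon(\lambda) := \det(\lambda^2 \hat{A} - \lambda I + \epsilon A)$, the identity
\begin{align*}
z^2 \epsilon \hat{A} - z I + A = \epsilon^{-1}\bigl(\lambda^2 \hat{A} - \lambda I + \epsilon A\bigr)\Big|_{\lambda = \epsilon z}
\end{align*}
shows that $z$ is an eigenvalue of $[z^2 \epsilon \hat{A} - z I + A]$ exactly when $p_\epsilon(\epsilon z) = 0$.  At $\epsilon = 0$ one has $p_0(\lambda) = \det(\lambda^2 \hat{A} - \lambda I) = \lambda^n \det(\lambda \hat{A} - I)$; the factor $\det(\lambda \hat{A} - I)$ equals $(-1)^n \neq 0$ at $\lambda = 0$, and its roots are precisely the reciprocals of the nonzero eigenvalues of $\hat{A}$.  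By hypothesis $\hat{A}$ has some nonzero eigenvalue $\mu$, so $p_0$ has the nonzero root $1/\mu$ and in particular $p_0 \not\equiv 0$.  Since the coefficients of $p_\epsilon$, viewed as a polynomial in $\lambda$, depend polynomially (hence continuously) on $\epsilon$, the standard fact that the bounded roots of a polynomial converge, with multiplicity, to the roots of the coefficient-wise limit supplies, for all sufficiently small $\epsilon > 0$, a root $\lambda(\epsilon)$ of $p_\epsilon$ with $\lambda(\epsilon) \to 1/\mu \neq 0$ (and incidentally shows $p_\epsilon \not\equiv 0$, so the perturbed model remains regular).  The corresponding eigenvalue $z(\epsilon) = \lambda(\epsilon)/\epsilon$ then satisfies $|z(\epsilon)| = |\lambda(\epsilon)|/\epsilon \to \infty$, so $|z(\epsilon)| > 1$ once $\epsilon$ is small, giving the required unstable eigenvalue.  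The closing clause follows from Theorem~\ref{exun}: with $\hat{A}$ replaced by $\epsilon\hat{A}$, the rational matrices $G[z]$ and $\overline{X}[z]$ there share the denominator $[z^2 \epsilon \hat{A} - z I + A]$, so an eigenvalue of modulus exceeding $1$ is a pole of the full model, and hence a source of dynamical instability, unless it is annihilated by a zero of the corresponding numerator.

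\textbf{Large expectation gain.}  Here I would argue directly from a right eigenvector.  If $z$ is an eigenvalue of $[z^2 \hat{A} - z I + A]$, choose $v \neq 0$ with $(z^2 \hat{A} - z I + A)v = 0$, i.e.\ $z^2 \hat{A} v = z v - A v$.  Taking any subordinate matrix norm and using the triangle inequality and submultiplicativity,
\begin{align*}
|z|^2 \, \|\hat{A} v\| = \|z v - A v\| \leq \bigl(|z| + \|A\|\bigr)\|v\|\ .
\end{align*}
Since $\hat{A}$ is nonsingular, $\|v\| = \|\hat{A}^{-1}\hat{A} v\| \leq \|\hat{A}^{-1}\|\,\|\hat{A} v\|$, that is, $\|\hat{A} v\| \geq \|\hat{A}^{-1}\|^{-1}\|v\|$; substituting and cancelling $\|v\| > 0$ gives
\begin{align*}
\|\hat{A}^{-1}\|^{-1} |z|^2 - |z| - \|A\| \leq 0\ .
\end{align*}
Solving the quadratic $\|\hat{A}^{-1}\|^{-1} t^2 - t - \|A\| = 0$ for its positive root yields exactly the displayed upper bound on $|z|$.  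A remark would add that replacing $\hat{A}$ by $\epsilon\hat{A}$ replaces $\|\hat{A}^{-1}\|^{-1}$ by $\epsilon\|\hat{A}^{-1}\|^{-1}$, so the bound tends to $0$ as $\epsilon \to \infty$; hence under a sufficiently large expectation gain all eigenvalues lie strictly inside the unit circle.

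\textbf{Main obstacle.}  The large-gain half is essentially routine once one spots the rearrangement through $\hat{A}^{-1}$.  The delicate point is in the small-gain half: making precise which roots of $p_\epsilon$ stay bounded and which escape, and in particular confirming that at least one bounded root genuinely converges to the nonzero number $1/\mu$ — rather than drifting to $0$ or to $\infty$ — in the case where $\hat{A}$ is singular and $p_0$ has strictly lower degree than $p_\epsilon$.  This is exactly the place where the continuity-of-roots statement applied to the coefficient-wise limit $p_\epsilon \to p_0$ must be invoked carefully rather than loosely, and I would state it (with the multiplicity bookkeeping) explicitly.
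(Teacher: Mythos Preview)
Your proposal is correct.  The paper itself proves this corollary by pure citation: the small-gain half is attributed to Corollary~1 of \citet{ABG2004} (with the informal change-of-variable argument $z=\lambda/\epsilon$ sketched in the text preceding the statement), and the large-gain bound to Lemma~3.1 of \citet{HIGHAM20035}.  What you have written is, in effect, a self-contained reconstruction of those two cited results.  For the small-gain half you make rigorous exactly the heuristic the paper already outlines, supplying the Hurwitz/Rouch\'e-type root-continuity statement that the paper delegates to the reference; your care about the possible drop in degree of $p_\epsilon$ when $\hat{A}$ is singular is well placed and is precisely the content of the cited perturbation result.  For the large-gain half your eigenvector argument is the standard proof of the Higham--Tisseur bound, so you are reproducing the cited lemma rather than taking a different route.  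The net effect is that your write-up is more self-contained than the paper's, at the cost of a paragraph where the paper spends two sentences; mathematically the two are the same.
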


\begin{proof}
The ``small-gain'' result follows from Corollary 1 of \citet{ABG2004}, incorporated 
into a comprehensive theory in \citep{AGM2014a}.  
The upper bound for the case of nonsingular $\/\hat{A}\/$ is from Lemma 3.1 of \citet{HIGHAM20035}.  \qed 
\end{proof}

The above analysis shows that the unstable pole-zero cancellation required to eliminate unstable eigenvalues of the New Keynesian model 
(of section~\ref{example}) under ``active'' policy 
would have the effect of obliterating the dynamical features particularly associated with expectations terms.  
Indeed, the eigenvalues of $\/[z^2 \epsilon \hat{A} - z I + A]\/$ vary continuously with the parameter $\/\epsilon\/$, 
and letting $\/\epsilon\/$ range from one to zero, numerical computation shows that it is precisely 
the two eigenvalues that are unstable under active policy that tend to infinity as $\/\epsilon\/$ tends to zero (moving to the right along the real axis), 
while the other eigenvalues tend toward those of $\/A\/$.  
Now consider reversing the process, starting with $\/\epsilon = 0\/$ and therefore no expectation terms in the model,  
and gradually increasing $\/\epsilon\/$ to $\/1\/$ to restore the expectations.  
The eigenvalues of $\/A\/$ are $\/0\/$, $\/0\/$, and $\/0.417\/$; 
as $\/\epsilon\/$ increases from $\/0\/$ to one, the two eigenvalues at the origin remain fixed, while the third 
shifts from $\/0.417\/$ to $\/0.334\/$.  
One effect of expectations is therefore to produce a modest shift in the nonzero eigenvalue of $\/A\/$.  
The more important dynamical effect of expectations, then, is to bring into being two additional 
modes corresponding to the finite eigenvalues $\/1.045\/$ and $\/1.04\/$.  
However, these are precisely the modes that are suppressed in order to select a unique model-consistent forecasting mechanism 
under the conventional approach to rational expectations.\footnote{The suppression also cancels ``zero dynamics'' of the model, which can explain behavior such as ``price puzzles'' \citep{TM:nofreelunch}.}   
That traditional approach to rational expectations therefore destroys the very dynamical features that arise from expectations.  

The next section presents one method of determining a unique model-consistent forecasting mechanism without resorting to pole-zero cancellation.  
Indeed, it yields a unique solution, regardless of dynamical stability or instability.  

\section{The least-square-error solution}  
\label{leastsquares}

This section proposes a unique model-consistent solution that, at least superficially, seems a natural extension of rational expectations.  
It applies in all cases (regardless of dynamical stability or instability), and does not generally entail pole-zero cancellation.  
In addition to the requirement that forecast errors be zero-mean, it calls for 
all forecast errors be minimized, in the least-squares sense:  
the key to the approach is that a unique value of $\/\hat{A}F_0\/$ will achieve such minimization.  

Specifically, let $\/e_t\/$ denote $\/(\hat{A} F_0 + B) w_t\/$, the forecast error realized at time $\/t\/$, under model-consistent expectations.  
Then it is assumed that $\/\hat{A} F_0\/$ is such that, given the sequence of the $\/w_t\/$, every 
squared-error term $\/e_t^\prime e_t\/$ is minimized.  

This criterion is not proposed as a panacea.  
On the contrary, it has features that may be undesirable, at least in some contexts.  
Because $\/G_0 = \hat{A} F_0 + B\/$, error minimization amounts to a minimization (in the least-squares sense) 
of the immediate response of the endogenous variables in $\/x_t\/$ to shocks.  
It implies that $\/\hat{A}F_0\/$ is `selected' in such a way that the immediate response of the model to shocks via expectations 
$\/\hat{x}_{1,t}\/$ at least partially cancels its immediate response to shocks via the exogenous variables $\/u_t\/$.  
Partial `blocking' of the effect of shocks by expectation terms is a fixture of the pole-zero cancellation employed in the traditional solution 
of rational-expectations models, but that does not imply that it is realistic.  
If the column span of $\/\hat{A}\/$ contains that of $\/B\/$, then 
$\/\hat{A}F_0\/$ can be chosen so that $\/\hat{A} F_0 + B\/$ is zero, and the immediate effects of shocks are fully blocked:  
the model will then exhibit `perfect foresight,' or `self-fulfilling expectations,'  and the minimum 
squared-error terms will of course all be zero.  
In particular, the above will apply whenever $\/\hat{A}\/$ is nonsingular -- as it must be in the scalar case, for instance.    
Another consequence of the vanishing of $\/\hat{A}F_0 + B\/$ is that $\/G_0\/$, 
the initial value of the impulse response of the 
overall model vanishes, and $\/G[z]\/$ is strictly proper.  

But while this new assumption clearly represents a significant strengthening of the usual rational-expectations hypothesis, it is in a similar spirit; 
and in comparison with the infinite-precision, unstable pole-zero cancellations that are usually assumed in conjunction with rational expectations, 
it is arguably relatively mild.  
It preserves the key feature that expectations are policy-dependent, if, as in the 
example of section~\ref{example}, the policy parameters are reflected in the matrix coefficients of the model.  
But it does not require specific stability properties of the model, 
and it generally avoids the feature that the unique solution has reduced-order dynamics, owing to pole-zero cancellation.  

To define the unique solution, note that each column $\/b\/$ of the matrix $\/B\/$ can be uniquely decomposed into $\/b_\parallel + b_\perp\/$, 
where $\/b_\parallel\/$ is its projection onto the column span of $\/\hat{A}\/$, and $\/b_\perp\/$ is orthogonal to that vector space.  
Let $\/B_\parallel\/$ consist of the projections $\/b_\parallel\/$ of the respective columns of $\/B\/$, and $\/B_\perp\/$ of the 
respective orthogonal components $\/b_\perp\/$.  
Then, for any $\/t \geq 0\/$, 
\begin{align*}
e_t^\prime e_t  & = w_t^\prime (\hat{A} F_0 + B)^\prime (\hat{A} F_0 + B) w_t \\
                         & = w_t^\prime (\hat{A} F_0) + B_\parallel + B_\perp)^\prime (\hat{A} F_0  + B_\parallel + B_\perp) w_t \\
                         & = w_t^\prime (\hat{A} F_0 + B_\parallel)^\prime (\hat{A} F_0 + B_\parallel) w_t + w_t^\prime B_\perp^\prime B_\perp w_t \ .  
\end{align*} 
Because this is a sum of nonnegative quantities, its minimum value over all possible $\/\hat{A}F_0\/$ must be at least 
$\/w_t^\prime B_\perp^\prime B_\perp w_t\/$.  

Now, for any column $\/b\/$ of $\/B\/$, the corresponding column $\/f_0\/$ of $\/F_0\/$ can be chosen (not necessarily uniquely) 
so that $\/\hat{A} f_0 = - b_\parallel\/$.  
Let $\/F_0\/$ be composed of such columns.  
Then $\/\hat{A} F_0 = - B_\parallel\/$, and; for any $\/t \geq 0\/$, 
\begin{align*}
e_t^\prime e_t  & = w_t^\prime B_\perp^\prime B_\perp w_t \ .  
\end{align*} 
This unique value $\/\hat{A} F_0 = - B_\parallel\/$ therefore achieves the minimum squared error, for any given $\/w_t\/$.  

The foregoing discussion establishes the following 
\begin{corollary}
\label{leastsquarederror}
Suppose that the model~(\ref{stateqn},\ref{inputeqn}) is regular.  
Let $\/\hat{A} F_0 = - B_\parallel\/$.  
Suppose that the resulting $\/F[z]\/$ is proper and the initial conditions are consistent.  
Then the forecasting mechanism given by Theorem~\ref{exun} is the unique 
model-consistent forecasting mechanism that gives rise to least-square forecasting errors.  
By Theorem~\ref{feedbackthm}, a weaker sufficient condition is that the model be well-posed and 
the initial conditions weakly consistent.  
\end{corollary}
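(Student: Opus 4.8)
The plan is to invoke Theorem~\ref{exun} and the explicit characterization of model-consistent forecasting mechanisms already in hand, and simply verify that the particular choice $\/\hat{A}F_0 = -B_\parallel\/$ singles out the unique least-square-error solution. First I would recall the forecast-error formula from Theorem~\ref{exun}: under any model-consistent forecasting mechanism parameterized by $\/\hat{A}F_0\/$, the error realized at time $\/t\/$ is $\/e_t = (\hat{A}F_0 + B) w_t\/$. The orthogonal decomposition $\/B = B_\parallel + B_\perp\/$ with $\/B_\parallel\/$ the column-by-column projection onto $\/\operatorname{Im}\hat{A}\/$ is exactly the device that lets one split $\/e_t' e_t\/$; the preceding discussion in this section has already carried out that calculation, giving the identity $\/e_t' e_t = w_t'(\hat{A}F_0 + B_\parallel)'(\hat{A}F_0 + B_\parallel) w_t + w_t' B_\perp' B_\perp w_t\/$, whose second summand is independent of $\/\hat{A}F_0\/$. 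So the minimization over admissible $\/\hat{A}F_0\/$ reduces to driving the first summand to zero, which is achieved precisely by $\/\hat{A}F_0 = -B_\parallel\/$ (this value is attainable because each column of $\/B_\parallel\/$ lies in $\/\operatorname{Im}\hat{A}\/$, so one can pick a corresponding column of $\/F_0\/$).

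The next step is to show that the value $\/\hat{A}F_0 = -B_\parallel\/$ is \emph{uniquely} error-minimizing, not merely one minimizer. Here I would argue pointwise in $\/w_t\/$: the criterion as stated requires that \emph{every} squared-error term $\/e_t' e_t\/$ be minimized given the sequence of the $\/w_t\/$, so for each $\/t\/$ one needs $\/w_t'(\hat{A}F_0 + B_\parallel)'(\hat{A}F_0 + B_\parallel) w_t = 0\/$. Since this must hold for all realizations of $\/w_t\/$ — and since the $\/w_t\/$ are not confined to a lower-dimensional subspace (they range over $\/\mathbb{R}^{m\times 1}\/$ with finite but nondegenerate variance) — we conclude $\/(\hat{A}F_0 + B_\parallel)'(\hat{A}F_0 + B_\parallel) = 0\/$, hence $\/\hat{A}F_0 + B_\parallel = 0\/$. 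Thus the \emph{product} $\/\hat{A}F_0\/$ is pinned down to $\/-B_\parallel\/$, even though $\/F_0\/$ itself need not be unique; and Theorem~\ref{exun} shows it is the product that matters — it alone determines $\/G_0\/$ and thence $\/F_t, G_t\/$ for all $\/t > 0\/$. Combining with the existence/uniqueness statement of Theorem~\ref{exun} (applicable since $\/F[z]\/$ is assumed proper and the initial conditions consistent) yields that the forecasting mechanism of that theorem, instantiated at $\/\hat{A}F_0 = -B_\parallel\/$, is the unique model-consistent forecasting mechanism with least-square errors. The final sentence — that well-posedness plus weak consistency is a weaker sufficient condition — is then immediate from Theorem~\ref{feedbackthm}, which guarantees existence (and properness of $\/F[z]\/$) for \emph{every} value of $\/\hat{A}F_0\/$, in particular for $\/-B_\parallel\/$.

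Most of this is bookkeeping; the one place that needs a little care — and which I would flag as the main (mild) obstacle — is the passage from ``$\/e_t' e_t\/$ is minimized for the given $\/w_t\/$'' to ``$\/\hat{A}F_0 + B_\parallel = 0\/$.'' One must be explicit that the minimization is over the \emph{parameter} $\/\hat{A}F_0\/$ of the forecasting mechanism (a modeling choice), holding the driving sequence fixed, and that because the claim is required to hold simultaneously for all $\/w_t\/$, a positive-semidefinite quadratic form that is required to vanish on all of $\/\mathbb{R}^{m\times 1}\/$ must be the zero form. If instead one only asked for minimization of, say, $\/E(e_t' e_t)\/$, the same conclusion would follow provided the covariance of $\/w_t\/$ is nonsingular; I would remark on this so the statement is not read as depending on a degenerate-noise technicality. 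Everything else — attainability of $\/-B_\parallel\/$ in $\/\operatorname{Im}\hat{A}\/$, the reduction of the criterion to a single product, the appeal to Theorems~\ref{exun} and~\ref{feedbackthm} — is routine given what precedes.
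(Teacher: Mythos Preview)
Your proposal is correct and follows essentially the same route as the paper: the corollary is stated immediately after the discussion that decomposes $B = B_\parallel + B_\perp$, expands $e_t' e_t$, and observes that $\hat{A}F_0 = -B_\parallel$ achieves the lower bound $w_t' B_\perp' B_\perp w_t$, after which Theorems~\ref{exun} and~\ref{feedbackthm} are invoked exactly as you do. Your treatment of uniqueness is in fact more explicit than the paper's --- the paper simply asserts ``this unique value $\hat{A}F_0 = -B_\parallel$'' without spelling out the ``for all $w_t$ $\Rightarrow$ zero quadratic form'' step that you supply --- so your flagged ``mild obstacle'' is a genuine clarification rather than a deviation.
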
 

To be sure, the approach outlined above requires the public, in its aggregate, 
to calibrate its immediate reaction to shocks so as to meet the requirement 
of least-square forecasting errors.  
But here, the unique solution is not as ill-conditioned or `brittle' as under the method of pole-zero cancellation:  
a small error in the value of $\/\hat{A}F_0\/$ generally means only that forecast errors will be slightly larger than necessary; whereas in the case 
of pole-zero cancellation, the slightest error means that the model will be dynamically unstable.\footnote{In fact, in instances where the least-square-error criterion calls for pole-zero cancellation, a practical means of producing a less pathological solution might be simply to apply a `small,' random perturbation to the value of $\/\hat{A}F_0\/$.}    

\subsubsection*{New Keynesian example} 

For the example of section~\ref{example}, we find that, 
\begin{align*}
- B_\parallel 
& = 
\begin{bmatrix}
  - 0.833  & \, - 0.155  &  \quad 0.322  \\
  - 0.417  & \quad 0.469  & \, - 0.209   \\
  - 0.333  & \quad 0.239  & \, - 0.075 
\end{bmatrix} \ .  
\end{align*} 
Setting $\/\hat{A}F_0 = - B_\parallel\/$, and finding a minimal state-space realization for the resulting matrix $\/F[z]\/$, we arrive at a representation of the unique, least-squared-error, model-consistent forecasting mechanism:  
\begin{align*}
\xi_{t+1} & = 
\begin{bmatrix}
   \quad 1.574  & \, - 0.937  & \, - 0.835  \\
    \, - 0.0942    & \quad 0.978    & \quad 0.285  \\
    \quad 0.271 & \quad 0.109    & \quad 0.273 
\end{bmatrix} 
\xi_t 
+ 
\begin{bmatrix}
        \, - 0.288 & \, - 1.153  &     \quad 0.370  \\
  \quad  1.003 &  \quad 0  &   \, - 0.308           \\
         \quad 0  &   \quad 0 &   \, - 0.671  
\end{bmatrix}
u_t \ , \numberthis \label{fzseqn} \\
\hat{x}_{1,t} & = 
\begin{bmatrix}
  \quad  0.488  & \, - 1.171  & \, - 0.528  \\
  \, - 0.592   & \quad 0.334   & \quad 0.437  \\
  \, - 0.349  & \, - 0.049  & \, - 0.0059  
\end{bmatrix}
\xi_t + 
\begin{bmatrix}
           \, - 1  &  \, - 0.311  & \quad 0.471 \\
   \quad  0  & \quad 0.552    & \, -0.374 \\
     -0.125   &  \quad 0.130   & \quad 0.233 
\end{bmatrix}
u_t \ .  \numberthis \label{fzoeqn}  
\end{align*} 
The second matrix coefficient of the second equation being the initial value of the system's impulse response, it equals $\/F_0\/$;   
multiplying it by $\/\hat{A}\/$ yields the above $\/-B_\parallel\/$.  
The model is well-posed, so this forecasting mechanism could be alternatively be represented as a feedback/feedforward predictor, according to Theorem~\ref{feedbackthm}.  

Coupling the forecasting mechanism to the model equation~(\ref{NewKstate}) 
yields a system that can be represented in the form of the following state-space model 
(obtained as a realization of the corresponding rational matrix $\/G[z]\/$):  
\begin{align*}
\zeta_{t+1} & = 
\begin{bmatrix}
   \quad 1.574 & \, - 0.937  &  \quad 0.835  \\
  \, - 0.094  &  \quad 0.978  & \, - 0.287  \\
  \, - 0.271  & \, - 0.109  &  \quad 0.273  
\end{bmatrix} 
\zeta_t 
+ 
\begin{bmatrix}
  \, - 0.290 & \, - 1.161 &  \quad 0.373 \\  
   \quad 1.011 & \quad 0 &  \, - 0.310  \\
    \quad 0 &  \quad 0 & \quad 0.676  
\end{bmatrix}
u_t  \ , \numberthis \label{gzseqn} \\
x_t & = 
\begin{bmatrix}
   \quad 0.275  & - 0.911  & \quad  0.128  \,  \\
  \, - 0.444  & - 0.127  & \, - 0.366 \,  \\
  \, - 0.169  & - 0.172    & \quad 0.358  \, 
\end{bmatrix}
\zeta_t + 
\begin{bmatrix}
  \quad 0 & \quad  0.0118  & \, - 0.095 \quad \\
  \quad 0 & \quad 0.0522   &\,  - 0.417 \quad \\
  \quad 0 & \, - 0.0948         & \quad 0.759 \quad 
\end{bmatrix}
u_t \ .  \numberthis \label{gzoeqn}
\end{align*} 
The coefficient $\/\hat{A}F_0 + B\/$ that determines the forecast errors is the same as $\/G_0\/$, and therefore equal to 
the second matrix coefficient of equation~(\ref{gzoeqn}).  

In contrast with the traditional solution, this one does not have a reduced dynamical order, 
and the preservation of model dynamics turns out to reveal difficulties with the formulation of the New Keynesian model.  
Indeed, the above state-space equations are at odds with the standard interpretation of key model equations.  
For instance, the New I-S equation~(\ref{newis}) is typically considered to assert that the higher the real interest rate $\/r_t - \hat{\pi}_{1,t}\/$, 
the lower output $\/y_t\/$; and by the same token, the greater $\/g_t\/$, the higher the output $\/y_t\/$.  
But the state-space equations disagree.  
Note that $\/g_t\/$ is the first component of the vector $\/u_t\/$, and 
$\/y_t\/$ is the first component of $\/x_t\/$.  
According to~(\ref{gzoeqn}), $\/g_t\/$ has no immediate effect upon any of the 
components of $\/x_t\/$; and by~(\ref{fzoeqn})  its only effect on the 
forecasts of interest is a negative effect on $\/\hat{y}_{1,t}\/$.  

But it is not only the least-squared-error solution that disagrees with the usual interpretation of the New Keynesian model.  
It is shown in the next section that, barring cancellations of eigenvalues of $\/[z^2 \hat{A} - z I + A]\/$, 
an analysis of the model eigenvalues under any model-consistent forecasting system 
conflicts with the usual interpretations of both the I-S equation and 
the expectational Phillips curve.  

\section{Stabilization of the New Keynesian model}  
\label{stabilization}

By decoupling rational expectations from dynamical stability, the methods of this paper give a clearer picture of the dynamics of macroeconomic models.  
An illustration is provided by the formulation of a stabilizing interest-rate policy for the New Keynesian model.  

If the $\/R\/$ matrix is stable, then in the absence of pole-zero cancellations, the stability of the overall model  
depends only on the eigenvalues of the matrix polynomial $\/[z^2 \hat{A} - z I + A]\/$.  
Consequently, one can play at a toy version of central banking by 
altering the Taylor-rule parameters in such a way that all of that matrix polynomial's eigenvalues become stable.  
For instance, with $\/\psi_1 = 1.10 \/$ and $\/\psi_2 = -1.50\/$, the eigenvalues are 
$\/0.81 \pm 0.045 i\/$ and $\/0.76\/$ (while their moduli can be decreased further by decreasing the value of $\/\rho_r\/$).  
These indicate a stable model, with a response that is only slightly oscillatory.  
Under the conventional approach to rational expectations, these Taylor-rule parameters would not produce a `determinate' model, 
dynamical stability being incompatible with determinacy.  But in general, there is no such difficulty:  
for instance, this stabilization procedure could be carried out in conjunction with the use of the above least-square-error criterion.  

Note that the value of the above Taylor-rule coefficient $\/\psi_2\/$ applied to output is negative, 
meaning that the higher the level of output, the lower is the policy interest rate.  
The reason for this surprising feature lies in the peculiarity of the new I-S equation.  
The equation is derived from microfoundations, and specifically from the solution of optimal planning problems 
for members of different sectors of the economy.  
Households, for example, are assumed to schedule their consumption over all time, according to expected values of 
nonlinear functions of interest rates and other variables.  
Under that scenario, the higher the expected real interest rate at time $\/t\/$, the lower consumption will indeed be at the same 
time -- moreover, the lower consumption will be at earlier time instants as well -- the better to profit from a higher rate.  

Linear model equations are then obtained by ``loglinearizing'' the first-order conditions associated with optimality -- 
a process of informal approximation that typically assumes, for instance, that expected values of nonlinear functions 
can be approximated as expected values of approximations of those functions.  
The new I-S equation is derived by identifying output with consumption in the resulting system of equations.  
Unlike the solution of the consumption planning problem, this equation of course does not imply that the higher the 
present real interest rate, the lower earlier levels of consumption.  
In fact, contrary to the usual interpretation, it is difficult to argue that it implies a 
negative relationship even with the current level of output:  
for any alteration in the current real interest rate there will almost surely be a cognate change in the 
forecast $\/\hat{y}_{1,t}\/$ of the next period's output, so it is not obvious how $\/y_t\/$ will be affected. 
Better to rearrange the equation as follows, 
\begin{align*}
\hat{y}_{1,t} - y_t  & =  \tau (r_t - \hat{\pi}_{1,t}) - g_t \ , \numberthis \label{newisrearr} 
\end{align*}
and see that, with $\/\tau > 0\/$, it means simply that, the larger the real interest rate, 
the larger the expected increase in output over the next period.  

But more can be said in the context of the overall model:  
note that the denominator matrix polynomial $\/[z^2 \hat{A} - z I + A]\/$ is exactly the same 
in the case of the zero-input response (section~\ref{ziresp}) as for the zero-state response~(\ref{zsresp}).  
In order to study model eigenvalues it therefore suffices (in the absence of pole-zero cancellations) to 
consider the zero-input response.  
But if the sequences $\/\epsilon_{r,t}\/$, $\/\epsilon_{g,t}\/$, $\/\epsilon_{z,t}\/$ are identically zero, 
then, according to the analysis of section~\ref{ziresp}, the I-S equation and the expectational Phillips curve effectively become,    
\begin{align*}
y_{t+1} & =  y_t + \tau (r_t - \pi_{t+1}) - g_t \ , \numberthis \label{newisperf} \\
\pi_{t+1} & = \beta^{-1}  \pi_t - \beta^{-1} \kappa (y_t - z_t)\ . \numberthis \label{newpcperf}
\end{align*}
The result is a perfect-foresight version of the New Keynesian model, 
that gives rise to the same eigenvalues as the general version.  
For any given initial conditions, it has a unique 
solution, according to which the values of $\/\pi_{t+1}\/$ and $\/y_{t+1}\/$ are determined by their previous values and by $\/r_{t}\/$, 
$\/g_{t}\/$ and $\/z_{t}\/$.  
It follows that the larger $\/r_t\/$, the {\em larger\/} $\/y_{t+1}\/$.  
So the greater the interest rate, the {\em greater\/} will be output at the next instant.  
Similarly, with $\/\beta\/$ essentially unity and $\/\kappa > 0\/$, 
the expectational Phillips curve~(\ref{newpc}) effectively implies that 
the higher the level of output, the {\em lower\/} the expected rate of increase of inflation over the next period.  
For use with model-consistent expectations, both equations are therefore flawed.  

In determining inflation in the full model, the two inversions essentially cancel each other, 
because the only transmission channel from the interest rate to inflation goes by way of output.  
That explains why the sign of $\/\psi_1\/$ in the above stabilizing policy rule conforms to economic intuition, while that of $\/\psi_2\/$ does not.  
On the other hand, if one crudely `corrects' the two key equations by simply changing the signs of $\/\tau\/$ and $\/\kappa\/$, 
one finds -- as expected -- that in stabilizing policy rules, both $\/\psi_1\/$ and $\/\psi_2\/$ are positive.  
Indeed, for stability, the policy reactions to both inflation and the output gap should be `active':  
with $\/\psi_2 = 1.5\/$, the model is stable provided $\/1.03 \leq \psi_1 \leq 1.49\/$; and with $\/\psi_1 = 1.03\/$, the model is stable provided $\/1.04 \leq \psi_2 \leq 1.50\/$.  
In the toy game of central banking, one might opt for the policy gains $\/\psi_1 = 1.10\/$ and $\/\psi_2 = 1.50\/$:  that would give eigenvalues at $\/0.812 \pm 0.0453 i\/$ and $\/0.763\/$; these are about as far from the unit circle as can be arranged with those two policy parameters, and they produce a response that is only slightly oscillatory.\footnote{The moduli of the eigenvalues could be reduced further by reducing the value of $\/\rho_r\/$ that serves to smooth changes in the policy interest rate.}  

In a recent, constructive critique of DSGE models, \citet{RePEc:oup:oxford:v:34:y:2018:i:1-2:p:43-54.}  
asserts that the new I-S equation (\ref{newis}) and the expectational Phillips curve~(\ref{newpc}) are ``deeply flawed.''  
The results of this section lend support to that statement, but not for all of the reasons that Blanchard cites.  
He describes rational expectations as insufficiently ``inertial.'' 
The fundamental reason for that assertion is likely the suppression of the associated dynamics that is part and parcel of the conventional application of rational expectations.  
However, that suppression is not a general feature of rational expectations, but only of the very particular approach that has held sway for the last several decades.  

\section{Related work} 
\label{related}

In seeking a way forward for DSGE models, \citet{RePEc:oup:oxford:v:34:y:2018:i:1-2:p:43-54.} asks, 
``how can we deviate from rational expectations, while keeping the notion that people and firms care about the future?''  
This paper suggests that the problem is not that rational expectations have been tried and found wanting, 
but rather that they have never been tried (with apologies to G.K.~Chesterton).       

The main reason is that the cause of nonuniqueness was never before understood, 
chiefly because earlier work did not include any explicit parameterization of the forecasting mechanism.  
\citet{RePEc:ecm:emetrp:v:45:y:1977:i:6:p:1377-85} adapted the early method of \cite{Muth:rational} to 
dynamic macroeconomic models by means of a parameterization of the overall model (see subsection~\ref{Taylor}) in the form of a Wold decomposition.  
While Taylor's formulation resembles the present one in some respects, it assumes dynamical stability, an unnecessary limitation.  
More to the point, Taylor did not explicitly parameterize the forecasting mechanism.  
\citet{RePEc:eee:moneco:v:4:y:1978:i:1:p:1-44} recognized that the key to the solution of rational-expectations models was 
to solve for the forecasting mechanism, but neither did he include such an explicit parameterization.  
The same is true of all other previous approaches;  
for recent examples, see \citep{TAN201595} and \citep{al-sadoon_2017}.  
Yet the free parameters that underlie the nonuniqueness of solutions are essentially 
parameters of the forecasting mechanism itself.  

Without insight into the true reason for nonuniqueness, 
and perhaps because of a generally perceived need for a terminal boundary condition,
stability quickly became entrenched as the primary means of attempting to select a unique solution.  
\citet{RePEc:ecm:emetrp:v:45:y:1977:i:6:p:1377-85} and \citet{RePEc:aea:aecrev:v:69:y:1979:i:2:p:114-18} 
considered alternatives, 
but -- invoking the stability constraint himself only shortly afterward -- \citet{RePEc:aea:aecrev:v:71:y:1981:i:1:p:132-43} 
referred to its application as having already become 
``a standard if not entirely convincing practice.''  
It has remained so for decades, even though far from universally applicable, 
and in spite of the recognition that it obliterates model dynamics and alters stability properties \citep{RePEc:aea:aecrev:v:94:y:2004:i:1:p:190-217}.  

Consequently, standard methods of computing rational-expectations solutions 
are needlessly bound up with the question of stability.  
It is generally assumed that an equation of the following form holds:  
\begin{align*}
\Gamma_0 y_t = \Gamma_1 y_{t-1} + \Psi v_t + \Pi \eta_t\ .  
\end{align*} 
The vector $\/y_t\/$ is made up of endogenous variables, including forecasts, 
$\/v_t\/$ is an exogenous random driving variable, 
and $\/\eta_t\/$ is a vector of zero-mean forecasting errors, 
that is determined as part of the solution process.\footnote{See, for example, \citep{RePEc:ecm:emetrp:v:48:y:1980:i:5:p:1305-11,RePEc:cup:etheor:v:13:y:1997:i:06:p:877-888_00,RePEc:eee:dyncon:v:24:y:2000:i:10:p:1405-1423,RePEc:kap:compec:v:20:y:2002:i:1-2:p:1-20,RePEc:aea:aecrev:v:94:y:2004:i:1:p:190-217}.}  
A matrix decomposition is applied, to allow separate consideration of the respective eigenvalues, 
and where unstable eigenvalues are concerned, it is arranged (if possible) for the corresponding terms in $\/v_t\/$ and $\/\eta_t\/$ to 
cancel each other, giving rise to a zero that cancels the unstable eigenvalue.  
If the corresponding calculations determine the forecasting errors uniquely, then there is a unique rational-expectations 
solution -- among those that obey the stability constraint.  
As this article proves, this is merely an arbitrary and indirect means of unknowingly modeling the public's immediate responses to shocks.  
But the approach has become so closely identified with the solution of rational-expectations models that it should be stressed that the link with 
stability is completely unnecessary -- as is the unrealistic, exact cancellation of unstable eigenvalues, and the consequent suppression of dynamics.  

\citet{RePEc:ecm:emetrp:v:45:y:1977:i:6:p:1377-85} considered, as an alternative means of ensuring uniqueness, 
the requirement that the variance of the endogenous variables be minimized.  
But  \citet{BASAR1989591} appears to have been the first to study explicitly the minimization of the variance of forecast errors.  
Ba\c{s}ar considered a univariate model with a single forecast term $\/\hat{x}_{2,t-1}\/$, 
driven by an independent, zero-mean sequence with finite variance, and assumed that the information 
set underlying forecasts consisted either of $\/x_{t-1}\/$, or of a noisy measurement of that scalar variable, 
with all random variables being Gaussian in the latter case.  
He showed that a minimum-variance rational-expectations solution could be computed as the limiting 
case of solutions to finite-horizon problems -- under an assumption that amounts to the realness of the two 
eigenvalues of the overall model.  
Specifically, the finite-horizon problems called for minimization of a discounted sum of squared forecast errors, 
over a finite time interval, and subject to a terminal boundary condition.  
Apart from the restriction on information sets, Ba\c{s}ar made no assumptions as to the form of forecasting mechanisms, 
but his solutions satisfied linear recurrences with constant coefficients.  
General solutions forTaylor's and Ba\c{s}ar's models are given by the results of appendix~\ref{multexp}, together with formulas for the forecast errors.  
In both cases, the space of model-consistent solutions can be characterized by two parameters, $\/\tilde{G}_0\/$ and $\/\tilde{G}_1\/$, of which 
the first is determined (because the models contain no unlagged expectation terms) and the second is completely free (because the models satisfy the appropriate 
generalization of the well-posedness condition).  
Forecast-error variances are positive, and are minimized by setting $\/\tilde{G}_1 = 0\/$  
-- this amounts to nulling the initial value of the impulse response of an unlagged one-step forecast.  

\section{Conclusion}
\label{conclusion}

This paper has shown that the free parameters that underlie the nonuniqueness of solutions of 
linear rational-expectations models are coefficients that capture the immediate reactions of economic 
actors to shocks.  
It has presented a general solution that has a unique instance for every appropriate specification 
of those free parameters.  
If the model is {\em well-posed\/}, and the initial conditions {\em weakly consistent\/}, 
then there exists a (unique) solution for every possible specification of the parameters.  

If, in addition, the requirement of model-consistency is augmented with that of least-square forecast errors, 
then -- for a broad class of models -- unique values of the free parameters are determined, and therefore so is a unique solution.  
This result is offered as one concrete means of ensuring uniqueness.  
It avoids the most serious drawbacks of the prevalent approach, while preserving its advantages, 
but its suitability for a given application should be considered critically in the light of the paper's main results, 
and the corresponding form of the solution that it yields.  

Indeed, the fundamental objective of the paper has been to explain the nature of the nonuniqueness of rational expectations, 
so that arbitrary, ad hoc solutions might be avoided.  
In that connection, the main finding is that model consistency is not as strong an assumption as may have 
been supposed:  it generally implies nothing about the 
economic effects of the most immediate responses of forecasts to shocks.    
To put it another way, there is no fundamental reason for any model of the most  
short-term reactions of economic agents to entail systematic forecasting errors.  
A rough paraphrase suggests itself, in the 
terminology of \citet{Kahneman:thinking}:   
rational expectations exemplify analytical, ``slow'' thinking, and as such do not 
capture the ``fast'' thinking that underlies the most immediate responses to 
economic shocks -- but clearly, any deliberate and reasoned process of expectation 
formation must necessarily take account of the economic effects of more instantaneous, reflexive actions.  
Like most loose paraphrases, the analogy probably should not be pushed too far, 
but it should be underlined that ``fast thinking'' does not necessarily imply on-the-fly 
improvisation, and could include the automatic application of a preconditioned response, 
such as might indeed be represented by the free parameters identified here.  

By dispensing with the cancellation of unstable dynamics, the results of the paper remove substantial obstacles 
to the application and the study of rational expectations.  
They greatly expand the range of models to which rational expectations can be applied, and they avoid the suppression of model dynamics.  
Consequently, they allow the examination of fundamental questions of stability and of stabilization.  
They should also simplify problems that are central to the pertinence of the 
rational expectations hypothesis, such as those of model estimation and the ``learning'' of models by 
economic agents~\citep{RePEc:eee:moneco:v:4:y:1978:i:1:p:1-44,RePEc:red:issued:13-148}.    

More generally, these new results provide a richer picture of the dynamics of macroeconomic models, 
and allow for more reasoned and realistic means ``to attribute to individuals some view of the behavior of the future values 
of variables of concern to them'' \citep{RePEc:eee:crcspp:v:1:y:1976:i::p:19-46}.  

\vfill 
\pagebreak

\bibliography{econrefs}
\bibliographystyle{econ}

\vfill 
\pagebreak 

\appendix

\section{Appendix:  a ``general'' model}  
\label{multexp}

Within the traditional rational-expectations paradigm, 
the model employed in the main body of the paper 
has been applied to cases of arbitrarily many distinct expectation terms  
through an increase of the dimension $\/n\/$ 
\citep{broze_gouriŽroux_szafarz_1995,RePEc:cup:etheor:v:13:y:1997:i:06:p:877-888_00,RePEc:eee:dyncon:v:31:y:2007:i:4:p:1376-1391}.  
The present method extends to allow models with a variety of expectation terms to be treated directly.  
A direct analysis lends insight into the central question of the paper --   
that of nonuniqueness of solutions and the associated free parameters --   
and of course allows direct solution of a broader class of models.  

To outline a generalized approach, take the following model:  
\begin{align}
\label{multstateqn}
& \sum_{i = 0}^h \sum_{j=0}^l A_{ij} \hat{x}_{i,t-j} = B u_t\ ,\ A_{00} = I; \\
\label{multinputeqn}
& u_t = R u_{t-1} + w_t\ .  
\end{align}
Here, the $\/A_{ij} \in \mathbbm{R}^{n \times n}\/$ are constant matrix coefficients, as are 
$\/B \in \mathbbm{R}^{n \times m}\/$ and \linebreak $\/R \in \mathbbm{R}^{m \times m}\/$.  
The sequence $\/w_t\/$ is as in the main body of the paper, 
and the vectors $\/u_t\/$ again represent exogenous inputs driven by the $\/w_t\/$.  

The vector $\/\hat{x}_{0,t} \in \mathbbm{R}^n\/$ is a vector $\/x_t\/$ of endogenous variables, and, for $\/0 < j \leq l\/$, 
$\/\hat{x}_{0,t-j} = x_{t-j}\/$ is a ``lagged'' version of $\/x_t\/$.  
For positive $\/i\/$, $\/0 < i \leq h\/$, 
$\/\hat{x}_{i,t}\/$ represents a forecast, formulated at time $\/t\/$, of the value of $\/x_{t+i}\/$.  
For brevity, this appendix focuses on the novel part of the present approach, the formulation and solution of the zero-state response: 
hence, all initial conditions will be assumed to be zero-valued, and forecasts will depend only 
on the random variables $\/u_\tau\/$, for $\/0 \leq \tau \leq t\/$.  

Formally, a {\em forecasting mechanism\/} takes the form of a system of equations of the form 
\begin{align*}
\hat{x}_{i,t-j} = \sum_{\tau = 0}^{t-j} F_{ij,t-\tau} u_\tau,\ \forall\ 0 < i \leq h\ ,\ 0 \leq j \leq l\ , t \geq j\ , 
\numberthis \label{multforecasts} 
\end{align*} 
where $\/F_{ij,t} \in \mathbbm{R}^{n \times m}\/$.  
The upper limits of the convolution sums reflect the fact that the forecast 
$\/\hat{x}_{i,t-j}\/$ must be based solely on information available at time $\/t-j\/$.  
More could be assumed about the form of the convolution terms, 
but additional structure will make itself evident shortly.  

As before, it will be convenient also to consider forecasting mechanisms driven directly by the $\/w_t\/$, 
\begin{align*}
\hat{x}_{i,t-j} = \sum_{\tau = 0}^{t-j} \tilde{F}_{ij,t-\tau} w_\tau,\ \forall i,j\ 0 < i \leq h\ ,\ 0 \leq j \leq l\,\ t \geq j\ .    
\numberthis \label{multtildeforecasts} 
\end{align*} 

If $\/y\/$ is a square-integrable random variable defined on the common probability space of the $\/w_t\/$, 
then $\/E_t(y)\/$ denotes the expected value of $\/y\/$, 
conditioned on $\/w_\tau\/$, for $\/0 \leq \tau \leq t\/$.  
A forecasting mechanism is {\em model-consistent\/} if the resulting full model 
satisfies  the following for all $\/t \geq j-1\/$, $\/0 \leq i \leq h,\  0 \leq j \leq l\/$, 
\begin{align*} 
        & E_{t-j}(x_{t+i-j} - \hat{x}_{i,t-j}) = 0  \\  
\iff     & \hat{x}_{i, t-j} = E_{t-j}(x_{t+i-j})\ .  
\end{align*} 

It is important to note that the expected values are subject not only to the model equations (\ref{multstateqn},\ref{multinputeqn}), 
but also to the forecasting mechanism.  
In other respects, the above model resembles the ``general model'' of \citet{broze_gouriŽroux_szafarz_1995}.  

The $\/w_t\/$ influence the $\/x_t\/$ causally, not only through the expectations terms but also via the exogenous inputs.  
The $\/x_t\/$ must therefore constitute a convolution of the $\/w_t\/$ with an impulse response, of the following 
form:   
\begin{align*}
x_t = \sum_{\tau = 0}^t \tilde{G}_{t - \tau} w_\tau\ .  
\numberthis \label{multgtconv}
\end{align*} 

For equations relating the $\/\tilde{G}_t\/$ and the $\/\tilde{F}_{ij,t}\/$, invoke model-consistency, for $\/0 < i \leq h\/$, and $\/t \geq 0\/$:  
\begin{align*}
\sum_{\tau = 0}^{t} \tilde{F}_{ij,t - \tau} w_\tau  = \hat{x}_{i,t-j}  = E_{t-j}(x_{t+i-j}) & = E_{t-j}\left ( \sum_{\tau = 0}^{t+i-j} \tilde{G}_{t+i-j-\tau} w_\tau  \right ) \\
& = \sum_{\tau = 0}^{t-j} \tilde{G}_{t+i-j-\tau} w_\tau\ .  
\end{align*} 
Taking expected values of the left- and right-hand sides, conditioned on $\/w_0\/$, 
\begin{align*}
\tilde{F}_{ij,t} w_0 = \mathbbm{1}_{t-j} \tilde{G}_{t+i-j} w_0\ ,\ \forall t \geq 0\ .     
\end{align*} 
Here, $\/\mathbbm{1}_t\/$ denotes the unit-step function, which vanishes for negative arguments but otherwise is unity.  
Because $\/w_0\/$ is arbitrary,    
\begin{align*}
\tilde{F}_{ij,t} = \mathbbm{1}_{t-j} \tilde{G}_{t+i-j}\ ,\  0 < i \leq h\ ,\ t \geq 0\ .  
\end{align*} 
Given the above form of the $\/\tilde{F}_{ij,t}\/$, drop the index $\/j\/$ and write instead 
\begin{align*}
\tilde{F}_{i,t} = \mathbbm{1}_{t} \tilde{G}_{t+i}\ ,\  0 < i \leq h\ ,\ t \geq 0\ ,  
\numberthis \label{fig}
\end{align*} 
and 
\begin{align*}
\hat{x}_{i,t-j} 
= \sum_{\tau = 0}^{t-j} \tilde{F}_{i,t-j-\tau} w_\tau\ .  
\numberthis \label{multftconv}
\end{align*} 

After substitution of the above convolutions into~(\ref{multstateqn}), the methods of the main body of the paper yield
\begin{align*}
\sum_{j=0}^{l} A_{0j} \mathbbm{1}_{t-j} \tilde{G}_{t-j} + \sum_{i=1}^h \sum_{j=0}^{l} A_{ij} \tilde{F}_{i,t-j}  = B R^t\ ,\  \forall t \geq 0\ .
\numberthis \label{multfgrecur} 
\end{align*} 
Application of equation~(\ref{fig}) leads to the following difference equation in $\/G_t\/$:  
\begin{align*}
\sum_{i=0}^h \sum_{j=0}^l A_{ij} \mathbbm{1}_{t-j} \tilde{G}_{t+i-j} = B R^t \ ,\ \forall t \geq 0\ .  
\numberthis \label{multgrecur} 
\end{align*} 

In order to ensure unique solutions of difference equations, irrespective of rational expectations, 
we shall assume that the corresponding matrix polynomial 
\begin{align*}
D[z] := \sum_{i=0}^h \sum_{j=0}^l z^{i+l-j} A_{ij} 
\end{align*} 
is regular.  
In that case, the model~(\ref{multstateqn},\ref{multinputeqn}) will also be called {\em regular\/}.  

Equation~(\ref{multgrecur}) has coefficients that vary with time, but only up to $\/t=l\/$, at the latest.  
The theory of time-invariant regular descriptor systems therefore implies that any solutions are of exponential order, 
and consequently possess z-transforms.  
Suppose that~(\ref{multgrecur}) and~(\ref{fig}), and therefore~(\ref{multfgrecur}), can be solved simultaneously.  
Taking z-transforms of both sides of~(\ref{fig}) yields, for any $\/0 < i \leq h\/$, 
\begin{align*}
\tilde{F}_{i}[z] = {\cal Z} \left \{ \mathbbm{1}_{t} \tilde{G}_{t+i} \right \} 
= {\cal Z} \left \{ \tilde{G}_{t+i} \right \} 
=  z^{i} \left [ \tilde{G}[z] - \sum_{k=0}^{i-1} z^{-k} \tilde{G}_k \right ] \ .  
\numberthis \label{fijzgz}
\end{align*} 
Then taking transforms in~(\ref{multgrecur}), and substituting according to~(\ref{fijzgz}), 
\begin{align*}
\sum_{i=0}^h \sum_{j=0}^l A_{ij}  z^{i-j} \left [ \tilde{G}[z] - \sum_{k=0}^{i-1} z^{-k} \tilde{G}_k \right ] = B [I - R z^{-1}]^{-1} \ .  
\numberthis \label{multfgrecurz}
\end{align*} 
By regularity, $\/\tilde{G}[z]\/$ must therefore satisfy 
\begin{align*} 
\tilde{G}[z] = & D[z]^{-1} z^l 
\left[ \sum_{i=1}^{h} \sum_{j=0}^l \sum_{k=0}^{i-1} A_{ij} z^{i-j-k} \tilde{G}_k + B [I - R z^{-1}]^{-1} \right ] \ .  
\numberthis \label{multgz}
\end{align*} 
Determination of a unique solution requires the specification of the initial values $\/\tilde{G}_i\/$, $\/0 \leq i < h\/$.  
These must be related to the initial values of the $\/\tilde{F}_{i,t}\/$, for $\/ 0 < i < h\/$, and to that of $\/A_{h0} \tilde{F}_{h,0}\/$, by the equations 
 $\/\tilde{G}_0 = B - \sum_{i=1}^h A_{i0} \tilde{F}_{i,0}\/$ (by~(\ref{multfgrecur}) and~(\ref{fig})), and $\/\tilde{G}_i = \tilde{F}_{i,0}\/$, $\/0 < i < h\/$ (by~(\ref{fig})).  
 Any two solutions $\/\tilde{G}_t\/$ that satisfied these initial values would have to have this z-transform, and to vanish for negative $\/t\/$.  
 It follows from z-transform inversion that the two solutions would in fact be equal.

\begin{proposition}
\label{multzsprop}
Suppose that the model~(\ref{multstateqn},\ref{multinputeqn}) is regular.  
For any possible values of $\/\tilde{F}_{1,0},\ \tilde{F}_{2,0},\ \ldots,\  \tilde{F}_{h-1,0}\/$ and $\/A_{h0} \tilde{F}_{h,0}\/$, 
define $\/\tilde{G}_0 := B - \sum_{i=1}^h A_{i0} \tilde{F}_{i,0}\/$ and $\/\tilde{G}_i := \tilde{F}_{i,0}\/$, for all $\/0 < i < h\/$.  
Let 
\begin{align*}
\tilde{N}[z] & := \left [ \sum_{i=1}^h \sum_{j=0}^l \sum_{k=0}^{i-1} A_{ij} z^{i+l-j-k} \tilde{G}_k + z^{l} B [I - R z^{-1}]^{-1}  \right ]\ .
\end{align*} 
Then there exists a solution of~(\ref{fig},\ref{multfgrecur}), 
consistent with the above initial values of the $\/\tilde{F}_{i,0}\/$, $\/0 < i < h\/$, and $\/A_{h,0} \tilde{F}_{h,0}\/$, 
and such that the $\/\tilde{F}_{i,t}\/$ and $\/\tilde{G}_t\/$ vanish for negative $\/t\/$,  
if and only if the following rational matrix is proper:  
\begin{align*}
\tilde{F}_h[z] = z^h \left [ D[z]^{-1} \tilde{N}[z] - \sum_{k=0}^{h -1} z^{-k} \tilde{G}_k \right ] \ .  
\end{align*}   
In that case, the unique such solution is given by 
\begin{align*}
\tilde{F}_{i,t}  & = {\mathcal Z}^{-1}\left \{ z^i \left [ D[z]^{-1} \tilde{N}[z] - \sum_{k=0}^{i -1} z^{ -k} \tilde{G}_k \right  ] \right \} \ , \ 0 < i \leq h\ , \\
\tilde{G}_t & = {\mathcal Z}^{-1} \left \{ D[z]^{-1} \tilde{N}[z] \right \} \ .   
\end{align*} 

A model-consistent forecasting mechanism~(\ref{multtildeforecasts}) exists if and only if the above condition is satisfied.  
Any such forecasting mechanism must have $\/\tilde{F}_{ij,t} = \tilde{F}_{i,t-j}\/$, for all $\/0 < i \leq h\/$, $\/0 \leq j \leq l\/$.  

If all initial conditions are zero-valued, and $\/ \hat{x}_{i,t-j} = \sum_{\tau = 0}^{t-j} \tilde{F}_{i,t-j-\tau} w_\tau\/$, for all $\/t \geq j\/$, and for all $\/ i,j,\ 0 < i \leq h,\ 0 \leq j \leq l\/$, 
then the model~(\ref{multstateqn},\ref{multinputeqn}) satisfies 
$\/x_t = \sum_{\tau = 0}^t \tilde{G}_{t-\tau} w_\tau\ ,\ \forall t \geq 0\/$.  
For any $\/i,j\/$, $\/0 < i \leq h\/$, $\/0 \leq j \leq l\/$, and any $\/t \geq j - i\/$, 
the forecast errors are given by:  
\begin{align*}
& x_{t+i-j} - \hat{x}_{i,t-j} = \sum_{\substack{\tau = \\ t-j+1}}^{t+i-j} \tilde{G}_{t+i-j-\tau} w_\tau\ .  
\numberthis \label{multerrformula}
\end{align*} 
\end{proposition}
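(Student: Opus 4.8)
The plan is to follow the proof of Proposition~\ref{tildefgsoln} and its corollaries, carrying the extra index bookkeeping forced by the leads $i$ and lags $j$. Almost all of the substantive z-transform work is already done in the passage preceding the statement: equations~(\ref{fig}), (\ref{multfgrecur}), (\ref{multgrecur}), (\ref{fijzgz}), (\ref{multfgrecurz}), (\ref{multgz}) record the necessary relations, and $D[z]^{-1}\tilde N[z]$ is exactly the right-hand side of~(\ref{multgz}). What remains is to upgrade the chain of implications to a chain of equivalences, to keep track of where the free parameters reside, and to verify a small number of boundary conditions.

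First I would handle necessity and uniqueness of the $(\tilde F,\tilde G)$-solution together. Suppose a solution of~(\ref{fig},\ref{multfgrecur}) consistent with the prescribed values of $\tilde F_{1,0},\dots,\tilde F_{h-1,0}$ and $A_{h0}\tilde F_{h,0}$ exists, with all $\tilde F_{i,t}$ and $\tilde G_t$ vanishing for negative $t$. Since~(\ref{multgrecur}) is a regular descriptor recursion whose coefficients are time-invariant for $t>l$, any solution is of exponential order and hence has a proper unilateral z-transform; moreover evaluating~(\ref{multfgrecur}) at $t=0$ together with~(\ref{fig}) shows that the initial values $\tilde G_0,\dots,\tilde G_{h-1}$ of the solution coincide with $B-\sum_{i=1}^{h}A_{i0}\tilde F_{i,0}$ and $\tilde F_{i,0}$ ($0<i<h$), i.e.\ with the quantities defined in the statement, which are fixed by the prescribed data. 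Then~(\ref{multgz}) forces $\tilde G[z]=D[z]^{-1}\tilde N[z]$, and~(\ref{fijzgz}) at $i=h$ forces $\tilde F_h[z]=z^{h}[D[z]^{-1}\tilde N[z]-\sum_{k=0}^{h-1}z^{-k}\tilde G_k]$; this matrix is proper because $\tilde F_{h,t}$ vanishes for $t<0$. Uniqueness then follows by z-transform inversion, the transforms $\tilde G[z]$ and the $\tilde F_i[z]$ being pinned down by the data through~(\ref{fijzgz}).

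For sufficiency, suppose $\tilde F_h[z]$ as defined is proper and put $\tilde G[z]:=D[z]^{-1}\tilde N[z]$. The key point is that properness cascades downward through the leads. Rearranging~(\ref{fijzgz}) at $i=h$ gives $\tilde G[z]=z^{-h}\tilde F_h[z]+\sum_{k=0}^{h-1}z^{-k}\tilde G_k$; since $h\ge1$, the term $z^{-h}\tilde F_h[z]$ is strictly proper (indeed $O(z^{-h})$), so $\tilde G[z]$ is proper and the coefficients of $z^{0},\dots,z^{-(h-1)}$ in its Laurent expansion at infinity are precisely the prescribed $\tilde G_0,\dots,\tilde G_{h-1}$; the analogous rearrangement shows that each $\tilde F_i[z]:=z^{i}[\tilde G[z]-\sum_{k=0}^{i-1}z^{-k}\tilde G_k]$, $0<i<h$, equals $z^{i-h}\tilde F_h[z]+\sum_{k=i}^{h-1}z^{i-k}\tilde G_k$ and is proper. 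Hence all of these rational matrices are genuine unilateral z-transforms; let $\tilde F_{i,t},\tilde G_t$ be their inverses (vanishing for $t<0$). By construction~(\ref{fijzgz}) holds for every $0<i\le h$, which transforms back to~(\ref{fig}); and $D[z]\tilde G[z]=\tilde N[z]$, on dividing by $z^{l}$, recognizing~(\ref{multfgrecurz}), and using~(\ref{fijzgz}) in reverse, transforms back to~(\ref{multgrecur}), hence to~(\ref{multfgrecur}). Evaluating the recovered~(\ref{multgrecur}) at $t=0$ gives $\sum_{i=0}^{h}A_{i0}\tilde G_i=B$, which combined with the definition of $\tilde G_0$ yields $A_{h0}\tilde G_h=A_{h0}\tilde F_{h,0}$; since~(\ref{fig}) gives $\tilde F_{h,0}=\tilde G_h$, the one free datum $A_{h0}\tilde F_{h,0}$ is reproduced. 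This settles the existence half of the first claim, and uniqueness was already noted. It then remains to deduce the statements about the forecasting mechanism, the model solution, and the errors. Extend the constructed kernels to a mechanism~(\ref{multtildeforecasts}) by $\tilde F_{ij,t}:=\tilde F_{i,t-j}$; the reasoning before the statement shows that this is the only possible form of a model-consistent mechanism, which also gives the necessity direction for existence of such a mechanism via the first claim. Setting $x_t:=\sum_{\tau=0}^{t}\tilde G_{t-\tau}w_\tau$, substituting $x_t$ and the forecasts into~(\ref{multstateqn}) and collecting the coefficient of each $w_\tau$, the coefficient reduces by~(\ref{multgrecur}) (and its vanishing for negative argument) to $BR^{t-\tau}$, so the left side equals $B\sum_{\tau=0}^{t}R^{t-\tau}w_\tau=Bu_t$ and the model equations hold. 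Finally, for $0<i\le h$ and $0\le j\le l$, independence and zero mean of the $w_\tau$ give $E_{t-j}(x_{t+i-j})=\sum_{\tau=0}^{t-j}\tilde G_{t+i-j-\tau}w_\tau$, while~(\ref{fig}) gives $\hat x_{i,t-j}=\sum_{\tau=0}^{t-j}\tilde F_{i,t-j-\tau}w_\tau=\sum_{\tau=0}^{t-j}\tilde G_{t+i-j-\tau}w_\tau$; these agree, establishing model-consistency, and subtracting the convolution sum for $\hat x_{i,t-j}$ from that for $x_{t+i-j}$ leaves exactly the sum in~(\ref{multerrformula}).

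The step I expect to be the main obstacle is the cascade argument in the sufficiency part: keeping precise track of which powers of $z$ multiply the correction terms $\tilde G_k$, so as to confirm that properness of the single matrix $\tilde F_h[z]$ forces properness of $\tilde G[z]$ and of every $\tilde F_i[z]$, and that the leading Laurent coefficients of $\tilde G[z]$ really are the prescribed initial values — together with the $t=0$ consistency check that the constructed solution reproduces the one genuinely free datum $A_{h0}\tilde F_{h,0}$. Everything else is the multi-index counterpart of Proposition~\ref{tildefgsoln} and Corollaries~\ref{tildezscor}--\ref{zscor}.
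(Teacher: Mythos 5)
Your proposal is correct and follows essentially the same route as the paper's own proof: necessity and uniqueness via the z-transform relations (\ref{multgz}) and (\ref{fijzgz}) derived in the preceding discussion, sufficiency via the observation that $\tilde{G}[z] = \sum_{k=0}^{h-1} z^{-k}\tilde{G}_k + z^{-h}\tilde{F}_h[z]$ so that properness of $\tilde{F}_h[z]$ cascades to $\tilde{G}[z]$ and all the $\tilde{F}_i[z]$, followed by the $t=0$ boundary checks recovering the prescribed data and the convolution-sum subtraction for the error formula. The only cosmetic difference is that you verify the free datum $A_{h0}\tilde{F}_{h,0}$ through (\ref{multgrecur}) at $t=0$ rather than directly through (\ref{multfgrecur}), which is equivalent via (\ref{fig}).
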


\begin{proof} 
If a solution of~(\ref{fig},\ref{multfgrecur}) exists, then, by the above discussion, the unilateral z-transforms of the $\/\tilde{F}_{i,t}\/$ and $\/\tilde{G}_t\/$ must 
have the forms given in the statement of the proposition, and must be proper.  
This establishes the necessary condition for existence.  

For sufficiency, note that 
\begin{align*} 
\tilde{G}[z] & = \sum_{k=0}^{h-1} z^{-k} \tilde{G}_k + z^{-h} \tilde{F}_h[z]\ ; 
\end{align*} 
hence, if $\/\tilde{F}_h[z]\/$ is proper, so is $\/\tilde{G}[z]\/$.  
Both matrices are therefore unilateral z-transforms:  
their respective inverse transforms vanish for negative $\/t\/$.  
Moreover, the first $\/h\/$ values of the inverse transform of $\/\tilde{G}[z]\/$ 
are the respective $\/\tilde{G}_i\/$, $\/0 \leq i < h\/$.  
It follows that all the $\/\tilde{F}_i[z]\/$ are proper.  

By definition, $\/\tilde{G}[z]\/$ satisfies~(\ref{multgz}), and transforming to the time domain 
shows that its inverse transform $\/\tilde{G}_t\/$ satisfies~(\ref{multgrecur}).  
The $\/\tilde{F}_i[z]\/$ by definition satisfy~(\ref{fijzgz}), for $\/0 < i \leq h\/$, and transforming to the time domain 
shows that their respective inverse transforms satisfy~(\ref{fig}).  
It follows that~(\ref{multfgrecur}) is satisfied.  

Setting $\/t=0\/$ in~(\ref{fig}) implies that the initial values of 
the inverse transforms $\/\tilde{F}_i[z]\/$ for $\/0 < i < h\/$ are indeed the assigned values, 
and setting $\/t=0\/$ in~(\ref{multfgrecur}) shows that $\/A_{h0} \tilde{F}_{h,t}\/$ takes 
on its assigned value at $\/t=0\/$.  
This establishes the sufficiency of the condition for the existence of an appropriate solution of ~(\ref{fig},\ref{multfgrecur}).  
Uniqueness follows from the above discussion.  

The respective unique forms of the zero-state response of $\/\hat{x}_{i,t-j}\/$ and $\/x_t\/$ under a model-consistent forecasting mechanism then 
follow, by the previous discussion, and the expression for the errors by straightforward subtraction of convolution sums.  \qed
\end{proof} 

(As before, the forecasting mechanism can easily be expressed in terms of the $\/u_t\/$.)   

The parameters $\/\tilde{F}_{1,0}\/$, $\/\tilde{F}_{2,0}\/$\ \ldots $\/\tilde{F}_{h-1,0}\/$, and $\/A_{h0} \tilde{F}_{h,0}\/$ therefore determine $\/\tilde{G}_0\/$ 
via~(\ref{multstateqn}-\ref{multforecasts}); 
given these values and model equations, 
model-consistency then determines $\/\tilde{F}_{i,t}\/$, for $\/0 < i \leq h\/$, and $\/\tilde{G}_t\/$, for all $\/t > 0\/$. 

The sufficient ``well-posedness'' condition for existence of solutions generalizes as follows:  
\begin{corollary}
Suppose that the model~(\ref{multstateqn},\ref{multinputeqn}) is regular, and that \linebreak $\/D[z]^{-1} z^{h+l-1}\/$ is proper. 
Then for any possible values of $\/\tilde{F}_{1,0},\ \tilde{F}_{2,0},\ \ldots,\  \tilde{F}_{h-1,0}\/$ and $\/A_{h0} \tilde{F}_{h,0}\/$, 
there exists a (unique) model-consistent forecasting mechanism.  
\end{corollary}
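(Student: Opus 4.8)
The plan is to deduce the corollary directly from Proposition~\ref{multzsprop}. By that proposition, once $\tilde{F}_{1,0},\ldots,\tilde{F}_{h-1,0}$ and $A_{h0}\tilde{F}_{h,0}$ are fixed -- which in turn fixes the derived initial values $\tilde{G}_0 = B - \sum_{i=1}^{h} A_{i0}\tilde{F}_{i,0}$ and $\tilde{G}_i = \tilde{F}_{i,0}$, $0<i<h$ -- a model-consistent forecasting mechanism exists, and is then unique, exactly when the rational matrix $\tilde{F}_h[z] = z^{h}\bigl[D[z]^{-1}\tilde{N}[z] - \sum_{k=0}^{h-1}z^{-k}\tilde{G}_k\bigr]$ is proper. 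So the entire task is to show that the hypothesis that $D[z]^{-1}z^{h+l-1}$ is proper forces $\tilde{F}_h[z]$ to be proper, whatever admissible values the free parameters take.

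First I would put $\tilde{F}_h[z]$ in the form $\tilde{F}_h[z] = z^{h} D[z]^{-1}M[z]$, where $M[z] := \tilde{N}[z] - D[z]\sum_{k=0}^{h-1}z^{-k}\tilde{G}_k$; this is immediate since $\tilde{G}[z] = D[z]^{-1}\tilde{N}[z]$. Expanding $M[z]$ from the definitions of $\tilde{N}[z]$ and of $D[z] = \sum_{i=0}^{h}\sum_{j=0}^{l} z^{i+l-j}A_{ij}$, the terms $A_{ij}z^{i+l-j-k}\tilde{G}_k$ common to $\tilde{N}[z]$ and to $D[z]\sum_{k=0}^{h-1}z^{-k}\tilde{G}_k$ cancel, and what remains is a polynomial matrix of degree at most $l$; substituting the relations defining $\tilde{G}_0,\ldots,\tilde{G}_{h-1}$ into its coefficient of $z^{l}$, that coefficient telescopes to exactly $A_{h0}\tilde{F}_{h,0}$.

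Next I would read the hypothesis as: the Laurent expansion of $D[z]^{-1}$ about $z=\infty$ begins no earlier than $z^{-(h+l-1)}$, say $D[z]^{-1} = C_{h+l-1}z^{-(h+l-1)} + C_{h+l}z^{-(h+l)} + \cdots$. Then $z^{h}D[z]^{-1}M[z]$ has no terms of degree $\geq 2$, and its degree-$1$ coefficient is $C_{h+l-1}$ times the coefficient of $z^{l}$ in $M[z]$, i.e. $C_{h+l-1}A_{h0}\tilde{F}_{h,0}$. The point is that $C_{h+l-1}A_{h0} = 0$: comparing the coefficients of $z^{1}$ on the two sides of $D[z]^{-1}D[z] = I$ gives $C_{h+l-1}D_{h+l} = 0$, and $D_{h+l} = A_{h0}$ is the leading coefficient of $D[z]$. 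Hence the degree-$1$ term of $\tilde{F}_h[z]$ vanishes for every value of the free parameter $A_{h0}\tilde{F}_{h,0}$, so $\tilde{F}_h[z]$ is proper, and Proposition~\ref{multzsprop} delivers the unique model-consistent forecasting mechanism.

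The degree bookkeeping for $M[z]$ and the identification of its leading coefficient are routine, if a little tedious; the one substantive step -- and the thing the corollary turns on -- is the structural identity $C_{h+l-1}A_{h0} = 0$, that the top Laurent coefficient of $D[z]^{-1}$ annihilates the leading coefficient of $D[z]$ on the right, which is precisely what neutralizes the otherwise free parameter $A_{h0}\tilde{F}_{h,0}$. (This generalizes the mechanism already at work in the $h=1$, $l=1$ case, where the hypothesis reduces to the well-posedness condition that $[z^{2}\hat{A}-zI+A]^{-1}$ be strictly proper.)
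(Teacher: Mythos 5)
Your proposal is correct, and it follows the same overall route as the paper's proof: invoke Proposition~\ref{multzsprop}, write $\tilde{F}_h[z] = z^h D[z]^{-1}\bigl[\tilde{N}[z] - D[z]\sum_{k=0}^{h-1} z^{-k}\tilde{G}_k\bigr]$, and show that the bracketed remainder has low enough order at infinity for the hypothesis on $D[z]^{-1}z^{h+l-1}$ to force properness. The difference lies at exactly the point you flag as the substantive step. The paper's displayed rearrangement of $\tilde{N}[z]$ uses the factor $(1-\delta_j\delta_{k-i})$ to strip the $j=0$, $k=i$ terms and simultaneously absorbs $z^l B$, which tacitly uses $\sum_{i=0}^{h-1}A_{i0}\tilde{G}_i = B$; but by the definition of $\tilde{G}_0$ that sum equals $B - A_{h0}\tilde{F}_{h,0}$, so the remainder actually carries an extra term $z^l A_{h0}\tilde{F}_{h,0}$ that the paper's final bracket omits. (One can confirm this against the main body with $h=l=1$: the term is precisely the $z^2\hat{A}\tilde{F}_0$ appearing in $\tilde{F}[z]$, which Proposition~\ref{structure} has to dispose of via Lemma~\ref{propercondn}.) Your proof keeps that term, identifies it as the coefficient of $z^l$ in $M[z]$, and kills the resulting degree-one contribution to $\tilde{F}_h[z]$ with the identity $C_{h+l-1}A_{h0}=0$ extracted from $D[z]^{-1}D[z]=I$ --- equivalently, $D[z]^{-1}z^{h+l}A_{h0} = I - D[z]^{-1}(\text{lower-degree part of }D[z])$ is proper under the hypothesis. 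That is exactly the missing step, so your argument is in fact more complete than the printed one. The only blemish is terminological: $M[z]$ is not a polynomial matrix (it contains the strictly proper tail $z^{l}BRz^{-1}[I-Rz^{-1}]^{-1}$), but your bookkeeping of its order at infinity and of its $z^l$ coefficient is what the argument uses, and that is correct.
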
 
\begin{proof}
In accordance with Proposition~\ref{multzsprop}, define $\/\tilde{G}_0 := B - \sum_{i=1}^h A_{i0} \tilde{F}_{i,0}\/$ and $\/\tilde{G}_i := \tilde{F}_{i,0}\/$, for all $\/0 < i < h\/$. 
Note that
\begin{align*} 
\tilde{N}[z] := & \left [ \sum_{i=1}^h \sum_{j=0}^l \sum_{k=0}^{i-1} A_{ij}z^{i+l-j-k} \tilde{G}_k   + z^{l} B [I - R z^{-1}]^{-1}  \right ] \\ 
                  =\ & \left [ \sum_{i=0}^h \sum_{j=0}^l \sum_{k=0}^{h-1} A_{ij}z^{i+l-j-k} \tilde{G}_k \right . \\
                      & \left .  - \sum_{i=0}^{h} \sum_{j=0}^l \sum_{k=i}^{h-1} A_{ij}z^{i+l-j-k} \tilde{G}_k + z^{l} B [I - R z^{-1}]^{-1}   \right ] \\ 
                  =\ & D[z] \sum_{k=0}^{h-1} z^{-k} \tilde{G}_k \\ 
                      & - z^l \left [\sum_{i=0}^{h} \sum_{j=0}^l \sum_{k=i}^{h-1} A_{ij}z^{i-j-k} \tilde{G}_k (1 {-} \delta_j \delta_{k-i}) {-} B R z^{-1} [I {-} R z^{-1}]^{-1}  \right ]
\end{align*} 
where $\/\delta_t\/$ is the Kronecker delta function, whose value is unity when $\/t=0\/$, but otherwise vanishes.  Consequently, 
\begin{align*}
\tilde{F}_h[z] =\ & D[z]^{-1} z^{h+l} \times \\
                         &  \left [ B R z^{-1} [I - R z^{-1}]^{-1} - \sum_{i=0}^{h} \sum_{j=0}^l \sum_{k=i}^{h-1} A_{ij}z^{i-j-k} \tilde{G}_k (1 - \delta_j \delta_{k-i})  \right ] \ .  
\end{align*} 
The term in square brackets is strictly proper, so, if $\/D[z]^{-1} z^{h+l-1}\/$ is proper, then $\/\tilde{F}_h[z]\/$ must be proper.  \qed
\end{proof}

\vfill
\pagebreak

\section{Proofs for section~\ref{wellposed}}
\label{wellposedproofs}

The following simple lemma lists some of the implications of 
well-posedness.  

\begin{lemma}
\label{propercondn}
Suppose that $\/\hat{A}\/$ is nonzero and $\/[z^2 \hat{A} - zI + A]\/$ is regular.  
Then the following are equivalent:  
\begin{align*}
    & [z^2 \hat{A} - z I + A]^{-1} \ \text{is strictly proper} \\
\iff & [z^2 \hat{A} - z I + A]^{-1} z I \ \text{is proper} \\
\iff &  [z^2 \hat{A} - z I + A]^{-1} [z I - A] \ \text{is proper} \\
\iff & [z^2 \hat{A} - zI + A]^{-1} z^2 \hat{A} \ \text{is proper} \\
\iff & [z^2 \hat{A} - zI + A]^{-1} z \hat{A} \ \text{is strictly proper}\\
\iff & [z \hat{A} -  I]^{-1}\ \text{is proper} \\
\iff & z \hat{A} [z \hat{A} -  I]^{-1} \text{ is proper.}  
\end{align*} 
\end{lemma}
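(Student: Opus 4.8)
The plan is to prove that all seven statements are equivalent to the first one, $[z^2\hat A - zI + A]^{-1}$ strictly proper, by a small web of elementary rational-matrix manipulations. I would rely only on the following bookkeeping facts: a rational matrix $R[z]$ is proper iff $\lim_{z\to\infty}R[z]$ is finite and strictly proper iff that limit is $0$; a product of proper matrices is proper, and strictly proper times proper is strictly proper; $z^{-1}R$ is strictly proper whenever $R$ is proper, and $zR$ is proper whenever $R$ is strictly proper; and if a rational matrix tends to an invertible constant at infinity then so does its inverse, which is therefore proper. Write $M[z] := z^2\hat A - zI + A$ and $N[z] := z\hat A - I$; both are regular, since $\det N[z]$ does not vanish at $z=0$. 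I number the seven displayed conditions $(1)$--$(7)$ in the order written.

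The conditions built from $M[z]^{-1}$ are handled together. Left-multiplying $M[z] = z^2\hat A - (zI - A)$ by $M[z]^{-1}$ gives $M[z]^{-1}z^2\hat A = I + M[z]^{-1}(zI - A)$, so $(3)$ and $(4)$ differ by the constant $I$ and are equivalent, while dividing by $z$ turns $(4)$ into $(5)$ and back; and $M[z]^{-1}zI = z\,M[z]^{-1}$ yields $(1)\iff(2)$ at once. To tie this cluster to $(1)$, divide the same identity by $z$ to obtain the key relation
\[
M[z]^{-1}z\hat A = M[z]^{-1}\big(I - z^{-1}A\big) + z^{-1}I .
\]
Since $(I - z^{-1}A)^{-1} = z(zI - A)^{-1}$ is proper, $M[z]^{-1}$ is strictly proper iff $M[z]^{-1}(I - z^{-1}A)$ is, and by the relation (the two sides differ by the strictly proper $z^{-1}I$) this holds iff $M[z]^{-1}z\hat A$ is strictly proper; hence $(1)\iff(5)$, and $(1)$--$(5)$ are all equivalent.

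The substantive step is the bridge to the last two conditions, which I would get from the factorization
\[
z^2\hat A - zI + A = (z\hat A - I)\big(zI + (z\hat A - I)^{-1}A\big),
\]
that is $M[z]^{-1} = \big(zI + N[z]^{-1}A\big)^{-1}N[z]^{-1}$, together with its rearrangement $N[z]^{-1} = z\,M[z]^{-1}\big(I - A M[z]^{-1}\big)^{-1}$ (valid because $I - A M[z]^{-1} = (M[z] - A)M[z]^{-1}$ and $M[z] - A = zN[z]$). If $N[z]^{-1}$ is proper, then $z^{-1}N[z]^{-1}A \to 0$, so $\big(zI + N[z]^{-1}A\big)^{-1} = z^{-1}\big(I + z^{-1}N[z]^{-1}A\big)^{-1}$ is strictly proper and $M[z]^{-1}$ is a strictly-proper-times-proper product: $(6)\Rightarrow(1)$. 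Conversely, if $M[z]^{-1}$ is strictly proper, then $A M[z]^{-1} \to 0$, so $\big(I - A M[z]^{-1}\big)^{-1}$ is proper and $N[z]^{-1} = z\,M[z]^{-1}\big(I - A M[z]^{-1}\big)^{-1}$ is proper: $(1)\Rightarrow(6)$. Finally $z\hat A(z\hat A - I)^{-1} = I + (z\hat A - I)^{-1}$ shows $(6)\iff(7)$, completing the chain.

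The only real difficulty is spotting the two working identities — the relation $M[z]^{-1}z\hat A = M[z]^{-1}(I - z^{-1}A) + z^{-1}I$, which feeds the $\hat A$-weighted conditions $(3)$--$(5)$ back to $(1)$, and the factorization $M[z] = (z\hat A - I)\big(zI + (z\hat A - I)^{-1}A\big)$, which exhibits $[z\hat A - I]^{-1}$ inside $M[z]^{-1}$. Everything afterwards is routine limit-taking, the one thing to watch being that no properness is spuriously lost when the auxiliary matrices $I - z^{-1}A$, $I + z^{-1}N[z]^{-1}A$, and $I - A M[z]^{-1}$ are inverted — each tends to $I$ at infinity, which is exactly why the invertible-limit fact is recorded at the outset.
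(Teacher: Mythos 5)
Your proof is correct and follows essentially the same route as the paper: the equivalences among conditions $(1)$--$(5)$ and between $(6)$ and $(7)$ are handled by the same elementary shift-and-constant-offset manipulations the paper calls ``straightforward,'' and your factorization $z^2\hat A - zI + A = (z\hat A - I)\bigl(zI + (z\hat A - I)^{-1}A\bigr)$, together with its rearrangement, is precisely the paper's observation that $[z\hat A - I]^{-1}$ and $[z\hat A - I + Az^{-1}]^{-1} = z[z^2\hat A - zI + A]^{-1}$ are obtained from one another by feedback through the strictly proper $Az^{-1}$, written out algebraically rather than invoked as a realization fact. The only difference is explicitness: your version spells out the limit checks (in particular that $I - z^{-1}A$, $I + z^{-1}N[z]^{-1}A$, and $I - AM[z]^{-1}$ all tend to the invertible $I$) that the paper leaves implicit.
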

\begin{proof} 
The first four equivalences and the final one are straightforward.  
For the fifth, note that 
$\/[z \hat{A} - I]^{-1}\/$ can be realized from $\/[z \hat{A} - I + A z^{-1}]^{-1}\/$, 
and vice versa, 
by feedback through $\/A z^{-1}\/$:   
that implies that one is proper if and only if the other is.  \qed
\end{proof}

As claimed, well-posedness ensures the existence of model-consistent forecasting mechanisms:  

\begin{proposition}
\label{structure} 
Suppose that 
the model~(\ref{stateqn},\ref{inputeqn}) 
is regular and well-posed, 
and that the initial conditions are weakly consistent.  
Then there exists a (unique) model-consistent forecasting mechanism for~(\ref{stateqn},\ref{inputeqn}) for any 
given value of $\/\hat{A}F_0\/$.  
\end{proposition}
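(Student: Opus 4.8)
The plan is to reduce everything to Theorem~\ref{exun}, whose two hypotheses are exactly (i) consistency of the initial conditions and (ii) properness of $F[z]$; once both are secured uniformly in $\hat{A}F_0$, that theorem delivers precisely the asserted existence and uniqueness. The workhorse throughout will be Lemma~\ref{propercondn}, which under well-posedness supplies the properness of $[z^2\hat{A}-zI+A]^{-1}$ itself (in fact strict properness), of $[z^2\hat{A}-zI+A]^{-1}zI$ and of $[z^2\hat{A}-zI+A]^{-1}z^2\hat{A}$, and the strict properness of $[z^2\hat{A}-zI+A]^{-1}z\hat{A}$.

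First I would verify properness of $F[z]=[z^2\hat{A}-zI+A]^{-1}\bigl[[zI-A](\hat{A}F_0+B)[zI-R]-z^2B\bigr]$ for every value of $\hat{A}F_0$. Expanding the bracket, the two $z^2B$ contributions cancel, leaving a polynomial matrix with leading term $z^2\hat{A}F_0=(z^2\hat{A})F_0$, a term linear in $z$ with constant coefficient, and a constant term. Premultiplying by $[z^2\hat{A}-zI+A]^{-1}$ and distributing, the leading piece is $([z^2\hat{A}-zI+A]^{-1}z^2\hat{A})F_0$, the linear piece factors through $[z^2\hat{A}-zI+A]^{-1}zI$, and the constant piece is handled by strict properness of $[z^2\hat{A}-zI+A]^{-1}$; by Lemma~\ref{propercondn} each summand is proper, hence so is $F[z]$. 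This step uses nothing about the initial conditions.

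Next I would show that weak consistency together with well-posedness imply consistency of the initial conditions, i.e. strict properness of $\overline{X}[z]-\hat{x}_{1,-1}$. Substituting the formula for $\overline{X}[z]$ and using $(z^2\hat{A}-zI+A)\hat{x}_{1,-1}$ to absorb the $z^2\hat{A}\hat{x}_{1,-1}$ term, one obtains $\overline{X}[z]-\hat{x}_{1,-1}=[z^2\hat{A}-zI+A]^{-1}N[z]$ with $N[z]=z\bigl(\hat{x}_{1,-1}-Ax_{-1}-BR[I-Rz^{-1}]^{-1}u_{-1}\bigr)-A\hat{x}_{1,-1}$. The crucial move is to peel off the value at infinity of $BR[I-Rz^{-1}]^{-1}u_{-1}$, namely $BRu_{-1}$, which rewrites $N[z]$ as $z(\hat{x}_{1,-1}-Ax_{-1}-BRu_{-1})$ plus a proper remainder (a constant together with $-BR^2[I-Rz^{-1}]^{-1}u_{-1}$). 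Weak consistency provides a vector $v$ with $\hat{x}_{1,-1}-Ax_{-1}-BRu_{-1}=\hat{A}v$, so the first summand is $([z^2\hat{A}-zI+A]^{-1}z\hat{A})v$, strictly proper by Lemma~\ref{propercondn}; and $[z^2\hat{A}-zI+A]^{-1}$ times the proper remainder is strictly proper because the inverse is itself strictly proper. Hence $\overline{X}[z]-\hat{x}_{1,-1}$ is strictly proper.

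With both hypotheses of Theorem~\ref{exun} in hand for arbitrary $\hat{A}F_0$, that theorem yields a unique model-consistent forecasting mechanism, completing the argument. The main obstacle, and the one place that is not mechanical bookkeeping with Lemma~\ref{propercondn}, is the treatment of the $[I-Rz^{-1}]^{-1}$ factor inside $\overline{X}[z]$: one must isolate its constant (large-$z$) part correctly so that weak consistency — a statement about $\hat{x}_{1,-1}-Ax_{-1}-BRu_{-1}$, not about the full rational expression — can be brought to bear, and then observe that the leftover $-BR^2[I-Rz^{-1}]^{-1}u_{-1}$ is merely proper, so that multiplication by the strictly proper inverse leaves it strictly proper.
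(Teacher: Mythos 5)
Your argument is correct and follows essentially the same route as the paper's own proof in Appendix~\ref{wellposedproofs}: the same reduction to Theorem~\ref{exun}, the same decomposition of $F[z]$ isolating the $z^2\hat{A}$ leading term, and the same rewriting of $\overline{X}[z]-\hat{x}_{1,-1}$ that extracts $z(\hat{x}_{1,-1}-Ax_{-1}-BRu_{-1})$ plus a proper remainder so that weak consistency and Lemma~\ref{propercondn} can be applied. The only difference is the order of the two verifications, which is immaterial.
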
 
\begin{proof}
Consider that $\/\overline{X}[z]\/$ is 
\begin{align*} 
&  [z^2 \hat{A} - zI + A]^{-1} [z^2 \hat{A} \hat{x}_{1,-1} - z A x_{-1} - z B R [I - R z^{-1}]^{-1} u_{-1}] \\
 =\  & \hat{x}_{1,-1} + [z^2 \hat{A} - zI + A]^{-1} [[zI - A] \hat{x}_{1,-1} - z A x_{-1} - z B R [I - R z^{-1}]^{-1} u_{-1}] \\
 =\  & \hat{x}_{1,-1}\\ 
    & + [z^2 \hat{A} - zI + A]^{-1} [ z (\hat{x}_{1,-1} - A x_{-1} - BR u_{-1}) \\ 
    & \ \ \ \ \ \ \ \ \ \ \ \ \ \ \ \ \ \ \ \ \ \ \ \ \ \ \ \  - A \hat{x}_{1,-1} - BR^2 [I - R z^{-1}]^{-1} u_{-1}] \ .
\end{align*} 
The weak consistency of the initial conditions implies that the term in parentheses lies within the image of $\/\hat{A}\/$;  
if, in addition, $\/[z^2 \hat{A} - z I + A]^{-1}\/$ is strictly proper, then by Lemma~\ref{propercondn}, $\/\overline{X}[z] - \hat{x}_{1,-1}\/$ also  
is strictly proper. 

Now write 
\begin{align*}
{F}[z] & = [z^2 \hat{A} - z I + A]^{-1} [[zI - A](\hat{A}\tilde{F}_0 + B)[zI - R] - z^2 B]  \\
         & = [z^2 \hat{A} - z I + A]^{-1} \left [ z^2 \hat{A} \tilde{F}_0 \right . \\
         & \ \ \ \ \ \ \ \ \ \ \ \ \ \ \ \ \ \ \ \ \ \ \ \ \ \ \ \  \left . - \left ([zI - A] (\hat{A} F_0 + B) R + z A (\hat{A} F_0 + B) \right ) \right ] \/ .  
\end{align*} 
If $\/[z^2 \hat{A} - z I + A]^{-1}\/$ is strictly proper,
then by Lemma~\ref{propercondn}, this is a sum of proper rational matrices, so $\/F[z]\/$ is proper.  
By Theorem~\ref{exun} therefore, there exists a unique  
model-consistent forecasting mechanism, regardless of the value of 
$\/\hat{A} \tilde{F}_0\/$. \qed
\end{proof}

This simple sufficient condition for the existence of model-consistent forecasting mechanisms 
also ensures the existence of realizations that incorporate feedback.\footnote{In the (usual) case where $\/A\/$ is nonzero.}    
Consider that the derivation of of $\/\tilde{F}[z]\/$ and $\/\tilde{G}[z]\/$ in 
section~\ref{zsresp} could have begun with the following system of equations, 
equivalent to~(\ref{gfrecur},\ref{onestep}):  
\begin{align*} 
\tilde{F}_t & = A \tilde{G}_t + \hat{A} \tilde{F}_{t+1} + B R^{t+1}\ , \ \forall t \geq 0\ , \\
\tilde{G}_{t} & = A \tilde{G}_{t-1}+ \hat{A} \tilde{F}_{t}  + B R^{t}\ , \ \forall t \geq 0\ .  
\end{align*} 

This yields the transformed equation 
\begin{align*} 
\setlength\arraycolsep{0pt}
\begin{bmatrix} 
I & - [I {-} z \hat{A}]^{-1} A \\
- [I {-} A z^{-1}]^{-1}\hat{A}  &  I 
\end{bmatrix} 
\begin{bmatrix}
\tilde{F}[z] \\ 
\tilde{G}[z] 
\end{bmatrix} 
{=}
\begin{bmatrix}
[I {-} z \hat{A}]^{-1} \left [B R [I {-} R z^{-1}]^{-1} {-} z \hat{A} \tilde{F}_0 \right ] \\
[I {-} A z^{-1}]^{-1} B [I {-} R z^{-1}]^{-1} 
\end{bmatrix}
\numberthis \label{feedback}
\end{align*} 
The left-hand side represents a feedback interconnection, and the right-hand 
side a vector of exogenous signals that serve as inputs to the feedback loop.  
The inverse of the left-hand coefficient exists:  
\begin{align*}
& \begin{bmatrix}
- [z^2 \hat{A} {-} zI {+} A]^{-1} z & 0 \\
0 & - [z^2 \hat{A} {-} zI {+} A]^{-1} z 
\end{bmatrix}
\begin{bmatrix}
I \/{-}\/ Az^{-1} & A \\
\hat{A} & I {-} z\hat{A} 
\end{bmatrix} 
\begin{bmatrix}
I {-} z \hat{A} & 0 \\
0 & I {-} A z^{-1} 
\end{bmatrix} \  
\end{align*}
-- and, by Lemma~\ref{propercondn}, that inverse is proper if $\/[z^2 \hat{A} - z I + A]^{-1}\/$ is strictly proper 
(with the converse holding if $\/A\/$ is nonsingular); 
in that case, the product of the inverse with the right-hand side of the equation is also proper.   
The above equation therefore describes $\/\tilde{F}_t\/$ and $\/\tilde{G}_t\/$ as 
being uniquely and causally derived from each other, within a feedback loop, and from 
signals that are exogenous to that feedback loop.  
By linearity and time-invariance, the zero-state responses of 
$\/\hat{x}_{1,t}\/$ and $\/x_t\/$ inherit this relationship from their 
convolution kernels.  

More explicitly, a feedforward/feedback realization of the zero-state response can be obtained by transforming the first equation of~(\ref{feedback}) to the 
time domain, and convolving with the $\/w_t\/$ sequence:  
\begin{align*}
\hat{x}_{1,t} & = \sum_{\tau = 0}^{t} \varPhi_{t-\tau} [A x_\tau + B R u_\tau] - \sum_{\tau=0}^{t} \Psi_{t-\tau} \hat{A} F_0 w_\tau\ ,\ \forall t \geq 0\ .  
\end{align*} 
Here, $\/\varPhi_t := {\cal Z}^{-1}\{ [I - z \hat{A}]^{-1}\}\/$ and $\/\Psi_t := {\cal Z}^{-1}\{ [I - z \hat{A}]^{-1} z\}\/$.  
By Lemma~\ref{propercondn}, the first of these is the inverse transform of a proper rational matrix (whose product with $\/\hat{A}\/$ is strictly proper), and the second, the inverse transform of a matrix whose product with $\/\hat{A}\/$ is proper.  
Alternatively, $\/\Psi_t\/$ could be defined as the inverse transform of a proper matrix, $\/{\cal Z}^{-1}\{[I - z \hat{A}]^{-1} z \hat{A} \hat{A}^g\}\/$, 
where $\/\hat{A}^g\/$ is a generalized inverse of $\/\hat{A}\/$ (such that $\/\hat{A}\hat{A}^g \hat{A} = \hat{A}\/$).  

By setting $w_t = 0\/$ for all $\/t \geq 0\/$, and transforming to the z-domain, 
it is easy to check that the extended law in the statement of Theorem~\ref{feedbackthm} leads to the same zero-input response as found in section~\ref{ziresp} (provided that the initial conditions are weakly consistent).  
This establishes the theorem.  

The structure of the feedforward/feedback predictor is displayed in Figure~\ref{causaldiag}, where initial conditions are suppressed for simplicity, 
and the convolution kernels $\/\varPhi_t\/$ and $\/\Psi_t\/$ are represented by their z-transforms.  

\begin{figure}[ht] 
\begin{tikzpicture}[auto, node distance= 2cm,>=Latex] \matrix[column sep = .375cm, row sep = .300cm]
{
                     &                            & \node [block](errff){$[I - z \hat{A}]^{-1}\left [\hat{A}{F}_0 [zI - R] - B R  \right ] $}; & \node [sum](sum1){}; & \node [block](newcfb){$[I - z \hat{A}]^{-1}A$}; & & & \\
                     &                            &                                                                     & & & & \\
                     &                            &                                                                     &                                                    & & & & \\
\node [coordinate](d5){}; &         &                                                                     &                                                    & & & & \node [coordinate](d6){}; \\
                    &                             &                                                                     & \node [block](fcgain){$\hat{A}$}; & & & \\
\node (wt){}; & \node [dot](d2){}; & \node [block](exog){$B$}; & \node [sum](sum2){}; & & & \node [dot](d4){}; & \node (xt){}; \\
& & & & & & \\
& & & & \node [block](cfb){$A z^{-1}$}; & & & \\
};
\draw [->] (wt) -- node[pos=0.05,above] {$u_t$} (exog); 
\draw [->] (exog) -- node[pos=0.975,above] {$+$} (sum2);
\draw [->] (sum2) --  node[pos=0.965,above] {$x_t$} (xt); 
\draw [->] (d2) |- (errff); 
\draw [->] (errff) -- node[pos=0.85,above] {$-$} (sum1); 
\draw [->] (newcfb) -- node[pos=0.4,above] {} node[pos=0.8,above] {$+$} (sum1); 
\draw [->] (d4) |- (cfb); 
\draw [->] (d4) |- (newcfb); 
\draw [->] (sum1) -- node[pos=0.40] {$\hat{x}_{1,t}$} (fcgain); 
\draw [->] (fcgain) -- node[pos=0.85,right] {$+$} (sum2);
\draw [->] (d4) |- (cfb); 
\draw [->] (cfb) -| node[pos=0.95,left] {$+$} (sum2); 
\draw [dashed]  (d5) -- (d6); 
\end{tikzpicture}
\caption{Realization of model-consistent forecasts for well-posed model (with $\/\overline{x}_t \equiv 0\/$).}  
\label{causaldiag}
\end{figure}
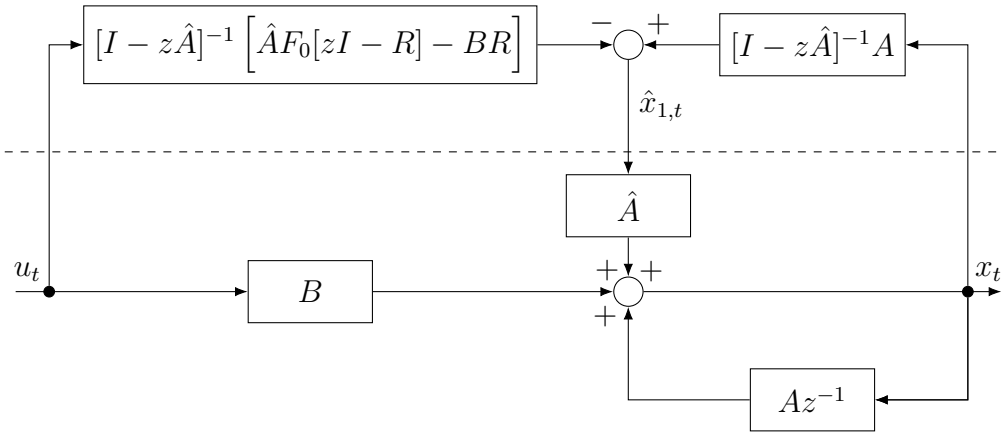 

\vfill
\pagebreak

\section{Mathematical preliminaries} 
\label{mathprelim}

This addendum briefly outlines some mathematical preliminaries relating to z-transforms and to polynomial and rational matrices.  
For more detail, refer for example to~\citet{Chen:linear} or to~\citet{FH:networks}.  

\subsection*{The unilateral z-transform}
\label{ztrans}

In employing frequency-domain methods to solve discrete-time initial-value problems, 
one generally applies the {\em unilateral\/}, or {\em one-sided\/} z-transform:  
\begin{align}
Y[z] := {\cal Z}\{y_t\} := \sum_{t=0}^\infty y_t z^{-t},\ z \in \mathbb{C}\ .  
\end{align}
Here, $\/y_t\/$ may be scalar-, vector-, or matrix-valued.  

A unilateral z-transform can be interpreted as that of the impulse response of a causal system.  
If it is a rational function (or a rational matrix\footnote{See the next section.}) it must therefore be {\em proper}:  
the numerator (of any of its elements) should have a degree no greater than that of the denominator.  

The transform is obviously linear; other fundamental properties are summarized below.  

\subsubsection*{Convergence} 

Suppose that every element $\/(y_t)_{ij}\/$ of the 
matrix $\/y_t\/$ satisfies $\/|(y_t)_{ij}| \leq K \alpha^t\/$, for some positive $\/K, \alpha \in \mathbb{R}\/$.  
Then the z-transform of $\/y_t\/$ converges wherever $\/|z| > \alpha\/$.  

When all of its elements satisfy such inequalities, $\/y_t\/$ is said to be of {\em exponential order\/}.  
Thus, polynomials and exponentials are of exponential order, as are sums, products, and convolutions of functions of exponential order.  

\subsubsection*{Inversion integral}  

The time-domain function $\/y_t\/$ is determined by $\/Y[z]\/$ via the 
following contour integral:  
\begin{align*}
y_t = \frac{1}{2 \pi i} \oint Y[z] z^{t-1} dz \  ,\ \forall t \in \mathbbm{Z}, 
\end{align*}
where the integration is performed in the counterclockwise direction around a closed contour 
within the region of convergence of the z-transform.  

If $\/Y[z]\/$ is a proper rational function (or a proper rational matrix -- see below), 
the inverse transform vanishes for negative $\/t\/$.  

In practice, inversion is often performed by other means than a direct evaluation of the above integral.  

\subsubsection*{Left-shift rule}

The transform, by definition, ignores any nonzero values of $\/y_t\/$ for negative values of $\/t\/$.  
It therefore always yields a transform whose inverse (see below) vanishes for negative values of $\/t\/$.  
This feature is reflected in the standard rule for left shifts of time-domain functions:  
\begin{align}
{\cal Z}\{y_{t+1}\} & := \sum_{t = 0}^\infty y_{t+1} z^{-t} 
                            = z \sum_{t=0}^\infty y_{t+1} z^{-(t+1)} 
                            = z [Y[z] - y_0] 
\end{align} 
The transform of the shifted sequence $\/y_{t+1}\/$ is obtained by simply multiplying the transform 
of the unshifted sequence $\/y_t\/$ by $\/z\/$ -- after annihilating the first element of the sequence, 
so that its left-shifted version vanishes for negative indices.  

More generally, we have by repeated application of the above, 
\begin{align*}
{\cal Z}\{y_{t + \tau}\} & = z^\tau \left [Y[z] - \sum_{k=0}^{\tau - 1} z^{-k} y_k \right ]\ ,\forall \tau \geq 1.  
\end{align*} 

We illustrate the definition and the shift operation by finding 
the unilateral z-transform of the sequence that is $\/R^t\/$ 
for nonnegative $\/t\/$, and zero otherwise.  
Apply a left shift after subtracting $\/R^0 = I\/$, which of course yields the same result as multiplying 
each term of the exponential sequence by $\/R\/$.  
The z-transform $\/R[z]\/$ of the original sequence therefore satisfies:  
\begin{align*}
z[R[z] - I] = R R[z]\ .  
\numberthis \label{zexpcalc} 
\end{align*} 
Solving, we find 
\begin{align*}
R[z] = [zI - R]^{-1} z I = [I - R z^{-1}]^{-1}\ .  
\end{align*} 
(because the matrix polynomial $\/[zI - R]\/$ is regular -- see the next section).  
The sum converges, and the z-transform exists,  if and only if $\/|z|\/$ is greater than the spectral radius 
of the matrix -- that is, the largest modulus of any eigenvalue.  

Applying~(\ref{zexpcalc}), we obtain an equality that is invoked implicitly in our calculations:  
\begin{align*} 
z [[I - R z^{-1}]^{-1} - I] = R [I - R z^{-1}]^{-1} 
\end{align*}

\subsubsection*{Right-shift rule}

The fundamental right-shift rule is as follows.  
\begin{align*}
{\cal Z}\{y_{t-1}\} 
= \sum_{t=0}^\infty y_{t-1}z^{-t}  
= z^{-1} [\sum_{\tau = 0}^\infty y_\tau z^{-\tau}  + z y_{-1}]
= z^{-1} [Y[z] + z y_{-1}]\ .
\end{align*}
By repeated application, we find, more generally, 
\begin{align*}
{\cal Z}\{y_{t - \tau}\} &= z^{-\tau}\left [Y[z] + \sum_{k = 1}^{\tau} z^{k} y_{-k} \right ] \  ,\forall \tau \geq 1.  
\end{align*} 

\subsubsection*{Time-domain convolution} 

\begin{align*}
{\cal Z}\{\sum_{\tau = 0}^t x_{t - \tau} y_\tau \} & = X[z] Y[z] \ .  
\end{align*} 

\subsection*{Polynomial and rational matrices}  
\label{matpoly}

An $\/n \times m\/$ {\em matrix polynomial\/} of degree $\/d\/$ is a polynomial with $\/n \times m\/$ matrix coefficients:  
\begin{align*}
P[z] = z^d A_d + z^{d-1} A_{d-1} + \ldots\ + A_0\  ,\ (A_d \neq 0)\ .  
\end{align*} 
Equivalently, an $\/n \times m\/$ {\em polynomial matrix\/} of degree $\/d\/$ is a matrix whose entries are polynomials of maximum degree $\/d\/$.  

A square matrix polynomial ($\/n \times n\/$) is {\em regular\/} 
if its determinant is not the zero polynomial.  
An eigenvalue is a value $\/\lambda\/$ of $\/z\/$ such that there exist nonzero 
vectors $\/x\/$ and $\/y\/$ for which 
\begin{align*}
P[\lambda] x = 0 = y^\top P[\lambda]\ .  
\end{align*} 
The vectors $\/x\/$ and $\/y\/$ are respectively right and left {\em eigenvectors\/} associated 
with $\/\lambda\/$.  
An $\/n \times n\/$ regular matrix polynomial has $\/nd\/$ eigenvalues, 
of which, in general, some are {\em infinite\/} (lying at the point at infinity on the Riemann sphere).  
If the leading coefficient $\/A_d\/$ is nonsingular, then all $\/nd\/$ eigenvalues are finite.  
if $\/A_d\/$ is singular, then every zero eigenvalue of $\/A_d\/$ gives rise to a zero 
eigenvalue of the {\em reverse\/} polynomial,  
\begin{align*}
z^ d [(z^{-1})^d A_d + (z^{-1})^{d-1}A_{d-1} + \ \dots \ + A_0] = z^d A_0 + z^{d-1} A_1 + \ \ldots \ + A_d 
\end{align*} 
The finite, nonzero eigenvalues of $\/P[z]\/$ are the reciprocals of those of the reverse polynomial; the infinite eigenvalues of $\/P[z]\/$ correspond to the zero eigenvalues of the reverse polynomial -- that is, to the zero eigenvalues of $\/A_d\/$.  
The degree $\/r\/$ of $\/{\text det} P[z]\/$ equals the number of finite eigenvalues of $\/P[z]\/$, counting multiplicities; the number of infinite eigenvalues of $\/P[z]\/$ is $\/nd - r\/$.  

A {\em left (resp., right) matrix fraction description\/} (MFD) is a representation of an $\/n \times m\/$ {\em rational matrix\/} -- a matrix 
of rational functions -- in the form 
$\/
D(z)^{-1} N(z) 
\/$ 
(resp., $\/N[z] D[z]^{-1}\/$), 
where $\/D[z]\/$ is a regular $\/n \times n\/$ (resp., $\/m \times m\/$) polynomial matrix (viewed here as a rational matrix, and hence possessing an inverse) and $\/N[z]\/$ an $\/n \times m\/$ polynomial matrix.  

Another common representation of a rational matrix has the form \linebreak 
$\/D_1[z]^{-1} N[z] D_2[z]^{-1}\/$, where $\/D_1[z]\/$ is regular and $\/n \times n\/$, $\/D_2[z]\/$ is regular and $\/m \times m\/$ and $\/N[z]\/$ is $\/n \times m\/$.  

A matrix of rational functions is {\em proper\/} (respectively, {\em strictly proper\/})  if each of its entries is proper (resp., strictly proper) -- that is, if, for each 
rational-function entry, the degree of the numerator polynomial is no greater than (resp., strictly less than) that of the denominator polynomial.  

If $\/z\/$ is interpreted as a left-shift operator (in accordance with the previous subsection), 
each of the rational-function entries represents a time-domain 
recurrence, with the numerator polynomial acting on an exogenous variable and the denominator on an 
endogenous variable, then properness implies nonanticipation:  the endogenous variable does not depend on future values 
of the exogenous variable.  
The converse also holds.\footnote{Indeed, an alternative definition of strict properness is that the Laurent expansion of the matrix contains no nonnegative powers of $\/z\/$ \citep{FH:networks}:  accordingly, we adopt for convenience the convention that an identically zero-valued matrix is strictly proper.}

Both properness and strict properness of rational matrices are preserved under addition, subtraction and multiplication.  
Moreover, multiplication of a proper matrix by a strictly proper one yields a strictly proper matrix.  

A (``negative'') feedback interconnection of two proper rational matrices $\/G[z]\/$ (in the forward path) and $\/H[z]\/$ (in the feedback channel), 
such that the product $\/G[z] H[z]\/$ is strictly proper, can be represented by the matrix 
\begin{align*} 
M[z] := [I + G[z] H[z]]^{-1} G[z]\ .  
\end{align*} 
The rational matrix $\/[I + G[z]H[z]]^{-1}\/$ is proper.  
It follows that $\/M[z]\/$ is proper; indeed, if $\/G[z]\/$ is strictly proper, then so is $\/M[z]\/$.  

\end{document}